\numberwithin{equation}{section} 
\numberwithin{table}{section} 
\numberwithin{figure}{section} 
\theoremstyle{plain}
\newtheorem{theorem}{Theorem}[section]
\newtheorem{definition}[theorem]{Definition}
\newtheorem{lemma}[theorem]{Lemma}
\newtheorem{corollary}[theorem]{Corollary}
\newtheorem{proposition}[theorem]{Proposition}
\newtheorem{assumption}[theorem]{Assumption}
\newtheorem{remark}[theorem]{Remark}
\theoremstyle{nonumberplain}
\newtheorem{proof}{Proof}
\SetMathAlphabet{\mathcal}{normal}{OMS}{cmsy}{m}{n} 
\SetMathAlphabet{\mathcal}{bold}{OMS}{cmsy}{m}{n} 
\providecommand{\ie}{i.~e.~}
\providecommand{\eg}{e.~g.~}
\providecommand{\cf}{cf.~}
\providecommand{\R}{\mathbb{R}}
\providecommand{\C}{\mathbb{C}}
\renewcommand{\C}{\mathbb{C}}
\providecommand{\T}{\mathbb{T}}
\renewcommand{\T}{\mathbb{T}}
\providecommand{\N}{\mathbb{N}}
\providecommand{\Z}{\mathbb{Z}}
\providecommand{\ii}{\mathrm{i}}
\providecommand{\e}{\mathrm{e}}
\renewcommand{\Re}{\mathrm{Re} \,}
\renewcommand{\Im}{\mathrm{Im} \,}
\providecommand{\Hil}{\mathcal{H}}
\providecommand{\eps}{\varepsilon}
\providecommand{\Cont}{\mathcal{C}}
\providecommand{\ran}{\mathrm{ran} \, }
\providecommand{\supp}{\mathrm{supp} \,}
\providecommand{\trace}{\mathrm{Tr} \,}
\providecommand{\dd}{\mathrm{d}}
\providecommand{\id}{\mathrm{id}}
\providecommand{\order}{\mathcal{O}}
\providecommand{\Fourier}{\mathcal{F}}
\providecommand{\trace}{\mathrm{Tr}}
\providecommand{\abs}[1]{\left \lvert #1 \right \rvert}
\providecommand{\sabs}[1]{\lvert #1 \vert}
\providecommand{\norm}[1]{\left \lVert #1 \right \rVert}
\providecommand{\snorm}[1]{\lVert #1 \rVert}
\providecommand{\bnorm}[1]{\bigl \lVert #1 \bigr \rVert}
\providecommand{\Bnorm}[1]{\Bigl \lVert #1 \Bigr \rVert}
\providecommand{\scpro}[2]{\left \langle #1 , #2 \right \rangle}
\providecommand{\sscpro}[2]{\langle #1 , #2 \rangle}
\providecommand{\bscpro}[2]{\bigl \langle #1 , #2 \bigr \rangle}
\providecommand{\sopro}[2]{\vert #1 \rangle \langle #2 \vert}
\providecommand{\ncint}{\mathrel{{\ooalign{$\int$\cr\kern+.07em\raise.15ex\hbox{$\pmb{\scriptstyle-}$}\cr}}}}           \providecommand{\ncpartial}{\mathrel{{\ooalign{$\partial$\cr\kern+.29em\raise.79ex\hbox{$\pmb{\scriptstyle-}$}\cr}}}}
\renewcommand{\Hil}{\mathfrak{H}}
\providecommand{\Zak}{\mathcal{Z}}
\providecommand{\Msymb}{\mathcal{M}}
\providecommand{\Op}{\mathfrak{Op}}
\providecommand{\Weyl}{\sharp}
\providecommand{\Maxwell}{\mathbf{M}}
\providecommand{\Hper}{\Hil_0}
\providecommand{\Jper}{J_0}
\providecommand{\Mper}{\Maxwell_0}
\providecommand{\HperT}{\mathfrak{h}_0}
\providecommand{\BZ}{\mathbb{B}}
\providecommand{\WS}{\mathbb{W}}
\providecommand{\Jphys}{\mathbf{J}}
\providecommand{\Gphys}{\mathbf{G}}
\providecommand{\Mphys}{\mathbf{M}}
\providecommand{\Pphys}{\mathbf{P}}
\providecommand{\Qphys}{\mathbf{Q}}
\providecommand{\Hoer}[1]{S^{#1}_{\rho}}
\providecommand{\Hoerm}[2]{S^{#1}_{#2}}
\providecommand{\Hoereq}[1]{S^{#1}_{\rho,\mathrm{eq}}}
\providecommand{\Hoermeq}[2]{S^{#1}_{#2,\mathrm{eq}}}
\providecommand{\Hoermper}[2]{S^{#1}_{#2,\mathrm{per}}}
\providecommand{\SemiHoerm}[2]{A S^{#1}_{#2}}
\providecommand{\SemiHoereq}[1]{A S^{#1}_{\rho,\mathrm{eq}}}
\providecommand{\SemiHoermeq}[2]{A S^{#1}_{#2,\mathrm{eq}}}
\providecommand{\SemiHoerper}[1]{A S^{#1}_{\rho,\mathrm{per}}}
\providecommand{\SemiHoermper}[2]{A S^{#1}_{#2,\mathrm{per}}}
\providecommand{\eff}{\mathrm{eff}}
\providecommand{\piref}{\pi_{\mathrm{ref}}}
\providecommand{\blochf}{\varphi}
\providecommand{\blochb}{\mathcal{E}_{\BZ}}
\providecommand{\eff}{\mathrm{eff}}
\providecommand{\specrel}{\sigma_{\mathrm{rel}}}
\providecommand{\Rot}{\mathbf{Rot}}
\providecommand{\Index}{\mathcal{I}}
\providecommand{\Weyl}{\sharp}
\providecommand{\domain}{\mathfrak{D}}
\providecommand{\domainT}{\mathfrak{d}}
\providecommand{\Piref}{\Pi_{\mathrm{ref}}}
\providecommand{\Poynt}{\mathcal{S}}
\providecommand{\freqband}{\omega}
\title{Effective Light Dynamics in \\ Perturbed Photonic Crystals}
\author{Giuseppe De Nittis${}^{\ast}$ \& Max Lein${}^{\star}$}
\begin{document}

\maketitle
\vspace{-9mm}
\begin{center}
	$^{\ast}$ Department Mathematik, Universität Erlangen-Nürnberg \linebreak
	Cauerstrasse 11, D-91058 Erlangen, Germany \linebreak
	{\footnotesize \href{mailto:denittis@math.fau.de}{\texttt{denittis@math.fau.de}}}
	\medskip
	\\
	$^{\star}$ University of Toronto \& Fields Institute, Department of Mathematics \linebreak
	Bahen Centre, 40 St{.} George Street, Toronto, ON M5S 2E4, Canada \linebreak
	{\footnotesize \href{mailto:max.lein@me.com}{\texttt{max.lein@me.com}}}
\end{center}
\begin{abstract}
	In this work, we rigorously derive \emph{effective dynamics} for light from within a limited frequency range propagating in a photonic crystal that is modulated on the macroscopic level; the perturbation parameter $\lambda \ll 1$ quantifies the separation of spatial scales. We do that by rewriting the dynamical Maxwell equations as a Schrödinger-type equation and adapting space-adiabatic perturbation theory. Just like in the case of the Bloch electron, we obtain a simpler, effective Maxwell operator for states from within a relevant almost invariant subspace. A correct physical interpretation for the effective dynamics requires to establish two additional facts about the almost invariant subspace: (1) The source-free condition has to be verified and (2) it has to support real states. The second point also forces one to consider a \emph{multi}band problem even in the simplest possible setting; This turns out to be a major difficulty for the extension of semiclassical methods to the domain of photonic crystals.
\end{abstract}
\noindent{\scriptsize \textbf{Key words:} Maxwell equations, Maxwell operator, Bloch-Floquet theory, pseudodifferential operators, effective dynamics}\\ 
{\scriptsize \textbf{MSC 2010:} 35S05, 35Q60, 35Q61, 81Q10, 81Q15}

\newpage
\tableofcontents

\section{Introduction} 
\label{intro}
Photonic crystals are to the transport of light (electromagnetic waves) what crystalline solids are to the transport of electrons \cite{Joannopoulos_Johnson_Winn_Meade:photonic_crystals:2008}. This analogy extends to the mathematical formulation, because the source-free Maxwell equations 
\begin{align}
	&\partial_t \mathbf{E}(t) = + \eps^{-1} \nabla_x \times \mathbf{H}(t) 
	, 
	&&
	\partial_t \mathbf{H}(t) = - \mu^{-1} \nabla_x \times \mathbf{E}(t)
	, 
	\label{intro:eqn:traditional_Maxwell_equations_dynamics} 
	\\
	&\nabla_x \cdot \eps \mathbf{E}(t) = 0 
	, 
	&&
	\nabla_x \cdot \mu \mathbf{H}(t) = 0 
	, 
	\label{intro:eqn:traditional_Maxwell_equations_no_source}
\end{align}
can alternatively be written as a Schrödinger-type equation 
\begin{align}
	\ii \frac{\partial}{\partial t} \left (
	\begin{matrix}
		\mathbf{E}(t) \\
		\mathbf{H}(t) \\
	\end{matrix}
	\right ) &= \Mphys_{(\eps,\mu)} \left (
	\begin{matrix}
		\mathbf{E}(t) \\
		\mathbf{H}(t) \\
	\end{matrix}
	\right )
	, 
	&&
	\left (
		\begin{matrix}
			\mathbf{E}(0) \\
			\mathbf{H}(0) \\
		\end{matrix}
		\right ) = \left (
	\begin{matrix}
		\mathbf{E}_0 \\
		\mathbf{H}_0 \\
	\end{matrix}
	\right ) 
	\in \Hil_{(\eps,\mu)}
	,
	\label{intro:eqn:Maxwell_Schroedinger_type}
\end{align}
where the electric and magnetic component of the initial electromagnetic field configuration $(\mathbf{E}_0,\mathbf{H}_0)$ need to take values in $\R^6$ and satisfy the source-free condition \eqref{intro:eqn:traditional_Maxwell_equations_no_source}. Here, the tensorial quantities $\eps$ and $\mu$ are the material weights \emph{electric permittivity} and \emph{magnetic permeability} which enter into the definition of the Maxwell operator 
\begin{align}
	\Mphys_{(\eps,\mu)} := \left (
	\begin{matrix}
		0 & + \ii \, \eps^{-1} \, \nabla_x^{\times} \\
		- \ii \, \mu^{-1} \, \nabla_x^{\times} & 0 \\
	\end{matrix}
	\right ) 
	\label{intro:eqn:Maxwell_operator}
\end{align}
($\nabla_x^{\times}$ is the curl of vector fields on $\R^3$) and the Hilbert space 
\begin{align*}
	\Hil_{(\eps,\mu)} := L^2_{\eps}(\R^3,\C^3) \oplus L^2_{\mu}(\R^3,\C^3)
\end{align*}
consisting of weighted vector-valued $L^2$-functions $\Psi = (\psi^E,\psi^H)$ with scalar product 
\begin{align}
	\bscpro{\Psi}{\Phi}_{\Hil_{(\eps,\mu)}} := \int_{\R^3} \dd x \, \psi^E(x) \cdot \eps(x) \, \phi^E(x) + \int_{\R^3} \dd x \, \psi^H(x) \cdot \mu(x) \, \phi^H(x) 
	\label{intro:eqn:weighted_scalar_product}
\end{align}
where $a \cdot b := \sum_{j = 1}^3 \bar{a}_j \, b_j$ is the usual scalar product on $\C^3$. Even though $\Hil_{(\eps,\mu)}$ coincides with $L^2(\R^3,\C^6)$ as Banach spaces due to our assumptions on $\eps$ and $\mu$ (Assumption~\ref{intro:assumption:eps_mu}~(i)), the norm and scalar product depend on the weights. The square of the weighted $L^2$-norm is twice the energy of the electromagnetic field, 
\begin{align*}
	\mathcal{E} \bigl ( \mathbf{E}(t) , \mathbf{H}(t) \bigr ) &= \tfrac{1}{2} \bnorm{\bigl ( \mathbf{E}(t) , \mathbf{H}(t) \bigr )}_{\Hil_{(\eps,\mu)}}^2 
	\, , 
\end{align*}
and thus, the selfadjointness of $\Mphys_{(\eps,\mu)}$ with respect to the scalar product $\scpro{\cdot \,}{\cdot}_{\Hil_{(\eps,\mu)}}$ translates to conservation of field energy. 

We are interested in the case where the perturbed material weights 
\begin{align}
	\eps_{\lambda}(x) := \frac{\eps(x)}{\tau_{\eps}(\lambda x)^2} 
	, 
	&&
	\mu_{\lambda}(x) := \frac{\mu(x)}{\tau_{\mu}(\lambda x)^2} 
	\label{intro:eqn:slow_modulation_material_constants}
\end{align}
are slow modulations of the photonic crystal's $\Gamma$-periodic weights $\eps$ and $\mu$ where $\Gamma \cong \Z^3$ is the crystal lattice. The small dimensionless parameter $\lambda \ll 1$ relates the scale of the lattice to the scale of the \emph{macroscopic} modulation. 
The scalar \emph{modulation functions} $\tau_{\eps}$ and $\tau_{\mu}$ describe an isotropic perturbation of the matrix-valued material weights $\eps$ and $\mu$ which may be induced by uneven thermal \cite{Duendar_et_al:optothermal_tuning_photonic_crystals:2011,van_Driel_et_al:tunable_2d_photonic_crystals:2004} or strain tuning \cite{Wong_et_al:strain_tunable_photonic_crystals:2004}. Perturbations of this type have also been considered in several theoretical works \cite{Onoda_Murakami_Nagaosa:geometrics_optical_wave-packets:2006,Raghu_Haldane:quantum_Hall_effect_photonic_crystals:2008,Esposito_Gerace:photonic_crystals_broken_TR_symmetry:2013}. 

For this special case, let us denote the perturbed Maxwell operator associated to the weights $(\eps_{\lambda} , \mu_{\lambda})$ with 
\begin{align*}
	\Mphys_{\lambda} := \Mphys_{(\eps_{\lambda},\mu_{\lambda})}
\end{align*}
and the corresponding weighted $L^2$-space with $\Hil_{\lambda} := \Hil_{(\eps_{\lambda},\mu_{\lambda})}$. For more details, we refer to \textbf{Section~\ref{setup}}. 
\medskip

\noindent
Our goal is to \emph{derive effective electrodynamics $\e^{- \ii t \Mphys_{\eff}}$ which approximate full electrodynamics $\e^{- \ii t \Mphys_{\lambda}}$ for initial states supported in a narrow range of frequencies.} The main tool is space-adiabatic perturbation theory which has been used successfuly to solve the analogous problem for perturbed periodic Schrödinger operators \cite{PST:effective_dynamics_Bloch:2003,DeNittis_Lein:Bloch_electron:2009}. 
\medskip

\noindent
The strategy to tackle this problem is best explained with the help of a diagram: 
\begin{align}
	\bfig
		\node cHlambda(-800,450)[\mbox{physical}]
		\node Hlambda(-800,0)[\Hil_{\lambda}]
		\node piHlambda(-800,-600)[\pmb{\Pi}_{\lambda} \Hil_{\lambda}]
		\node cZakHlambda(0,450)[\mbox{BFZ}]
		\node ZakHlambda(0,0)[\Zak \Hil_{\lambda}]
		\node piZakHlambda(0,-600)[\pmb{\Pi}_{\lambda}^{\Zak} \Zak \Hil_{\lambda}]
		\node cZakH0(800,450)[\mbox{auxiliary}]
		\node ZakH0(800,0)[\Zak \Hper]
		\node piH0(800,-600)[\Pi_{\lambda} \Zak \Hper]
		\node cHref(1600,450)[\mbox{reference}]
		\node Href(1600,0)[\Zak \Hper]
		\node piHref(1600,-600)[\Piref \Zak \Hper]
		\arrow[Hlambda`ZakHlambda;\Zak]
		\arrow[ZakHlambda`ZakH0;S(\ii \lambda \nabla_k)]
		\arrow[ZakH0`Href;U_{\lambda}]
		\arrow[Hlambda`piHlambda;\pmb{\Pi}_{\lambda}]
		\arrow[ZakHlambda`piZakHlambda;\pmb{\Pi}_{\lambda}^{\Zak}]
		\arrow[ZakH0`piH0;\Pi_{\lambda}]
		\arrow[Href`piHref;\Piref]
		\arrow/-->/[piHlambda`piZakHlambda;]
		\arrow/-->/[piZakHlambda`piH0;]
		\arrow/-->/[piH0`piHref;]
		\arrow//[cHlambda`Hlambda;]
		\arrow//[cZakHlambda`ZakHlambda;]
		\arrow//[cZakH0`ZakH0;]
		\arrow//[cHref`Href;]
		\Loop(-800,0){\Hil_{\lambda}}(ur,ul)_{\e^{-\ii t \Mphys_{\lambda}}} 
		\Loop(0,0){\Zak \Hil_{\lambda}}(ur,ul)_{\e^{-\ii t \Mphys_{\lambda}^{\Zak}}} 
		\Loop(800,0){\Zak \Hper}(ur,ul)_{\e^{-\ii t M_{\lambda}}} 
		\Loop(1600,-600){\Piref \Zak \Hper}(dr,dl)^{\e^{-\ii t \Mphys_{\eff}}} 
	\efig
	\label{intro:eqn:spaces_diagram}
\end{align}
Apart from the \emph{physical representation}, there is the \emph{Bloch-Floquet-Zak (or BFZ) representation}, the \emph{auxiliary representation} and the \emph{reference representation}. The loops indicate unitary time evolution groups. All operators mapping from left to right are unitaries between Hilbert spaces. Any operator mapping an upper to a lower Hilbert space is an orthogonal projection; The ranges of these projections are made up of states from a narrow band of frequencies (we will be more specific in \textbf{Section~\ref{setup:periodic:pi0}}). Operators in different columns are related by conjugating with the corresponding unitary (at least up to $\order(\lambda^{\infty})$ errors in norm), \eg 
\begin{align*}
	\pmb{\Pi}_{\lambda}^{\Zak} = S^{-1}(\ii \lambda \nabla_k) \, \Pi_{\lambda} \, S(\ii \lambda \nabla_k)
	. 
\end{align*}
\begin{remark}
	To indicate a clear relationship between operators in different representations (\eg the projection or the Maxwell operator), we use bold-face letters for operators in the physical representation, we add the superscript $\Zak$ in BFZ representation, and in the auxiliary representation, we use normal letters without superscript. 
\end{remark}
Now to the strategy of the proof of our first main result, Theorem~\ref{real_eff:thm:eff_edyn}: The first change of representation via $\Zak$ exploits the $\Gamma$-periodicity of the unperturbed Maxwell operator (\cf \textbf{Section~\ref{setup:Zak}}). In this BFZ representation, the Maxwell operator $\Mphys_{\lambda}^{\Zak} = \Op_{\lambda}(\pmb{\Msymb}_{\lambda})$ is a pseudodifferential operator (\cf \textbf{Section~\ref{setup:PsiDOs}} and \cite{DeNittis_Lein:adiabatic_periodic_Maxwell_PsiDO:2013}), a necessity to implement space-adiabatic perturbation theory \cite{PST:effective_dynamics_Bloch:2003}. 

However, for technical reasons we need to change representation once more before we adapt \cite{PST:effective_dynamics_Bloch:2003}: in the physical representation, the Hilbert structure of $\Hil_{\lambda}$ depends on $\lambda$ so that comparing operators for different values of $\lambda$ is not straightforward. Thus, we use $S(\ii \lambda \nabla_k)$ to represent everything on the common, $\lambda$-\emph{in}dependent Hilbert space $\Zak \Hper$ of the periodic Maxwell operator (\cf \textbf{Section~\ref{setup:aux_rep}}). 

The $\lambda$-independence of the norms allows us to adapt space-adiabatic perturbation theory in \textbf{Section~\ref{sapt}} to the perturbed Maxwell operator. We construct the superadiabatic projection $\Pi_{\lambda}$, the intertwining projection $U_{\lambda}$ and the effective Maxwellian $\Op_{\lambda}(\Msymb_{\eff})$ order-by-order in $\lambda$. 
In the simplest case where $\Omega_{\mathrm{rel}} = \bigcup_{n \in \Index} \{ \omega_n \}$ consists of a family of non-intersecting bands and each associated line bundle is trivial, we obtain explicit formulas for the symbol $\Msymb_{\eff} = \Msymb_{\eff,0} + \lambda \, \Msymb_{\eff,1} + \order(\lambda^2)$: The perturbation enters the principal symbol 
\begin{align}
	\Msymb_{\eff,0}(r,k) = \tau(r) \, \sum_{n \in \Index} \omega_n(k) \, \sopro{\chi_n}{\chi_n} 
	\label{intro:eqn:Meff_0_simple}
\end{align}
through the \emph{deformation function} $\tau(r) := \tau_{\eps}(r) \, \tau_{\mu}(r)$. 
The subprincipal symbol in this simplified situation can be expressed in terms of the \emph{Berry connection} $\mathcal{A} := \bigl ( \mathcal{A}_{nj} \bigr )$ with real-valued entries $\mathcal{A}_{nj}(k) := \ii \, \bscpro{\varphi_n(k)}{\nabla_k \varphi_j(k)}_{\HperT}$ and the \emph{Poynting tensor} $\Poynt = \bigl ( \Poynt_{nj} \bigr )$ with entries 
\begin{align}
	\Poynt_{nj}(k) := \int_{\T^3} \dd y \; \overline{\varphi_n^H(k,y)} \times \varphi_j^E(k,y) 
	\label{intro:eqn:Poynting_vector}
\end{align}
for all $n,j \in \Index$, namely 
\begin{align}
	\Msymb_{\eff,1}(r,k) &= \frac{1}{2} \sum_{n,j \in \Index} \Bigl ( \ii \, \tau(r) \, \bigl ( \nabla_r \ln \tfrac{\tau_{\eps}}{\tau_{\mu}} \bigr )(r) \cdot \bigl ( \Poynt_{nj}(k) - \overline{\Poynt_{jn}(k)} \bigr ) 
	\Bigr . + \notag \\
	&\qquad \qquad \quad \Bigl . 
	+ \bigl ( \omega_n(k) + \omega_j(k) \bigr ) \, \nabla_r \tau \cdot \mathcal{A}_{nj} 
	\Bigr ) \, \sopro{\chi_n}{\chi_j} 
	. 
	\label{intro:eqn:Meff_1_simple}
\end{align}
In any case, $\e^{- \ii t \, \Op_{\lambda}(\Msymb_{\eff})}$ approximates the full dynamics in the reference representation for states from the narrow band of frequencies supported in $\Omega_{\mathrm{rel}}$: the composition of the three unitaries $V_{\lambda} := U_{\lambda} \, S(\ii \lambda \nabla_k) \, \Zak$ (which change from the physical to the reference representation) intertwines the two evolution groups up to $\order(\lambda^{\infty})$ in operator norm, 
\begin{align}
	\left ( \e^{- \ii t \Mphys_{\lambda}} - V_{\lambda}^* \; \e^{- \ii t \Op_{\lambda}(\Msymb_{\eff})} \; V_{\lambda} \right ) \, \pmb{\Pi}_{\lambda} &= \order_{\norm{\cdot}}(\lambda^{\infty}) 
	. 
	\label{intro:eqn:approximate_dynamics}
\end{align}
Compared to the quantum problem, we also need to prove that the superadiabatic subspace $\ran \pmb{\Pi}_{\lambda}$ \emph{supports physical states} in the following precise sense: up to errors of order $\order(\lambda^{\infty})$ in norm, its elements need to satisfy the source-free conditions \eqref{intro:eqn:traditional_Maxwell_equations_no_source} \emph{and} it needs to support \emph{real}-valued electromagnetic fields (\cf Definition~\ref{physical:defn:support_real}). The first condition is always satisfied (Proposition~\ref{physical:prop:divergence_cond_Pi_lambda}) while the second requires 
two additional assumptions: $\eps$ and $\mu$ need to be real and one needs to choose bands symmetrically (\cf Corollary~\ref{physical:cor:spectral_condition_reality} and Proposition~\ref{physical:prop:real_solutions_Pi_lambda}). 
This spectral condition has a simple explanation: the Bloch waves which enter $\pmb{\Pi}_{\lambda}$ are complex-valued functions, and thus we need to look at linear combinations of counter-propagating Bloch waves to piece together real-valued fields $(\mathbf{E}_0,\mathbf{H}_0)$. 

Thus, we can derive effective dynamics for physically relevant states (\textbf{Section~\ref{real_eff}}): the spectral condition in our first main result, Theorem~\ref{real_eff:thm:eff_edyn}, also implies that effective single-band dynamics (\eg \emph{one}-band ray optics) are \emph{not sufficient} to describe the dynamics of physical sates. This is because the superadiabatic subspaces associated to single, isolated bands cannot support real states (Corollary~\ref{real_eff:cor:no_eff_single_band_dynamics}). Moreover, we will establish a link between $\Op_{\lambda}(\Msymb_{\eff})$ and the \emph{Maxwell-Harper operator}~\eqref{twin_band:eqn:Harper_Maxwell_operator}; this operator can be seen as a multiband generalization of the usual Harper operator with a symmetry. 
Hence, also the Maxwell-Harper operator is affiliated to a non-commutative torus (of dimension $6$). 

Although the setting suggests that a semiclassical limit is within reach, we will argue in \textbf{Section~\ref{twin_bands}} why obtaining ray optics equations for \emph{real} states is not as easy as modifying existing results (\eg \cite{PST:effective_dynamics_Bloch:2003,DeNittis_Lein:Bloch_electron:2009,Stiepan_Teufel:semiclassics_op_valued_symbols:2012}). 
Even in the simplest case, the \emph{twin band case} for non-gyrotropic photonic crystals, one needs to consider an isolated, non-degenerate band $\omega_+(k)$ together with its symmetric twin $\omega_-(k) = - \omega_+(-k)$. 
Consequently, $\Msymb_{\eff,0}$ \emph{cannot} be scalar, a case which is not covered by existing semiclassical techniques. Hence, we postpone a rigorous treatment to a future publication \cite{DeNittis_Lein:ray_optics_photonic_crystals:2013}. 

Finally, we will put our main results in perspective with existing results in \textbf{Section~\ref{discussion}} and outline avenues for future research. The proofs of some of the statements have been relegated to \textbf{Section~\ref{technical}}.

\subsection*{Acknowledgements} 
\label{intro:acknowledgements}
The foundation of this article was laid during the trimester program “Mathematical challenges of materials science and condensed matter physics”, and the authors thank the Hausdorff Research Institute for Mathematics for providing a stimulating research environment. Moreover, G.~D{.} gratefully acknowledges support by the Alexander von Humboldt Foundation. 
M.~L{.} is supported by Deutscher Akademischer Austauschdienst. 
The authors also appreciate the useful comments and references provided by C.~Sparber. Moreover, M.~L{.} thanks M. Sigal for the stimulating discussions surrounding ray optics and meaning of observables in electrodynamics. 
\section{Setup} 
\label{setup}
This section serves to introduce the main objects of study and give some of their properties. For more details and proofs, we point to \cite{DeNittis_Lein:adiabatic_periodic_Maxwell_PsiDO:2013} and references therein.

\subsection{The slowly-modulated Maxwell operator} 
\label{setup:Maxwell}
Throughout this paper, we shall always make the following 
\begin{assumption}[Material weights]\label{intro:assumption:eps_mu}
	Let the dielectric permittivity $\eps_{\lambda}$ and magnetic permeability $\mu_{\lambda}$ be hermitian-matrix-valued functions of the form \eqref{intro:eqn:slow_modulation_material_constants} with the following properties: 
	\begin{enumerate}[(i)]
		\item $\eps , \mu \in L^{\infty} \bigl ( \R^3 , \mathrm{Mat}_{\C}(3) \bigr )$ are positive, $\Gamma$-periodic and bounded away from $0$ and $+ \infty$, \ie there exist $c , C > 0$ such that $0 < c \, \id_{\C^3} \leq \eps , \mu \leq C \, \id_{\C^3} < + \infty$. 
		\item $\tau_{\eps} , \tau_{\mu} \in \Cont^{\infty}_{\mathrm{b}}(\R^3)$ are positive, $\tau_{\eps}(0) = 1 = \tau_{\mu}(0)$ and bounded away from $0$ and $+\infty$. 
	\end{enumerate}
\end{assumption}
\begin{remark}
	These assumptions ensure that $\Hil_{\lambda}$ and $L^2(\R^3,\C^6)$ coincide as Banach spaces, a fact that will be crucial in many technical arguments. 
	Physically, we include all isotropically perturbed photonic cyrstals which can be treated as lossless, linear media. 
\end{remark}
In some of our theorems, we need to assume that the material weights are \emph{real} to ensure we can construct real solutions; this means that these results do not apply to gyrotropic photonic crystals where $\eps$ and $\mu$ have non-zero imaginary offdiagonal entries. 
\begin{assumption}[Real weights]\label{setup:assumption:reality_weights}
	Suppose the material weights satisfy Assumption~\ref{intro:assumption:eps_mu}. We call them \emph{real} iff the entries of $\eps$ and $\mu$ are all \emph{real}-valued functions. 
\end{assumption}
The slowly modulated Maxwell operator 
\begin{align}
	\Mphys_{\lambda} = S(\lambda \hat{x})^{-2} \, \Mper 
	\label{setup:eqn:modulated_Maxwell}
\end{align}
is naturally defined on the weighted $L^2$-space $\Hil_{\lambda}$ and most conveniently written in terms of the modulation 
\begin{align}
	S(\lambda \hat{x}) := \left (
	\begin{matrix}
		\tau_{\eps}^{-1}(\lambda \hat{x}) & 0 \\
		0 & \tau_{\mu}^{-1}(\lambda \hat{x}) \\
	\end{matrix}
	\right )
	\label{setup:eqn:modulation_S}
\end{align}
and the \emph{periodic Maxwell operator} 
\begin{align*}
	\Mper := W \, \Rot 
	:= \left (
	\begin{matrix}
		\eps^{-1}(\hat{x}) & 0 \\
		0 & \mu^{-1}(\hat{x}) \\
	\end{matrix}
	\right ) \, \left (
	\begin{matrix}
		0 & + \ii \nabla_x^{\times} \\
		- \ii \nabla_x^{\times} & 0 \\
	\end{matrix}
	\right ) 
	. 
\end{align*}
We will consistently use $v^{\times} \psi := v \times \psi$ to associate a matrix to any vectorial quantity $v = (v_1 , v_2 , v_3)$ for suitable vectors $\psi = (\psi_1 , \psi_2 , \psi_3)$ through 
\begin{align*}
	v^{\times} = \left (
	\begin{matrix}
		0 & - v_3 & + v_2 \\
		+ v_3 & 0 & - v_1 \\
		- v_2 & + v_1 & 0 \\
	\end{matrix}
	\right )
	. 
\end{align*}
Equipped with the weight-independent domain $\domain := \domain(\Rot)$, the Maxwell operator $\Mphys_{\lambda}$ defines a selfadjoint operator on $\Hil_{\lambda}$ \cite[Theorem~2.2]{DeNittis_Lein:adiabatic_periodic_Maxwell_PsiDO:2013}. 

The splitting \eqref{setup:eqn:modulated_Maxwell} also allows us to decompose 
\begin{align}
	\Hil_{\lambda} = \Jphys_{\lambda} \oplus \Gphys 
\end{align}
into the subspace of unphysical zero mode fields 
\begin{align*}
	\Gphys := \Bigl \{ \Psi = \bigl ( \nabla_x \varphi_1 , \nabla_x \varphi_2 \bigr ) \in L^2(\R^3,\C^6) \; \; \big \vert \; \; \varphi_1,\varphi_2 \in L^2_{\mathrm{loc}}(\R^3) \Bigr \}
\end{align*}
and its $\scpro{\cdot \,}{\cdot}_{\Hil_{\lambda}}$-orthogonal complement
\begin{align*}
	\Jphys_{\lambda} := \Gphys^{\perp_{\Hil_{\lambda}}} 
\end{align*}
comprised of the \emph{physical states} (\cf equations~(2.2)--(2.4) in \cite{DeNittis_Lein:adiabatic_periodic_Maxwell_PsiDO:2013}) which satisfy the source-free conditions~\eqref{intro:eqn:traditional_Maxwell_equations_no_source} in the distributional sense. We will denote the orthogonal projections onto $\Jphys_{\lambda}$ and $\Gphys$ with $\Pphys_{\lambda}$ and $\Qphys_{\lambda}$, respectively. The Maxwell operator 
\begin{align*}
	\Mphys_{\lambda} = \Mphys_{\lambda} \vert_{\Jphys_{\lambda}} \oplus 0 \vert_{\Gphys}
\end{align*}
is block-diagonal with respect to this decomposition, and thus, dynamics $\e^{- \ii t \Mphys_{\lambda}}$ leaves $\Jphys_{\lambda}$ and $\Gphys$ invariant \cite[Theorem~2.2]{DeNittis_Lein:adiabatic_periodic_Maxwell_PsiDO:2013}. 

\subsection{BFZ representation} 
\label{setup:Zak}
In order to exploit the $\Gamma$-periodicity of the unperturbed, periodic Maxwell operator $\Mper$, we employ a variant of the Bloch-Floquet transform called Zak transform \cite[Section~3]{DeNittis_Lein:adiabatic_periodic_Maxwell_PsiDO:2013}. Here, the crystal lattice $\Gamma = \mathrm{span}_{\Z} \{ e_1 , e_2 , e_3 \}$ is spanned by three (non-unique) basis vectors and allows to decompose vectors $x = y + \gamma$ in real space $\R^3 \cong \WS \times \Gamma$ into a component $y$ from the so-called \emph{Wigner-Seitz} or fundamental cell $\WS$ and a lattice vector $\gamma \in \Gamma$. We will identify $\WS$ with the $3$-dimensional torus $\T^3$ for convenience. 

This decomposition of real space induces a similar splitting of momentum space $\R^3 \cong \BZ \times \Gamma^*$ where $\BZ$ is the fundamental cell associated to the dual lattice $\Gamma^* := \mathrm{span}_{\Z} \{ e_1^* , e_2^* , e_3^* \}$ generated by the family of vectors defined through $e_j \cdot e_n^* = 2 \pi \, \delta_{jn}$, $j , n = 1 , 2 , 3$. The most common choice of fundamental cell 
\begin{align*}
	\BZ := \Bigl \{ \mbox{$\sum_{j = 1}^3$} k_j \, e_j^* \in \R^3 \; \big \vert \; k_1 , k_2 , k_3 \in [-\nicefrac{1}{2},+\nicefrac{1}{2}) \Bigr \} 
\end{align*}
is dubbed (first) \emph{Brillouin zone}, and its elements $k \in \BZ$ are known as crystal momentum. 

The Zak transform is now initially defined on $\mathcal{S}(\R^3,\C^6)$ as 
\begin{align*}
	( \Zak \Psi )(k,y) := \sum_{\gamma \in \Gamma} \e^{- \ii k \cdot (y + \gamma)} \, \Psi(y + \gamma) 
	, 
\end{align*}
and extends to the Banach space $\Hil_{\lambda} \cong L^2(\R^3,\C^6)$ by density. 
From the definition, we can read off the following periodicity properties of 
Zak transformed $L^2(\R^3,\C^6)$-functions: 
\begin{align*}
	( \Zak \Psi )(k , y - \gamma) &= ( \Zak \Psi )(k , y) 
	&& 
	\forall \gamma \in \Gamma 
	\\
	( \Zak \Psi )(k - \gamma^* , y) &= \e^{+ \ii \gamma^* \cdot y} ( \Zak \Psi )(k , y) 
	&& 
	\forall \gamma^* \in \Gamma^* 
\end{align*}
For $\lambda = 0$, it is well known that $\Zak$ extends to a unitary map 
\begin{align*}
	\Zak : \Hper \longrightarrow L^2_{\mathrm{eq}}(\R^3,\HperT) 
	\cong L^2(\BZ) \otimes \HperT
\end{align*}
between $\Hper$ and the $L^2$-space of equivariant functions in $k$ with values in 
\begin{align*}
	\HperT := L^2_{\eps}(\T^3,\C^3) \oplus L^2_{\mu}(\T^3,\C^3) 
	\, , 
\end{align*}
namely, 
\begin{align}
	L^2_{\mathrm{eq}} (\R^3,\HperT) := \Bigl \{ 
	\Psi \in L^2_{\mathrm{loc}} (\R^3,\HperT) \; \; \big \vert \; \; 
	\Psi(k - \gamma^*) = \e^{+ \ii \gamma^* \cdot \hat{y}} \Psi(k) \; \mbox{ a.~e. $\forall \gamma^* \in \Gamma^*$}
	\Bigr \} 
	\, . 
\end{align}
Here $\e^{+ \ii \gamma^* \cdot \hat{y}}$ denotes the multiplication operator by the function $y\mapsto \e^{+ \ii \gamma^* \cdot {y}}$ on each fiber space $\HperT$. These spaces are equipped with the weighted scalar products 
\begin{align*}
	\bscpro{\Psi(k)}{\Phi(k)}_{\HperT} := \int_{\T^3} \dd y \, \psi^E(k,y) \cdot \eps(y) \, \phi^E(k,y) + \int_{\T^3} \dd y \, \psi^H(k,y) \cdot \mu(y) \, \phi^H(k,y) 
\end{align*}
and 
\begin{align*}
	\scpro{\Psi}{\Phi}_{\mathrm{eq}} &:= \int_{\BZ} \dd k \, \bscpro{\Psi(k)}{\Phi(k)}_{\HperT} 
	. 
\end{align*}
For $\lambda \neq 0$ the scalar product $\scpro{\cdot \,}{\cdot}_{\Hil_{\lambda}}$ does \emph{not} necessarily decompose fiberwise in a similar fashion. However, if we endow the space $\Zak \Hil_{\lambda} \cong L^2_{\mathrm{eq}} ( \R^3 , \HperT )$ with the induced scalar product 
\begin{align*}
	\sscpro{\cdot \,}{\cdot}_{\Zak \Hil_{\lambda}} := \bscpro{\Zak^{-1} \, \cdot \,}{\Zak^{-1} \, \cdot \,}_{\Hil_{\lambda}}
	= \bscpro{S(\ii\lambda\nabla_k) \; \cdot \,}{S(\ii\lambda\nabla_k) \; \cdot \,}_{\mathrm{eq}} 
	, 
\end{align*}
the map $\Zak : \Hil_{\lambda} \longrightarrow \Zak \Hil_{\lambda}$ is again unitary. 
\medskip

\noindent
In BFZ representation, the Maxwell operator reads 
\begin{align*}
	\Mphys_{\lambda}^{\Zak} := \Zak \, \Mphys_{\lambda} \, \Zak^{-1} 
	= S(\ii \lambda \nabla_k)^{-2} \, \Mper^{\Zak} 
\end{align*}
where the periodic Maxwell operator 
\begin{align*}
	\Mper^{\Zak} :& \negmedspace= \Zak \, \Mper \, \Zak^{-1} 
	= \int_{\BZ}^{\oplus} \dd k \, \Mper(k)
\end{align*}
fibers in $k$, 
\begin{align*}
	\Mper(k) &= \left (
	\begin{matrix}
		0 & - \eps^{-1} \, (- \ii \nabla_y + k)^{\times} \\
		+ \mu^{-1} \, (- \ii \nabla_y + k)^{\times} & 0 \\
	\end{matrix}
	\right ) 
	, 
\end{align*}
but $\Mphys_{\lambda}^{\Zak}$ does not. The operator-valued function 
\begin{align*}
	\R^3 \ni k \mapsto \Mper(k) \in \mathcal{B}(\domainT,\HperT)
\end{align*}
is analytic \cite[Proposition~3.3]{DeNittis_Lein:adiabatic_periodic_Maxwell_PsiDO:2013}, and the domain $\domainT = \domain \bigl ( \Mper(k) \bigr )$ of each fiber operator of $\Mper(k)$ is independent of $k$. 

\subsection{The periodic Maxwell operator} 
\label{setup:periodic}
Since some of the properties of periodic Maxwell operators differ from those of periodic Schrödinger operators, we will give a small overview. Apart from the unphysical zero modes which contribute essential spectrum at $0$, the spectrum of $\Mper(k)$ due to the physical states in $\Jper(k) \subset \HperT$ is purely discrete \cite[Theorem~3.4]{DeNittis_Lein:adiabatic_periodic_Maxwell_PsiDO:2013}. Here $\Jper$ is the subspace of states satisfying \eqref{intro:eqn:traditional_Maxwell_equations_no_source} in the distributional sense and the Zak transform fibers it, 
\begin{figure}[t]
	\centering
		\resizebox{100mm}{!}{\includegraphics{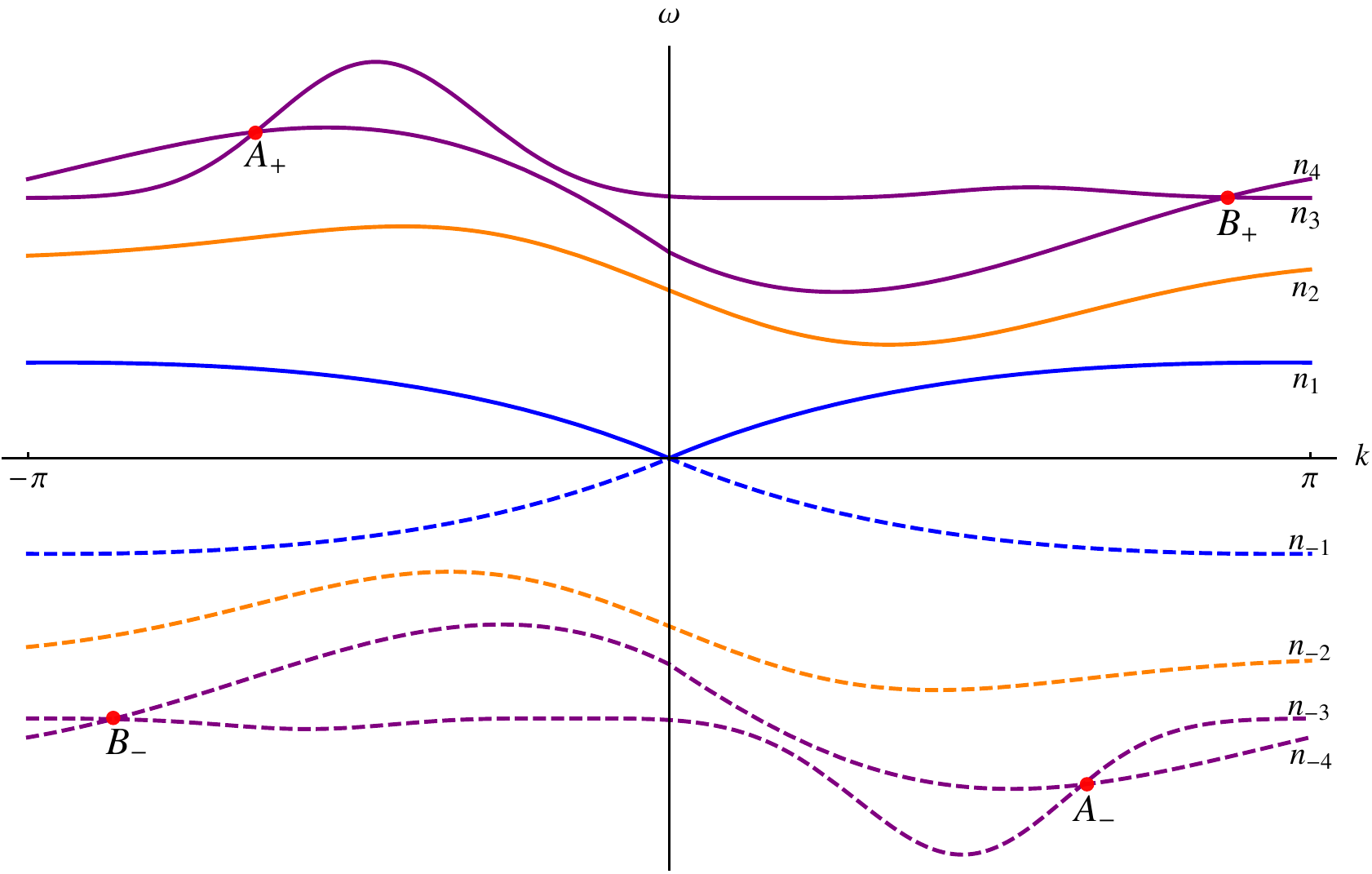}} 
	\caption{A sketch of a typical band spectrum of $\Mper(k) \vert_{\Jper(k)}$ for a non-gyrotropic photonic crystal (\ie $\eps$ and $\mu$ are real). The $2+2$ ground state bands with linear dispersion around $k = 0$ are blue. Positive frequency bands are drawn using solid lines while the lines for the symmetrically-related negative frequency bands are in the same color, but dashed. }
	\label{setup:fig:band_picture}
\end{figure}
\begin{align*}
	\Zak \Jper = \int_{\BZ}^{\oplus} \dd k \, \Jper(k) 
	. 
\end{align*}
If we number the eigenvalues appropriately, we obtain a band picture similar to that for periodic Schrödinger operators \cite[Theorem~1.4]{DeNittis_Lein:adiabatic_periodic_Maxwell_PsiDO:2013}. A schematic representation is given in Figure~\ref{setup:fig:band_picture}. First of all, the analyticity of $k \mapsto \Mper(k)$ and the discrete nature of the spectrum away from $0$ means that all band functions $k \mapsto \omega_n(k)$ are analytic away from band crossings; due to the unitary equivalence of $\Mper(k)$ and $\Mper(k - \gamma^*)$, the band functions are necessarily $\Gamma^*$-periodic. Also the corresponding eigenfunctions $\varphi_n(k)$ (\emph{Bloch functions}) which satisfy 
\begin{align*}
	\Mper(k) \varphi_n(k) = \omega_n(k) \, \varphi_n(k) 
	, 
	&&
	\varphi_n(k) \in \domainT 
	, 
\end{align*}
can be chosen to be locally analytic. 

The first difference to periodic Schrödinger operators we notice is that $\Mper(k)$ is not bounded from below. For the particular case we are interested later on where $\eps$ and $\mu$ are real, one can see that \emph{frequency bands come in pairs}: if we denote complex conjugation on $\HperT$ with $C$, then $C$ commutes with $W$ and we obtain 
\begin{align}
	C \, \Mper(k) \, C = - \Mper(-k) 
	. 
	\label{setup:eqn:complex_symmetry_fiber_Maxwell}
\end{align}
Hence, any frequency band $k \mapsto \omega_n(k)$ with Bloch function $k \mapsto \varphi_n(k)$ has a “symmetric twin” $k \mapsto - \omega_n(-k)$ with Bloch function $k \mapsto \overline{\varphi_n(-k)}$, 
\begin{align}
	\Mper(k) \, \overline{\varphi_n(-k)} = - \omega_n(-k) \, \overline{\varphi_n(-k)}
	. 
	\label{setup:eqn:symmetric_twin_Bloch_function}
\end{align}
On the level of spectra, we can write this succinctly as $\sigma \bigl ( \Mper(-k) \bigr ) = - \sigma \bigl ( \Mper(k) \bigr )$. This symmetry will be discussed in more detail in Section~\ref{physical}. 
\medskip

\noindent
A second feature that is unique to periodic Maxwell operators is the appearance of so-called \emph{ground state bands} (blue bands in Figure~\ref{setup:fig:band_picture}); this is independent of whether $\eps$ and $\mu$ are real. By definition these are the bands $\omega_n$ which approach $0$ as $k \rightarrow 0$. We have proven that there are $2 + 2$ of them, \ie $2$ approach $\omega = 0$ from above, the other $2$ from below. In a vicinity of $k = 0$, these bands have approximately linear dispersion (with a conical intersection), and the slope depends on the direction of $k$ \cite[Theorem~1.4~(iii)]{DeNittis_Lein:adiabatic_periodic_Maxwell_PsiDO:2013}. Given material weights, the slope can be computed from the Fourier coefficients of $\eps$ and $\mu$ as well as elementary calculations involving $2 \times 2$ and $3 \times 3$ matrices. What makes these bands special is that these are the only ones which enter when studying the transport of light whose in-vacuo wave length is long compared to the lattice spacing of the photonic crystal (usually referred to as homogenization limit). Deriving effective light dynamics for ground state bands presents us with an additional technical challenge: the dimensionality of the ground state eigenspace at $k \neq 0$ and $k = 0$ jumps from $4$ to $6$ \cite[Lemma~3.2~(iii)]{DeNittis_Lein:adiabatic_periodic_Maxwell_PsiDO:2013}, and thus, the associated ground state projection is necessarily discontinuous at $k = 0$. Even though we believe we could overcome this technical obstacle by means of a suitable regularization procedure (\cf discussion in \cite[Section~3.2]{DeNittis_Lein:adiabatic_periodic_Maxwell_PsiDO:2013}), the \emph{physical} situation in the low frequency regime is different: The dispersion of the ground state bands near $k = 0$ is approximately linear which means that the wave length $\lambda \propto \nicefrac{1}{\abs{k}}$ for $k \approx 0$ is \emph{never} small compared to the scale on which $\tau_{\eps}$ and $\tau_{\mu}$ vary -- the multiscale ansatz breaks down. This is the reason why we decide to \emph{exclude} the ground state bands from our considerations at this point.

\subsubsection{States associated to a narrow range of frequencies} 
\label{setup:periodic:pi0}
Typically, photonic crystals only have local as opposed to global spectral gaps. For instance, the yellow ($n_{\pm 2}$) or violet bands ($n_{\pm 3}$ and $n_{\pm 4}$) in Figure~\ref{setup:fig:band_picture} are only \emph{locally} separated by a gap. Mathematically, this translates to the following: 
\begin{assumption}[Gap Condition]\label{intro:assumption:gap_condition}
	A family of frequency bands $\Omega_{\mathrm{rel}} := \{ \omega_n \}_{n \in \Index}$ of $\Mper^{\Zak}$ associated to the index set $\Index \subset \Z$ satisfies the \emph{Gap Condition} iff 
	\begin{enumerate}[(i)]
		\item $\Omega_{\mathrm{rel}}$ splits into finitely many \emph{contiguous} families of bands, \ie 
		\begin{align*}
			 \Omega_{\mathrm{rel}} = \bigcup_{\alpha = 1}^N \Omega_{\alpha}
		\end{align*}
		where each $\Omega_{\alpha} := \{ \omega_n \}_{n \in \Index_{\alpha}} $ is defined in terms of \emph{contiguous} index sets $\Index_{\alpha} = [ n_{\min \, \alpha} , n_{\max \, \alpha} ] \cap \Z$, 
		\item all the $\Omega_{\alpha}$ are locally separated by a gap from all the other bands, 
		\begin{align*}
			\inf_{k \in \BZ} \mathrm{dist} \biggl ( \bigcup_{n \in \Index_{\alpha}} \bigl \{ \omega_n(k) \bigr \} , \bigcup_{j \not\in \Index_{\alpha}} \bigl \{ \omega_j(k) \bigr \} \biggr ) 
			=: c_{\mathrm{g}}(\alpha)
			> 0 
			, 
		\end{align*}
		\item and none of these bands are ground state bands in the sense of \cite[Definition~3.6]{DeNittis_Lein:adiabatic_periodic_Maxwell_PsiDO:2013}, \ie $\inf_{n \in \Index} \, \sabs{\omega_n(0)} > 0$. 
	\end{enumerate}
	The infimum $c_{\mathrm{g}} := \inf_{\alpha = 1 , \ldots , N} c_{\mathrm{g}}(\alpha)$ is called \emph{gap constant}. 
\end{assumption}
In contrast to \cite[Definition~1]{PST:effective_dynamics_Bloch:2003}, our definition of Gap Condition admits several contiguous families of bands; 
this modification is necessary for photonic crystals, because in case of real $(\eps,\mu)$ only symmetrically chosen families of bands (corresponding to solid \emph{and} dashed lines of the same color) support real states (\cf Section~\ref{physical:real}). 

One of the most basic results is that a local gap suffices to make the “local spectral projection” $\pi_0(k) := 1_{\{ \omega_n(k)\}_{n \in \Index}} \bigl ( \Mper(k) \bigr )$ associated to isolated families of bands $\Omega_{\mathrm{rel}}$ analytic in $k$. It can also be usefully expressed in terms of the Bloch functions,
\begin{align}
        \pi_0(k) = \sum_{n \in \Index} \sopro{\varphi_n(k)}{\varphi_n(k)}
        \, .
		\label{physical:eqn:pi_0}
\end{align}
%
%
\begin{lemma}\label{setup:lem:existence_pi_0}
	Suppose Assumption~\ref{intro:assumption:eps_mu} and the Gap Condition~\ref{intro:assumption:gap_condition} are satisfied. Then the map $\BZ \ni k \mapsto \pi_0(k)$ is analytic and takes values in the orthogonal projections on $\HperT$.
\end{lemma}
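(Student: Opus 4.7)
The plan is to combine the Riesz projection formula with the analyticity of the fiber family $k \mapsto \Mper(k)$. For each $k \in \BZ$ the operator $\Mper(k)$ is self-adjoint on $\HperT$ equipped with the weighted scalar product, so the spectral theorem ensures that the projection onto any Borel subset of $\sigma(\Mper(k))$ is orthogonal; in particular $\pi_0(k)$, being the spectral projection onto the finite set $\{\omega_n(k)\}_{n \in \Index}$, takes values in the orthogonal projections on $\HperT$.

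For the analyticity, I would proceed locally. Fix $k_0 \in \BZ$ arbitrarily. By the Gap Condition~\ref{intro:assumption:gap_condition}, for every contiguous family $\Omega_\alpha$ the eigenvalues $\{\omega_n(k_0)\}_{n \in \Index_\alpha}$ are separated from the remainder of $\sigma(\Mper(k_0))$ by a gap of at least $c_{\mathrm{g}} > 0$. This allows me to pick a closed, positively-oriented contour $\Lambda_\alpha \subset \C \setminus \sigma(\Mper(k_0))$ that encircles $\{\omega_n(k_0)\}_{n \in \Index_\alpha}$ exactly once and stays at distance at least $c_{\mathrm{g}}/2$ from the spectrum. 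By upper semi-continuity of the spectrum for an analytic family of self-adjoint operators on a common domain, there is an open neighborhood $U_{k_0} \subset \BZ$ on which $\Lambda := \bigcup_{\alpha} \Lambda_\alpha$ still separates $\{\omega_n(k)\}_{n \in \Index}$ from the rest of $\sigma(\Mper(k))$. Consequently the Riesz formula
\begin{align*}
\pi_0(k) = \frac{\ii}{2\pi} \oint_{\Lambda} \bigl ( \Mper(k) - z \bigr )^{-1} \, \dd z
\end{align*}
holds on $U_{k_0}$.

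Now I invoke the fact, recorded in the excerpt via \cite[Proposition~3.3]{DeNittis_Lein:adiabatic_periodic_Maxwell_PsiDO:2013}, that $k \mapsto \Mper(k)$ is an analytic family of self-adjoint operators on the common domain $\domainT$. Since $\Lambda$ is a compact subset of the joint resolvent set for every $k \in U_{k_0}$, the resolvent $(k,z) \mapsto \bigl ( \Mper(k) - z \bigr )^{-1}$ is jointly analytic on $U_{k_0} \times \Lambda$ with values in bounded operators on $\HperT$, and the contour integral inherits analyticity in $k$ on $U_{k_0}$. Since $k_0 \in \BZ$ was arbitrary, $\pi_0$ is analytic on all of $\BZ$.

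The main obstacle is that $\Mper(k)$ is unbounded and \emph{not} semi-bounded, so one cannot naively use a single global contour. This is resolved by the uniform gap constant $c_{\mathrm{g}} > 0$, which provides enough buffer to pick local contours that are stable under small variations of $k$, together with condition (iii) of Assumption~\ref{intro:assumption:gap_condition} excluding the ground state bands — whose associated projection would otherwise be discontinuous at $k=0$ by the dimension jump recorded in \cite[Lemma~3.2~(iii)]{DeNittis_Lein:adiabatic_periodic_Maxwell_PsiDO:2013}.
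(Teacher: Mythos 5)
Your proof is correct and follows essentially the same route the paper intends: the authors omit the proof precisely because it reduces to the argument you give --- a local Cauchy--Dunford (Riesz) integral over a contour fixed near each $k_0$, combined with the analyticity of $k \mapsto \Mper(k)$ cited from \cite[Proposition~3.3]{DeNittis_Lein:adiabatic_periodic_Maxwell_PsiDO:2013}. Your version simply spells out the supporting facts (self-adjointness on $\HperT$ gives orthogonality of $\pi_0(k)$, the uniform gap constant $c_{\mathrm{g}}$ makes the local contours stable under small variations of $k$, and Gap Condition~\ref{intro:assumption:gap_condition}~(iii) avoids the dimensional jump at $k=0$) that the paper leaves implicit.
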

We omit the proof since it is quite standard. It rests on locally writing the projection as a sum of Cauchy--Dunford integrals, and then using the analyticity of the resolvent in $k$  \cite[Proposition~3.3]{DeNittis_Lein:adiabatic_periodic_Maxwell_PsiDO:2013} to deduce the analyticity of $\pi_0$. 

\subsubsection{The Bloch bundle} 
\label{setup:periodic:Bloch_bundle}
The analyticity of the map $k \mapsto \pi_0(k)$ means that the projection $\hat{\pi}_0 = \int^{\oplus}_{\BZ} \dd k \, \pi_0(k)$ gives rise to an analytic vector bundle, the so-called \emph{Bloch bundle}. The analysis of this particular vector bundle has helped gain an understanding of topological and analytic aspects of periodic Schrödinger operators \cite{Nenciu:exponential_loc_Wannier:1983,Panati:triviality_Bloch_bundle:2006,Kuchment:exponential_decaying_wannier:2009,DeNittis_Lein:exponentially_loc_Wannier:2011}, and it will prove useful to understand periodic Maxwell operators as well. 

One starts introducing the vector bundle $p : \mathcal{E}_{\R^3}(\pi_0) \longrightarrow {\R^3}$ with total space 
\begin{align*}
	\mathcal{E}_{\R^3}(\pi_0) := \bigsqcup_{k \in \R^3} \ran \pi_0(k) 
\end{align*}
and bundle projection $p^{-1}(k)=\ran \pi_0(k)$. This is a hermitian vector bundle which is (topologically) trivial since the base space $\R^3$ is contractible \cite[Lemma 1.4.3]{Atiyah:K_theory:1994}. Due to the Oka principle \cite[Satz~I, p.~258]{Grauert:analytische_Faserungen:1958}, we need not distinguish between topological and analytic triviality. This allows immediately for the following result:
\begin{lemma}
	Assume $\Omega_{\mathrm{rel}} = \{ \omega_n \}_{n \in \Index}$ satisfies the Gap Condition~\ref{intro:assumption:gap_condition}. Then there exists an orthonormal family of functions $\bigl \{ \R^3 \ni k \mapsto \psi_j(k) \in \HperT \; \; \vert \; \; j = 1 , \ldots , \sabs{\Index} \bigr \} $ so that 
	\begin{align*}
		\mathrm{span}_{\C} \, \left \{ \psi_1(k) , \ldots , \psi_{\sabs{\Index}}(k) \right \} = \ran \pi_0(k)
	\end{align*}
	and all the $\psi_j$ are analytic. 
\end{lemma}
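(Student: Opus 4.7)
The plan is to upgrade the analyticity of the projection $k \mapsto \pi_0(k)$ (established in the previous lemma) to the existence of an analytic, globally defined orthonormal frame for the Bloch bundle $p : \mathcal{E}_{\R^3}(\pi_0) \to \R^3$. The argument splits naturally into two steps: first produce an analytic (not necessarily orthonormal) global frame using the triviality of the bundle, then analytically orthonormalize it.

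For the first step, I would observe that $\mathcal{E}_{\R^3}(\pi_0)$ is an analytic vector bundle of constant rank $\sabs{\Index}$: the analyticity of $k \mapsto \pi_0(k)$ furnishes the analytic structure, and the rank is constant because $\mathrm{tr} \, \pi_0(k) = \sabs{\Index}$ by construction \eqref{physical:eqn:pi_0}. The base $\R^3$ is contractible, hence by \cite[Lemma 1.4.3]{Atiyah:K_theory:1994} the bundle is topologically trivial. Invoking the Oka principle as cited above, topological triviality implies analytic triviality, so there exists an analytic global frame $\{\tilde{\psi}_1, \ldots, \tilde{\psi}_{\sabs{\Index}}\}$ with $\tilde{\psi}_j(k) \in \ran \pi_0(k)$ and $\{\tilde{\psi}_j(k)\}_{j = 1}^{\sabs{\Index}}$ linearly independent in $\HperT$ for every $k \in \R^3$.

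For the second step, I would apply the Gram--Schmidt procedure fiberwise with respect to the $k$-independent scalar product $\bscpro{\cdot\,}{\cdot}_{\HperT}$. The Gram matrix $G(k)_{ij} := \bscpro{\tilde{\psi}_i(k)}{\tilde{\psi}_j(k)}_{\HperT}$ has analytic entries and positive determinant everywhere (by pointwise linear independence). The orthonormalization formulas involve only sums, products, inner products, and divisions by the norms $\bnorm{\tilde{\psi}_j(k) - \sum_{i<j} \bscpro{\psi_i(k)}{\tilde{\psi}_j(k)}_{\HperT} \, \psi_i(k)}_{\HperT}$, each of which is analytic and strictly positive in $k$. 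Since division by a nowhere-vanishing analytic function preserves analyticity, the resulting orthonormal family $\{\psi_1, \ldots, \psi_{\sabs{\Index}}\}$ is analytic and still spans $\ran \pi_0(k) $ at every $k$.

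The only conceptual subtlety — and hence the only candidate for a ``main obstacle'' — is the appeal to the Oka principle, which requires the base $\R^3$ to be treated as a Stein manifold and $\pi_0$ to be analytic in the sense needed to define a genuine analytic (rather than merely real-analytic or continuous) vector bundle. Both are satisfied here: the previous lemma gives analyticity of $k \mapsto \pi_0(k)$ as an operator-valued map on $\R^3$, and $\R^3 \subset \C^3$ is a totally real submanifold of a Stein manifold, so the Oka principle applies in the form used throughout the literature on Bloch bundles (e.g.\ \cite{Panati:triviality_Bloch_bundle:2006}). No further input is required; Gram--Schmidt is a routine verification.
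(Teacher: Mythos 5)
Your proposal is correct and takes essentially the same route as the paper: the paper asserts the lemma follows "immediately" from contractibility of $\R^3$ (Atiyah) plus the Oka principle (Grauert), and your two-step reading — analytic global frame from Oka, then fiberwise Gram--Schmidt with respect to the $k$-independent $\bscpro{\cdot\,}{\cdot}_{\HperT}$, which preserves real-analyticity since the norms are nowhere vanishing — is exactly the detail the paper leaves to the reader.
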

The Bloch bundle, which is the relevant vector bundle here, is obtained from the trivial vector bundle $p:\mathcal{E}_{\R^3}(\pi_0) \longrightarrow {\R^3}$ by a standard quotient procedure: the group $\Gamma^{\ast}$ acts freely on the base space $\R^3$ by translation and for all $\gamma^{\ast}\in\Gamma^{\ast}$ we have a vector space isomorphism $\ran \pi_0(k) \longrightarrow \ran \pi_0(k-\gamma^{\ast})$ induced by the equivariance condition $\pi_0(k-\gamma^{\ast}) = \e^{+ \ii \gamma^* \cdot \hat{y}} \, \pi_0(k) \; \e^{- \ii \gamma^* \cdot \hat{y}}$. This $\Gamma^{\ast}$-action on the fibers is compatible with the $\Gamma^{\ast}$-action on the base space $\R^3$ via the bundle projection $p$. All this endows $p : \mathcal{E}_{\R^3}(\pi_0) \longrightarrow \R^3$ with the structure of a $\Gamma^{\ast}$-vector bundle in the terminology of \cite[Section~1.6]{Atiyah:K_theory:1994}. Now, since the action of $\Gamma^{\ast}$ is free on the base, we have according to \cite[Proposition~1.6.1]{Atiyah:K_theory:1994} a unique \emph{quotient vector bundle} with total space $\mathcal{E}_{\R^3}(\pi_0)/\Gamma^{\ast}$ and base space $\R^3/\Gamma^{\ast}$. With a little abuse of notation, we can identify the Brillouin zone $\BZ$ with the torus $\R^3/\Gamma^{\ast}$, and thus obtain the \emph{Bloch bundle} $p : \blochb(\pi_0) \longrightarrow \BZ$ where the total space is the quotient

\begin{align*}
	\blochb(\pi_0) := \mathcal{E}_{\R^3}(\pi_0) / \Gamma^{\ast}= \biggl ( \bigsqcup_{k \in \R^3} \ran \pi_0(k) \biggr ) / \Gamma^{\ast} 
	. 
\end{align*}
The Bloch bundle is an analytic hermitian vector bundle \emph{which is trivial if and only if the first Chern class vanishes}. For more details we refer to \cite{Panati:triviality_Bloch_bundle:2006,DeNittis_Lein:exponentially_loc_Wannier:2011}. 

\subsection{Pseudodifferential operators on weighted $L^2$-spaces} 
\label{setup:PsiDOs}
The main point of \cite{DeNittis_Lein:adiabatic_periodic_Maxwell_PsiDO:2013} was to explain how $\Mphys_{\lambda}^{\Zak} = \Op_{\lambda}(\pmb{\Msymb}_{\lambda})$ can be understood as the pseudodifferential operator associated to the semiclassical symbol 
\begin{align}
	\pmb{\mathcal{M}}_{\lambda}(r,k) &= \bigl ( S^{-2} \Weyl \Mper(\, \cdot \,) \bigr )(r,k)
	\label{setup:eqn:symbol_Mphys}
	\\
	&= S^{-2}(r) \; \Mper(k) + \lambda \, W \, \left (
	\begin{matrix}
		0 & - \ii \, \tau_{\eps}(r) \, \bigl ( \nabla_r \tau_{\eps} \bigr )^{\times}(r) \\
		+ \ii \, \tau_{\mu}(r) \, \bigl ( \nabla_r \tau_{\mu} \bigr )^{\times}(r) & 0 \\
	\end{matrix}
	\right )
	.
	\notag 
\end{align}
Formally, $\Op_{\lambda}(f)$ for a $\mathcal{B} \bigl ( L^2(\T^3,\C^6) \bigr )$-valued function is defined in the usual way as 
\begin{align*}
	\Op_{\lambda}(f) := \frac{1}{(2\pi)^3} \int_{\R^3} \dd r' \int_{\R^3} \dd k' \; (\Fourier_{\sigma} f)(r',k') \; \e^{- \ii (k' \cdot (\ii \lambda \nabla_k) - r' \cdot \hat{k})} 
\end{align*}
where the symplectic Fourier transform is given by 
\begin{align*}
	(\Fourier_{\sigma} f)(r',k') := \frac{1}{(2\pi)^3} \int_{\R^3} \dd r' \int_{\R^3} \dd k' \; \e^{+ \ii (k' \cdot r - r' \cdot k)} \, f(r,k) 
	. 
\end{align*}
For a rigorous discussion, we point the reader to \cite[Section~4]{DeNittis_Lein:adiabatic_periodic_Maxwell_PsiDO:2013} and references therein. Instead, we will content ourselves with giving only the definitions necessary for a self-contained presentation. 

Assume $\mathfrak{h}_1$ and $\mathfrak{h}_2$ are Banach or Hilbert spaces; in our applications, they stand for $L^2(\T^3,\C^6)$, $\HperT$ and $\domainT$. A function $f \in \Cont^{\infty} \bigl ( \R^6 , \mathcal{B}(\mathfrak{h}_1,\mathfrak{h}_2) \bigr )$ is called \emph{equivariant} iff 
\begin{align}
	f(r,k - \gamma^*) = \e^{+ \ii \gamma^* \cdot \hat{y}} \, f(r,k) \, \e^{- \ii \gamma^* \cdot \hat{y}} 
	\label{setup:eqn:equivariance}
\end{align}
holds for all $(r,k) \in \R^6$ and $\gamma^* \in \Gamma^*$; we will call it right- or left-covariant, respectively, iff 
\begin{align}
	f(r,k - \gamma^*) &= f(r,k) \, \e^{- \ii \gamma^* \cdot \hat{y}} 
	\label{setup:eqn:covariance}
	\\
	f(r,k - \gamma^*) &= \e^{+ \ii \gamma^* \cdot \hat{y}} \, f(r,k) 
	\notag 
\end{align}
holds instead. Operator-valued Hörmander symbols 
\begin{align*}
	\Hoer{m} \bigl ( \mathcal{B}(\mathfrak{h}_1,\mathfrak{h}_2) \bigr ) :& \negmedspace= \left \{ 
		f \in \Cont^{\infty} \bigl ( \R^6,\mathcal{B}(\mathfrak{h}_1,\mathfrak{h}_2) \bigr )
		\; \; \big \vert \; \; 
		\forall a , b \in \N_0^3 : \snorm{f}_{m , a , b} < \infty
		\right \}
\end{align*}
of order $m \in \R$ and type $\rho \in [0,1]$ are defined through the usual seminorms 
\begin{align*}
	\snorm{f}_{m , a , b} := \sup_{(r,k) \in \R^6} \left ( \sqrt{1 + k^2}^{\; -m + \sabs{\beta} \rho} \, \bnorm{\partial_r^{\alpha} \partial_k^{\beta} f(r,k)}_{\mathcal{B}(\mathfrak{h}_1,\mathfrak{h}_2)} \right )
	, 
	&&
	a , b \in \N_0^3 
	, 
\end{align*}
where $\N_0 := \N \cup \{ 0 \}$. The class of symbols $\Hoerm{m}{\rho} \bigl ( \mathcal{B}(\mathfrak{h}_1,\mathfrak{h}_2) \bigr )$ which satisfy the equivariance condition~\eqref{setup:eqn:equivariance} are denoted with $\Hoermeq{m}{\rho} \bigl ( \mathcal{B}(\mathfrak{h}_1,\mathfrak{h}_2) \bigr )$; similarly, $\Hoermper{m}{\rho} \bigl ( \mathcal{B}(\mathfrak{h}_1,\mathfrak{h}_2) \bigr )$ is the class of $\Gamma^*$-periodic symbols, $f(r,k - \gamma^*) = f(r,k)$. Lastly, we introduce the notion of 
\begin{definition}[Semiclassical symbols]\label{setup:defn:semiclassical_symbols}
	Assume $\mathfrak{h}_j$, $j = 1 , 2$, are Banach spaces as above.	A map $f : [0,\lambda_0) \longrightarrow \Hoereq{m} \bigl ( \mathcal{B}(\mathfrak{h}_1,\mathfrak{h}_2) \bigr )$, $\lambda \mapsto f_{\lambda}$, is called an \emph{equivariant semiclassical symbol} of order $m \in \R$ and weight $\rho \in [0,1]$, that is $f \in \SemiHoereq{m} \bigl ( \mathcal{B}(\mathfrak{h}_1,\mathfrak{h}_2) \bigr )$, iff there exists a sequence $\{ f_n \}_{n \in \N_0}$, $f_n \in \Hoereq{m - n \rho} \bigl ( \mathcal{B}(\mathfrak{h}_1,\mathfrak{h}_2) \bigr )$, such that for all $N \in \N_0$, one has 
	\begin{align*}
		\lambda^{-N} \left ( f_{\lambda} - \sum_{n = 0}^{N - 1} \lambda^n \, f_n \right ) \in \Hoereq{m - N \rho} \bigl ( \mathcal{B}(\mathfrak{h}_1,\mathfrak{h}_2) \bigr ) 
	\end{align*}
	uniformly in $\lambda$ in the sense that for any $a , b \in \N_0^3$, there exist constants $C_{a b} > 0$ so that 
	\begin{align*}
		\norm{f_{\lambda} - \sum_{n = 0}^{N - 1} \lambda^n \, f_n}_{m , a , b} \leq C_{a b} \, \lambda^N 
	\end{align*}
	holds for all $\lambda \in [0,\lambda_0)$. 
	
	The Fréchet space of \emph{periodic semiclassical symbols} $\SemiHoerper{m} \bigl ( \mathcal{B}(\mathfrak{h}_1,\mathfrak{h}_2) \bigr )$ is defined analogously. 
\end{definition}
Not only the modulated Maxwell operator $\Mphys_{\lambda}^{\Zak}$ can be seen as a $\Psi$DO \cite[Theorem~1.3]{DeNittis_Lein:adiabatic_periodic_Maxwell_PsiDO:2013}, also the projection $\hat{\pi}_0$ associated to separate families of bands coincides with the quantization of the symbol $\pi_0 \in S^0_{0,\mathrm{eq}} \bigl ( \mathcal{B}(\HperT) \bigr ) \cap S^1_{0,\mathrm{eq}} \bigl ( \mathcal{B}(\HperT,\domainT) \bigr )$ defined in Lemma~\ref{setup:lem:existence_pi_0}.

Another building block of pseudodifferential calculus is the Moyal product $\Weyl$ implicitly defined through $\Op_{\lambda}(f \Weyl g) := \Op_{\lambda}(f) \, \Op_{\lambda}(g)$. It defines a bilinear continuous map 
\begin{align*}
	\sharp : \Hoereq{m_1} \bigl ( \mathcal{B}(\mathfrak{h}_1,\mathfrak{h}_2) \bigr ) \times \Hoereq{m_2} \bigl ( \mathcal{B}(\mathfrak{h}_2,\mathfrak{h}_3) \bigr ) \longrightarrow \Hoereq{m_1 + m_2} \bigl ( \mathcal{B}(\mathfrak{h}_1,\mathfrak{h}_3) \bigr ) 
\end{align*}
which has an asymptotic expansion 
\begin{align}
	\sharp : &\Hoereq{m_1} \bigl ( \mathcal{B}(\mathfrak{h}_1,\mathfrak{h}_2) \bigr ) \times \Hoereq{m_2} \bigl ( \mathcal{B}(\mathfrak{h}_2,\mathfrak{h}_3) \bigr ) \longrightarrow \Hoereq{m_1 + m_2} \bigl ( \mathcal{B}(\mathfrak{h}_1,\mathfrak{h}_3) \bigr ) 
	\notag \\
	&f \Weyl g\, \asymp \sum_{n = 0}^{\infty} \lambda^n \, (f \Weyl g)_{(n)} 
	= f \, g - \lambda \, \tfrac{\ii}{2} \{ f , g \} + \order(\lambda^2) 
	\label{setup:eqn:Moyal_asymp_expansion}
\end{align}
where $\{ f , g \} := \sum_{j = 1}^3 \bigl ( \partial_{k_j} f \; \partial_{r_j} g - \partial_{r_j} f \; \partial_{k_j} g \bigr )$ is the usual Poisson bracket. Each term $(f \Weyl g)_{(n)}(r,k)$ is a sum of products of derivatives of $f$ and $g$ evaluated at $(r,k)$. 

For technical reasons, we need to distinguish between the oscillatory integral $f \Weyl g$ and the formal sum $\sum_{n = 0}^{\infty} \lambda^n \, (f \Weyl g)_{(n)}$ when constructing the local Moyal resolvent. To simplify notation though, we will denote the formal sum on the right-hand side with the same symbol $f \Weyl g$. Given that the terms $(f \Weyl g)_{(n)}$ are purely local, the formal sum $\sum_{n = 0}^{\infty} \lambda^n \, (f \Weyl g)_{(n)}$ also makes sense if $f$ and $g$ are defined only on some common, open subset of $\R^6$. 

\subsection{The $\lambda$-independent auxiliary representation} 
\label{setup:aux_rep}
As we have explained in Section~\ref{setup:Zak}, the Hilbert space $\Zak \Hil_{\lambda}$, its scalar product and its norm depend on $\lambda$. Hence, we will represent the problem on the $\lambda$-independent Hilbert space $\Zak \Hper$ of the periodic Maxwell operator using the unitary 
\begin{align}
	S(\ii \lambda \nabla_k) : \Zak \Hil_{\lambda} \longrightarrow \Zak \Hper 
	. 
	\label{setup:eqn:S_unitary}
\end{align}
This \emph{auxiliary representation} has the advantage that instead of dealing with $\lambda$-dependent operators on $\lambda$-dependent spaces, we work in a setting where the $\lambda$-dependence lies with the operators alone. The unitarity of $S(\ii \lambda \nabla_k)$ implies all norm bounds derived in the auxiliary representation carry over to the physical representation. Moreover, according to the arguments in \cite[Section~4.3]{DeNittis_Lein:adiabatic_periodic_Maxwell_PsiDO:2013} this operator can also be seen as $\Psi$DO associated to the symbol $S \in \Hoermeq{0}{0} \bigl ( \mathcal{B}(\HperT) \bigr )$. 

In this representation, the slowly modulated Maxwell operator 
\begin{align}
	M_{\lambda} :& \negmedspace= S(\ii \lambda \nabla_k) \, \Mphys_{\lambda}^{\Zak} \, S(\ii \lambda \nabla_k)^{-1} 
	\label{sapt:eqn:Maxwell_lambda_indep_rep}
	\\
	&= \tau(\ii \lambda \nabla_k) \, \Mper^{\Zak} 
	+ \notag \\
	&\qquad 
	+ \lambda \, \tau(\ii \lambda \nabla_k) \, W \, \left (
	\begin{matrix}
		0 & + \ii \bigl ( \nabla_r \ln \tau_{\mu} \bigr )^{\times}(\ii \lambda \nabla_k) \\
		- \ii \bigl ( \nabla_r \ln \tau_{\eps} \bigr )^{\times}(\ii \lambda \nabla_k) & 0 \\
	\end{matrix}
	\right )
	\notag
\end{align}
splits into a sum of two terms, the leading-order term $M_0$ is a simple rescaling of the periodic Maxwell operator by the deformation function 
\begin{align*}
	\tau(r) := \tau_{\eps}(r) \, \tau_{\mu}(r)
\end{align*}
while the first-order term $M_1$ is independent of $- \ii \nabla_y + \hat{k}$. The $M_{\lambda} = \Op_{\lambda}(\Msymb_{\lambda})$ all share the same $\lambda$-independent domain \cite[Lemma~2.6]{DeNittis_Lein:adiabatic_periodic_Maxwell_PsiDO:2013} 
\begin{align*}
	\Zak \domain \cong L^2(\BZ) \otimes \domainT 
\end{align*}
and can be seen as $\Psi$DOs associated to the $\mathcal{B}(\domainT,\HperT)$-valued function 
\begin{align}
	\Msymb_{\lambda} &
	= \tau \, \Mper(\, \cdot \,) 
	- \lambda \, \tau \, W \, \left (
	\begin{matrix}
		0 & \tfrac{\ii}{2} \, \bigl ( \nabla_r \ln \nicefrac{\tau_{\eps}}{\tau_{\mu}} \bigr )^{\times} \\
		\tfrac{\ii}{2} \, \bigl ( \nabla_r \ln \nicefrac{\tau_{\eps}}{\tau_{\mu}} \bigr )^{\times} & 0 \\
	\end{matrix}
	\right )
	. 
	\label{setup:eqn:M_lambda_PsiDO}
\end{align}
The auxiliary representation also simplifies some technical arguments involving $\Psi$DOs. For instance, symbols of selfadjoint pseudodifferential operators $\Op_{\lambda}(f)^* = \Op_{\lambda}(f) \in \mathcal{B}(\Zak \Hper)$ are necessarily selfadjoint-operator-valued, 
\begin{align*}
	f(r,k)^* = f(r,k) \in \mathcal{B}(\HperT)
	. 
\end{align*}
Consequently, the unitarity of $S(\ii \lambda \nabla_k) = \Op_{\lambda}(S)$ implies 
\begin{align*}
	\bigl ( S(\ii \lambda \nabla_k)^{-1} \, \Op_{\lambda}(f) \, S(\ii \lambda \nabla_k) \bigr )^* &= S(\ii \lambda \nabla_k)^{-1} \, \Op_{\lambda}(f) \, S(\ii \lambda \nabla_k) 
	\\
	&
	= \Op_{\lambda} \bigl ( S^{-1} \Weyl f \Weyl S \bigr ) 
	\in \mathcal{B}(\Zak \Hil_{\lambda})
\end{align*}
is also a selfadjoint $\Psi$DO with symbol $S^{-1} \Weyl f \Weyl S$. For a detailed discussion of pseudo\-differential theory in this setting, we refer to \cite[Section~4]{DeNittis_Lein:adiabatic_periodic_Maxwell_PsiDO:2013}. 
%
\section{Space-adiabatic perturbation theory} 
\label{sapt}
The key technical tool, space-adiabatic perturbation theory \cite{PST:sapt:2002} exploits a common structure shared by a wide array of multiscale systems, \eg \cite{PST:effective_dynamics_Bloch:2003,Teufel:adiabatic_perturbation_theory:2003,PST:Born-Oppenheimer:2007,Tenuta_Teufel:adiabatic_QED:2008,DeNittis_Lein:Bloch_electron:2009,Fuerst_Lein:scaling_limits_Dirac:2008}: 
\begin{enumerate}[(i)]
	\item \emph{Slow and fast degrees of freedom:} the physical Hilbert space $\Hil_{\lambda}$ is decomposed using a sequence of unitaries into $L^2(\BZ) \otimes \HperT$ (\cf equation~\eqref{intro:eqn:spaces_diagram}) in which the unperturbed Maxwell operator is block-diagonal. The commutator of the fast variables $\bigl ( \hat{y} , - \ii \nabla_y \bigr )$ is $\order(1)$ while that of the slow variables $\bigl ( \ii \lambda \nabla_k , \hat{k} \bigr )$ is $\order(\lambda)$, and hence, the names. 
	\item \emph{A small parameter $\lambda \ll 1$:} the adiabatic parameter quantifies on which scale the photonic crystal is modulated, measured in units of the lattice constant. 
	\item \emph{The relevant part of the spectrum} $\Omega_{\mathrm{rel}}$ consisting of bands which do not intersect or merge with the remaining bands (\cf Assumption~\ref{intro:assumption:gap_condition}). 
\end{enumerate}
This technique relies on pseudodifferential calculus for systematic perturbation expansions in $\lambda$ and yields the last column and lower row in \eqref{intro:eqn:spaces_diagram} as well as equation \eqref{intro:eqn:approximate_dynamics}: 
\begin{theorem}[Effective dynamics: physical representation]\label{sapt:thm:eff_dyn}
	Suppose Assumption~\ref{intro:assumption:eps_mu} is satisfied, $\Omega_{\mathrm{rel}} = \{ \omega_n \}_{n \in \Index}$ satisfies the Gap Condition~\ref{intro:assumption:gap_condition} and the Bloch bundle $\blochb(\pi_0)$ is trivial. Pick an arbitrary orthogonal rank-$\sabs{\Index}$ projection $\piref \in \mathcal{B}(\HperT)$ and define $\Piref = \id_{L^2(\BZ)} \otimes \piref$. 
	
	Then there exist 
	\begin{enumerate}[(1)]
		\item an orthogonal projection $\pmb{\Pi}_{\lambda} \in \mathcal{B}(\Hil_{\lambda})$, 
		\item a unitary map $\mathbf{U}_{\lambda}$ which intertwines $\pmb{\Pi}_{\lambda}^{\Zak} = \Zak \, \pmb{\Pi}_{\lambda} \, \Zak^{-1}$ and $\Piref$, and 
		\item a selfadjoint operator $\Op_{\lambda}(\Msymb_{\eff}) \in \mathcal{B}(\Zak \Hper)$
	\end{enumerate}
	such that 
	\begin{align*}
		\bnorm{\bigl [ \Mphys_{\lambda} , \pmb{\Pi}_{\lambda} \bigr ]}_{\mathcal{B}(\Hil_{\lambda})} = \order(\lambda^{\infty})
	\end{align*}
	and 
	\begin{align}
		\norm{\Bigl ( \e^{- \ii t \Mphys_{\lambda}} - \Zak^{-1} \, \mathbf{U}_{\lambda}^* \, \e^{- \ii t \Op_{\lambda}(\Msymb_{\eff})} \, \mathbf{U}_{\lambda} \, \Zak \Bigr ) \, \pmb{\Pi}_{\lambda}}_{\mathcal{B}(\Hil_{\lambda})} &= \order \bigl ( \lambda^{\infty} ( 1 + \abs{t}) \bigr ) 
		. 
		\label{sapt:eqn:effective_dynamics_approximate_phys}
	\end{align}
	The effective Maxwell operator is the quantization of the $\Gamma^*$-periodic symbol 
	\begin{align}
		\Msymb_{\eff} :& \negmedspace= \piref \, \mathbf{u}_{\lambda} \Weyl \pmb{\Msymb}_{\lambda} \Weyl \mathbf{u}_{\lambda}^* \, \piref 
		\asymp \sum_{n = 0}^{\infty} \lambda^n \, \Msymb_{\eff \, n} 
		\in \SemiHoermper{0}{0} \bigl ( \mathcal{B}(\HperT) \bigr ) 
		\label{sapt:eqn:Meff_physical_rep}
	\end{align}
	which is defined in terms of equations \eqref{setup:eqn:symbol_Mphys} and \eqref{sapt:eqn:u_ubold}, and whose asymptotic expansion can be computed to any order in $\lambda$. 
\end{theorem}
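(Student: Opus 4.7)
The plan is to mimic the space-adiabatic construction of Panati–Spohn–Teufel, but carried out in the $\lambda$-independent auxiliary representation where $M_\lambda = \Op_\lambda(\Msymb_\lambda)$ lives on the common domain $\Zak \domain$. Everything will be constructed at the level of semiclassical symbols with the help of the asymptotic expansion \eqref{setup:eqn:Moyal_asymp_expansion} of the Moyal product; then quantization produces the operators claimed in the statement, and a Duhamel argument together with the unitary $V_\lambda = U_\lambda \, S(\ii \lambda \nabla_k) \, \Zak$ transfers the result to $\Hil_\lambda$.

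First I would build recursively a semiclassical symbol $\pi_\lambda \asymp \sum_{n \geq 0} \lambda^n \pi_n \in \SemiHoereq{0} \bigl ( \mathcal{B}(\HperT) \bigr )$ whose principal part is the spectral projection $\pi_0$ of Lemma~\ref{setup:lem:existence_pi_0} and which satisfies the three Moyal conditions $\pi_\lambda \Weyl \pi_\lambda = \pi_\lambda$, $\pi_\lambda^* = \pi_\lambda$ and $\Msymb_\lambda \Weyl \pi_\lambda - \pi_\lambda \Weyl \Msymb_\lambda = 0$ order by order. At order $n$ one splits the required correction $\pi_n$ into diagonal and off-diagonal parts with respect to $\pi_0$: the diagonal part is forced by the projection identity and can be read off from the lower-order terms, while the off-diagonal part is determined by the commutation equation via the bounded pseudo-inverse of $\mathrm{ad}_{\Mper(k)}$ on $(1-\pi_0(k))\cdot\pi_0(k)$, which exists uniformly in $k$ thanks to the Gap Condition~\ref{intro:assumption:gap_condition} (here we also need that $\Omega_{\mathrm{rel}}$ avoids the ground state bands so that $\Mper(k)$ restricted to the complementary spectral subspace is invertible in the relevant sense). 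Once $\pi_\lambda$ is obtained as a Borel resummation, Weyl quantization and a standard resummation argument produce a genuine orthogonal projection $\Pi_\lambda$ on $\Zak\Hper$ satisfying $\bnorm{[M_\lambda,\Pi_\lambda]} = \order(\lambda^\infty)$; pulling it back by $S(\ii\lambda\nabla_k)\Zak$ gives $\pmb{\Pi}_\lambda$ and the commutator bound on $\Hil_\lambda$.

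Next I would construct the intertwining unitary symbol $u_\lambda \asymp \sum_{n\geq 0} \lambda^n u_n$ satisfying $u_\lambda \Weyl u_\lambda^* = 1 = u_\lambda^* \Weyl u_\lambda$ and $u_\lambda \Weyl \pi_\lambda \Weyl u_\lambda^* = \piref$. The principal symbol $u_0(k)$ is the partial isometry mapping $\ran \pi_0(k)$ onto $\ran\piref$ built from an analytic global orthonormal frame of the Bloch bundle $\blochb(\pi_0)$; such a frame exists precisely because we assume that bundle to be (analytically, hence smoothly) trivial. The higher-order corrections $u_n$ are again obtained by the usual PST algebraic recursion, inverting the appropriate linear equation on the algebra of symbols. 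Quantizing and resumming yields a unitary operator $U_\lambda$ on $\Zak\Hper$ whose pre-composition with $\Zak$ gives the desired $\mathbf{U}_\lambda$. The effective symbol is then defined by \eqref{sapt:eqn:Meff_physical_rep}; since $\piref$ is constant the periodicity is immediate and one obtains $\Msymb_{\eff} \in \SemiHoermper{0}{0}(\mathcal{B}(\HperT))$.

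Finally, to establish the dynamical bound \eqref{sapt:eqn:effective_dynamics_approximate_phys} I would apply Duhamel to the difference $\mathbf{U}_\lambda \, \e^{-\ii t M_\lambda} \, \mathbf{U}_\lambda^* - \e^{-\ii t \Op_\lambda(\Msymb_{\eff})}$ on $\ran\Piref$: the derivative produces the commutator $\mathbf{U}_\lambda M_\lambda \mathbf{U}_\lambda^* - \Op_\lambda(\Msymb_{\eff})$ multiplied by $\Piref$, which by construction vanishes as an operator from $\ran\Piref$ to itself up to $\order(\lambda^\infty)$ in norm, producing the linear-in-$t$ error. Conjugating back with $\Zak$ and $S(\ii\lambda\nabla_k)$ and using the unitarity of these maps gives the bound on $\Hil_\lambda$. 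The main technical obstacle I anticipate is controlling the Moyal recursion for an \emph{unbounded}, $\domainT$-valued principal symbol $\Mper(k)$: at each order one must verify that the candidate $\pi_n$, $u_n$ and $\Msymb_{\eff\, n}$ lie in the correct Hörmander classes with the right target space $\mathcal{B}(\HperT)$ or $\mathcal{B}(\domainT,\HperT)$, that the pseudo-inverse of $\mathrm{ad}_{\Mper(k)}$ preserves equivariance and smoothness uniformly in $k$, and that the Borel resummation can be carried out inside these anisotropic symbol classes — this is exactly the point at which the pseudodifferential framework developed in \cite{DeNittis_Lein:adiabatic_periodic_Maxwell_PsiDO:2013} must be invoked in full strength.
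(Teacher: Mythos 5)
Your proposal follows the same overall architecture as the paper's proof — construct the superadiabatic projection, the intertwining unitary, and the effective symbol in the $\lambda$-independent auxiliary representation, then pull back to $\Hil_{\lambda}$ via $S(\ii\lambda\nabla_k)\,\Zak$ and close with Duhamel — but it differs in the key intermediate step of building the projection symbol $\pi_{\lambda}$. You use the "algebraic" recursion of the original space-adiabatic scheme: at each order split $\pi_n$ into diagonal and off-diagonal parts with respect to $\pi_0$, reading the diagonal part off the Moyal idempotency condition and solving the off-diagonal part by inverting $\mathrm{ad}_{\Mper(k)}$ on the off-diagonal block. The paper instead constructs $\pi_n$ as a Cauchy contour integral of the local Moyal resolvent $R_{\lambda}(z)$ from Lemma~\ref{sapt:lem:existence_Moyal_resolvent}, i.e.\ $\pi_n(r,k) = \tfrac{\ii}{2\pi}\int_{\Lambda}\dd z\,(R_n(z))(r,k)$. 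Both routes are legitimate and widely used, but the Moyal-resolvent route is better adapted to the present problem: the principal symbol $\Msymb_0(r,k)=\tau(r)\,\Mper(k)$ is \emph{unbounded}, and in particular $\partial_r\Msymb_0 = (\partial_r\tau)\,\Mper(k)$ is an unbounded $\mathcal{B}(\domainT,\HperT)$-valued coefficient entering the Poisson brackets in the Moyal expansion. The paper controls this through the explicit resolvent identities (e.g.\ $\partial_{r_j}R_0(z) = -(\partial_{r_j}\ln\tau)\bigl(R_0(z)+z\,R_0(z)^2\bigr)$, exploiting that $\tau$ is scalar) and the $\mathcal{B}(\HperT,\domainT)$-bounds in Lemma~\ref{sapt:lem:existence_Moyal_resolvent}~(iii), which immediately place $\pi_n$ in $S^0_{0,\mathrm{eq}}(\mathcal{B}(\HperT))\cap S^1_{0,\mathrm{eq}}(\mathcal{B}(\HperT,\domainT))$. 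Your diagonal/off-diagonal recursion would have to re-derive precisely these domain and symbol-class bounds "by hand" at each order, and your closing paragraph acknowledges this as the main obstacle without resolving it; it is the one place where your route is genuinely harder and where a careful write-up would need an analogue of the paper's resolvent estimates. The remaining steps — $u_0$ from an analytic frame of the trivial Bloch bundle, the algebraic recursion for $u_n$, the definition of $\Msymb_{\eff}$, and the Duhamel argument transferred back via unitarity of $S(\ii\lambda\nabla_k)$ and $\Zak$ — coincide with the paper's Propositions~\ref{sapt:prop:unitary} and \ref{sapt:prop:effective_Maxwellian} and their proofs.
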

The proof is a combination of a change of representation with a straightforward adaption of the arguments in \cite{PST:effective_dynamics_Bloch:2003}; To streamline the presentation, we have moved the proofs of these intermediate results to Section~\ref{technical}. 

Space-adiabatic perturbation theory uses pseudodifferential theory to construct 
\begin{align*}
	\pmb{\Pi}_{\lambda}^{\Zak} := \Zak \, \pmb{\Pi}_{\lambda} \, \Zak^{-1} 
	= \Op_{\lambda}(\pmb{\pi}_{\lambda}) + \order_{\norm{\cdot}}(\lambda^{\infty})
\end{align*}
systematically in BFZ representation as the quantization of an operator-valued symbol $\pmb{\pi}_{\lambda} \in \SemiHoermeq{0}{0} \bigl ( \mathcal{B}(\HperT) \bigr )$. However, due to the advantages outlined in Section~\ref{setup:aux_rep} it is more convenient to equivalently construct the \emph{superadiabatic projection}
\begin{align*}
	\Pi_{\lambda} &= \Op_{\lambda}(\pi_{\lambda}) + \order_{\norm{\cdot}}(\lambda^{\infty}) 
	\\
	&= S(\ii \lambda \nabla_k) \, \pmb{\Pi}_{\lambda}^{\Zak} \, S(\ii \lambda \nabla_k)^{-1} 
	\\
	&= S(\ii \lambda \nabla_k) \, \Op_{\lambda}(\pmb{\pi}_{\lambda}) \, S(\ii \lambda \nabla_k)^{-1} + \order_{\norm{\cdot}}(\lambda^{\infty}) 
	\in \mathcal{B}(\Zak \Hper) 
\end{align*}
and the \emph{intertwining unitary} 
\begin{align*}
	U_{\lambda} &= \Op_{\lambda}(u_{\lambda}) + \order_{\norm{\cdot}}(\lambda^{\infty})
	\\
	&
	= \mathbf{U}_{\lambda} \, S(\ii \lambda \nabla_k)^{-1}
	= \Op_{\lambda}(u_{\lambda}) + \order_{\norm{\cdot}}(\lambda^{\infty})
	\in \mathcal{B}(\Zak \Hper)
\end{align*}
in the auxiliary representation instead where bold and non-bold symbols are related by $S$, 
\begin{align}
	\pi_{\lambda} &= S \Weyl \pmb{\pi}_{\lambda} \Weyl S^{-1} 
	, 
	\qquad  
	u_{\lambda} = \mathbf{u}_{\lambda} \Weyl S^{-1}
	, 
	\label{sapt:eqn:u_ubold}
	\\
	\Msymb_{\eff} &= \piref \, u_{\lambda} \Weyl \Msymb_{\lambda} \Weyl u_{\lambda}^* \, \piref 
	. 
	\notag 
\end{align}
The error estimates we derive in the $\lambda$-independent representation carry over \emph{identically} to the representation on $\Zak \Hil_{\lambda}$ due to the unitarity of $S(\ii \lambda \nabla_k)$. 
\begin{proposition}[Superadiabatic projection]\label{sapt:prop:projection}
	Suppose Assumption~\ref{intro:assumption:eps_mu} holds and assume $\Omega_{\mathrm{rel}} = \{ \omega_n \bigr \}_{n \in \Index}$ satisfies the Gap Condition~\ref{intro:assumption:gap_condition}. 
	
	Then there exists an orthogonal projection 
	\begin{align*}
		\Pi_{\lambda} = \Op_{\lambda}(\pi_{\lambda}) + \order_{\norm{\cdot}}(\lambda^{\infty})
	\end{align*}
	that $\order(\lambda^{\infty})$-almost commutes in norm with the Maxwell operator, 
	\begin{align*}
		\bigl [ M_{\lambda} \, , \, \Pi_{\lambda} \bigr ] = \order_{\norm{\cdot}}(\lambda^{\infty}) 
		, 
	\end{align*}
	and is $\order(\lambda^{\infty})$-close in norm to the $\Psi$DO associated to the equivariant semiclassical symbol 
	\begin{align*}
		\pi_{\lambda} \asymp \sum_{n = 0}^{\infty} \lambda^n \, \pi_n 
		\in \SemiHoermeq{0}{0} \bigl ( \mathcal{B}(\HperT) \bigr ) \cap \SemiHoermeq{1}{0} \bigl ( \mathcal{B}(\HperT,\domainT) \bigr ) 
		. 
	\end{align*}
	The principal part $\pi_0$ coincides with the projection constructed in Lemma~\ref{setup:lem:existence_pi_0}. 
\end{proposition}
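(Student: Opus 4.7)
The plan is to adapt the standard space-adiabatic construction of Panati-Spohn-Teufel \cite{PST:effective_dynamics_Bloch:2003} to the present setting, working throughout in the auxiliary representation where $M_{\lambda} = \Op_{\lambda}(\Msymb_{\lambda})$ acts on the $\lambda$-independent Hilbert space $\Zak \Hper$, and where the Moyal calculus developed in \cite{DeNittis_Lein:adiabatic_periodic_Maxwell_PsiDO:2013} is available. The construction proceeds in three stages: (i) a formal order-by-order construction of $\pi_{\lambda} \asymp \sum_{n \geq 0} \lambda^n \, \pi_n$ at the level of symbols, (ii) Borel resummation to obtain an honest equivariant semiclassical symbol, and (iii) passage from the almost-projection $\Op_{\lambda}(\pi_{\lambda})$ to a true orthogonal projection $\Pi_{\lambda}$ via a Riesz integral.

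For (i), I impose simultaneously the Moyal projection relation $\pi_{\lambda} \Weyl \pi_{\lambda} = \pi_{\lambda}$ and the Moyal commutation relation $\Msymb_{\lambda} \Weyl \pi_{\lambda} - \pi_{\lambda} \Weyl \Msymb_{\lambda} = 0$, with $\pi_0$ fixed by Lemma~\ref{setup:lem:existence_pi_0}. At each order $n \geq 1$, I split $\pi_n = \pi_n^D + \pi_n^{OD}$ according to the reference decomposition induced by $\pi_0$, namely $\pi_n^D = \pi_0 \, \pi_n \, \pi_0 + (1 - \pi_0) \, \pi_n \, (1 - \pi_0)$ and $\pi_n^{OD} = \pi_n - \pi_n^D$. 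Collecting the $\lambda^n$-contribution to $\pi_{\lambda} \Weyl \pi_{\lambda} - \pi_{\lambda}$ determines $\pi_n^D$ algebraically in terms of $\pi_0, \ldots, \pi_{n-1}$; collecting the $\lambda^n$-contribution to the Moyal commutator yields a Sylvester-type equation for $\pi_n^{OD}$ of the form $\Msymb_0 \, \pi_n^{OD} - \pi_n^{OD} \, \Msymb_0 = F_n$ whose right-hand side $F_n$ is already known. I solve this equation by the Cauchy-Dunford formula on an open neighborhood of each contiguous family $\Omega_{\alpha}$, using the \emph{local} gap provided by Assumption~\ref{intro:assumption:gap_condition} to draw a bounded closed contour encircling $\Omega_{\alpha}$ and separating it from the rest of the spectrum, and using the analyticity of $k \mapsto \Mper(k)$ from \cite[Proposition~3.3]{DeNittis_Lein:adiabatic_periodic_Maxwell_PsiDO:2013} to invert $\Msymb_0(r,k) - z$ smoothly in $(r,k,z)$. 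Equivariance of $\pi_n$ is inherited at each step because both the Moyal product and $\Msymb_{\lambda}$ preserve equivariance, and the standard Hörmander estimates yield $\pi_n \in \Hoermeq{0}{0} \bigl ( \mathcal{B}(\HperT) \bigr ) \cap \Hoermeq{1}{0} \bigl ( \mathcal{B}(\HperT, \domainT) \bigr )$. A Borel resummation then produces a genuine $\pi_{\lambda} \in \SemiHoermeq{0}{0} \bigl ( \mathcal{B}(\HperT) \bigr ) \cap \SemiHoermeq{1}{0} \bigl ( \mathcal{B}(\HperT, \domainT) \bigr )$ whose asymptotic expansion is $\sum_n \lambda^n \, \pi_n$. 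Quantizing yields $\tilde\Pi_{\lambda} := \Op_{\lambda}(\pi_{\lambda})$, a bounded selfadjoint operator satisfying $\tilde\Pi_{\lambda}^2 - \tilde\Pi_{\lambda} = \order_{\norm{\cdot}}(\lambda^{\infty})$ and $\bigl [ M_{\lambda}, \tilde\Pi_{\lambda} \bigr ] = \order_{\norm{\cdot}}(\lambda^{\infty})$ by the Moyal composition formula~\eqref{setup:eqn:Moyal_asymp_expansion}. For (iii), the spectrum of $\tilde\Pi_{\lambda}$ lies in an $\order(\lambda^{\infty})$-neighborhood of $\{0, 1\}$, so
\begin{align*}
	\Pi_{\lambda} := \frac{1}{2 \pi \ii} \oint_{\abs{z - 1} = 1/2} \dd z \; (z - \tilde\Pi_{\lambda})^{-1}
\end{align*}
is a genuine orthogonal projection $\order(\lambda^{\infty})$-close in norm to $\tilde\Pi_{\lambda}$, and thus still almost-commutes with $M_{\lambda}$ to the same order.

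The main technical obstacle is the combination of two features not present in simple Schrödinger-type settings: the symbol $\Msymb_{\lambda}(r,k)$ takes values in unbounded operators on $\HperT$ with fixed domain $\domainT$, and the Gap Condition~\ref{intro:assumption:gap_condition} is only \emph{local}. This forces the inversion of $\Msymb_0(r,k) - z$ to be performed with contours drawn in a $k$-dependent way, and one must track symbol orders carefully because each term $(f \Weyl g)_{(n)}$ in the asymptotic expansion~\eqref{setup:eqn:Moyal_asymp_expansion} differentiates $\Mper(k)$ and thereby shifts the Hörmander class. The requisite estimates for the local Moyal resolvent in the unbounded-operator-valued setting have already been worked out in \cite{DeNittis_Lein:adiabatic_periodic_Maxwell_PsiDO:2013}, and the exclusion of ground state bands via Assumption~\ref{intro:assumption:gap_condition}~(iii) keeps the contours uniformly bounded away from the essential spectrum at $0$ generated by the unphysical modes in $\Gphys$; this is what makes the order-by-order construction go through without further regularization.
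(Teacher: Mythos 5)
Your proof is correct, but it takes a genuinely different technical route from the one in the paper. You construct $\pi_n$ recursively by splitting into diagonal and off-diagonal parts with respect to $\pi_0$: the Moyal projection condition $\pi_\lambda \Weyl \pi_\lambda = \pi_\lambda$ algebraically fixes $\pi_n^D$, and the Moyal commutation condition yields a Sylvester equation $[\Msymb_0, \pi_n^{OD}] = F_n$ which you solve via a Cauchy--Dunford integral using the local gap. The paper instead bypasses the diagonal/off-diagonal split entirely and \emph{directly} defines $\pi_n(r,k) = \frac{\ii}{2\pi}\oint_\Lambda \bigl(R_n(z)\bigr)(r,k)\,\dd z$ as the Cauchy integral of the $n$-th coefficient of the local Moyal resolvent $R_\lambda(z) \asymp \sum_n \lambda^n R_n(z)$, whose order-by-order construction, symmetry property $R_\lambda(z)^* = R_\lambda(\bar z)$, and $\mathcal{B}(\HperT)$- and $\mathcal{B}(\HperT,\domainT)$-norm estimates are packaged in Lemma~\ref{sapt:lem:existence_Moyal_resolvent}. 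The projection property, selfadjointness, Moyal commutation with $\Msymb_\lambda$, and the Hörmander-class membership then all follow at once from the resolvent identities, which is somewhat cleaner; your recursive route is algebraically more transparent but requires you to separately verify that the right-hand side $F_n$ of the Sylvester equation is indeed off-diagonal at each order so that the equation is solvable (a standard but necessary check which you omit). Both approaches are legitimate variants of the PST machinery and lead to the same $\pi_\lambda$ and the same Riesz-integral upgrade to a true projection $\Pi_\lambda$.

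Two small points. First, the estimates for the local Moyal resolvent in the unbounded-operator-valued setting are \emph{not} in \cite{DeNittis_Lein:adiabatic_periodic_Maxwell_PsiDO:2013}; they are Lemma~\ref{sapt:lem:existence_Moyal_resolvent} of the present paper, which is proved in Section~\ref{technical:sapt} precisely because it is new here. Second, the paper reduces to the case of a single contiguous family $\Omega_\alpha$ and then sums the resulting projections; your construction handles all of $\Omega_{\mathrm{rel}}$ at once, which is equivalent but requires your contours to consist of one component per $\Omega_\alpha$ (worth stating, since the Gap Condition only provides a gap around each $\Omega_\alpha$ individually).
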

The triviality of the Bloch bundle is \emph{not} needed for the construction of $\Pi_{\lambda}$. But it enters crucially in the construction of the intertwining unitary $U_{\lambda} = \Op_{\lambda}(u_0) + \order_{\norm{\cdot}}(\lambda)$: The triviality of $\blochb(\pi_0)$ is necessary and sufficient for the existence of $u_0$ as a right-covariant symbol. However, unlike periodic Schrödinger operators where in the absence of strong magnetic fields the Bloch bundle is automatically trivial \cite{Panati:triviality_Bloch_bundle:2006,DeNittis_Lein:exponentially_loc_Wannier:2011}, the situation for periodic Maxwell operators is more delicate; we will explore this aspect further in Section~\ref{physical:real:perturbed}. 
\begin{proposition}[Intertwining unitary]\label{sapt:prop:unitary}
	Suppose we are in the setting of Theorem~\ref{sapt:thm:eff_dyn}. Then there exists a unitary map $U_{\lambda}: \Zak \Hper \longrightarrow \Zak \Hper$ given by
	\begin{align*}
		U_{\lambda} = \Op_{\lambda}(u_{\lambda}) + \order_{\norm{\cdot}}(\lambda^{\infty}) 
	\end{align*}
	which is $\order(\lambda^{\infty})$-close in norm to the quantization of a right-covariant symbol (\cf \eqref{setup:eqn:covariance})
	\begin{align*}
		u_{\lambda} &\in \SemiHoerm{0}{0} \bigl ( \mathcal{B} (\HperT) \bigr )
		, 
	\end{align*}
	and intertwines $\Pi_{\lambda}$ from Proposition~\ref{sapt:prop:projection} with the reference projection $\Piref = \id_{L^2(\BZ)} \otimes \piref$, 
	\begin{align}
		U_{\lambda} \, \Pi_{\lambda} \, U_{\lambda}^* = \Piref 
		. 
		\label{sapt:eqn:intertwining_property}
	\end{align}
\end{proposition}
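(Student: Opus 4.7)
The plan is to construct $u_\lambda$ order-by-order as a right-covariant semiclassical symbol $u_\lambda \asymp \sum_{n \geq 0} \lambda^n \, u_n \in \SemiHoerm{0}{0} \bigl ( \mathcal{B}(\HperT) \bigr )$ satisfying, at the level of formal Moyal products,
\begin{align*}
u_\lambda \Weyl u_\lambda^* = \id , \qquad u_\lambda \Weyl \pi_\lambda \Weyl u_\lambda^* = \piref .
\end{align*}
After Borel-resumming the resulting formal series one obtains an operator $U_\lambda' := \Op_\lambda(u_\lambda)$ which is unitary and intertwining up to $\order_{\norm{\cdot}}(\lambda^\infty)$; two short corrections then upgrade this to an \emph{exact} unitary intertwining $\Pi_\lambda$ with $\Piref$.

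\textbf{Principal symbol.} The triviality of the Bloch bundle $\blochb(\pi_0) \to \BZ$ enters precisely here. By the Lemma preceding this subsection, triviality (together with the Oka principle) provides a global analytic orthonormal frame $\{ \psi_j(k) \}_{j = 1}^{\sabs{\Index}}$ of $\ran \pi_0(k)$ on $\R^3$, which may be arranged to transform equivariantly as $\psi_j(k - \gamma^*) = \e^{+ \ii \gamma^* \cdot \hat{y}} \, \psi_j(k)$. Fixing an orthonormal basis $\{ \chi_n \}_{n \in \Index}$ of $\ran \piref$, extending both families to orthonormal bases of $\HperT$, and setting $u_0(k) := \sum_{n \in \Index} \sopro{\chi_n}{\psi_n(k)}$ plus its analogue on orthogonal complements produces a smooth unitary on $\HperT$ with $u_0(k) \, \pi_0(k) \, u_0(k)^* = \piref$. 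Right-covariance \eqref{setup:eqn:covariance} follows from the $\hat{y}$-independence of the $\chi_n$ combined with the equivariance of the $\psi_j$.

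\textbf{Higher-order corrections.} Writing $u_\lambda = v_\lambda \Weyl u_0$ with $v_\lambda \asymp \id + \sum_{n \geq 1} \lambda^n \, v_n$ reduces the two requirements to
\begin{align*}
v_\lambda \Weyl v_\lambda^* = \id , \qquad v_\lambda \Weyl \tilde{\pi}_\lambda \Weyl v_\lambda^* = \piref ,
\end{align*}
where $\tilde{\pi}_\lambda := u_0 \Weyl \pi_\lambda \Weyl u_0^*$ has principal part $\piref$. At order $\lambda^n$ these become a pair of linear equations for $v_n$ whose inhomogeneities depend only on $v_0 , \ldots , v_{n - 1}$ via the Moyal expansion \eqref{setup:eqn:Moyal_asymp_expansion}: the $\piref$-off-diagonal block of $v_n$ is uniquely fixed by the intertwining relation, since $[\piref , \cdot]$ is invertible on off-diagonal operators; the $\piref$-diagonal anti-hermitian part is fixed by unitarity; and the diagonal hermitian part is fixed by a natural gauge choice. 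Standard Hörmander-class estimates ensure that each $v_n$ lies in the appropriate symbol class and is $\Gamma^*$-periodic, so $u_n = v_n \Weyl u_0$ inherits right-covariance from $u_0$.

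\textbf{Resummation and exactification.} A Borel summation produces a genuine $u_\lambda \in \SemiHoerm{0}{0} \bigl ( \mathcal{B}(\HperT) \bigr )$ realising the formal series, and a Calderón--Vaillancourt bound gives $U_\lambda' := \Op_\lambda(u_\lambda) \in \mathcal{B}(\Zak \Hper)$. Polar decomposition replaces $U_\lambda'$ by $U_\lambda'' := U_\lambda' \, \bigl ( (U_\lambda')^* \, U_\lambda' \bigr )^{-1/2}$, which is genuinely unitary and differs from $U_\lambda'$ by $\order_{\norm{\cdot}}(\lambda^\infty)$. The projections $U_\lambda'' \, \Pi_\lambda \, (U_\lambda'')^*$ and $\Piref$ are then $\order(\lambda^\infty)$-close in norm, and Sz.-Nagy's explicit formula produces a unitary $V = \id + \order_{\norm{\cdot}}(\lambda^\infty)$ conjugating one to the other; $U_\lambda := V \, U_\lambda''$ is the required exact intertwiner, still of the form $\Op_\lambda(u_\lambda) + \order_{\norm{\cdot}}(\lambda^\infty)$. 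The only delicate ingredient is the principal-order step: without the triviality of $\blochb(\pi_0)$ no right-covariant continuous $u_0$ can exist globally, whereas the remaining order-by-order symbolic calculus and the resummation/exactification arguments are standard adaptations of \cite{PST:effective_dynamics_Bloch:2003}.
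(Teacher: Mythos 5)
Your proposal follows essentially the same route as the paper's proof: construct $u_\lambda$ order-by-order by prescribing the principal symbol $u_0$ from a global analytic frame of the trivial Bloch bundle, solve the unitarity and intertwining constraints recursively via the Moyal expansion, resum, and exactify with polar decomposition plus the Nagy formula. Your decomposition $u_\lambda = v_\lambda \Weyl u_0$ is algebraically equivalent to the paper's recursion $u^{(N+1)} = u^{(N)} + \lambda^{N+1}\bigl( -\tfrac{1}{2} A_{N+1} + [\piref, B_{N+1}] \bigr) \, u_0$, with your off-diagonal/anti-hermitian/gauge splitting of $v_n$ matching the two correction terms.

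The one step you gloss over is the completion of $u_0$ on $\ran \pi_0(k)^\perp$: writing \emph{``plus its analogue on orthogonal complements''} hides that this complement is an \emph{infinite-rank} bundle over $\BZ$, so its triviality is not inherited from that of $\blochb(\pi_0)$. The paper addresses this explicitly by invoking Kuiper's theorem to obtain topological triviality of the complementary principal bundle and then Grauert's Oka principle again to upgrade to an \emph{analytic} right-covariant section $u_0^\perp$. Without this, one only has a measurable completion, which would not yield a symbol in $S^0_0 \bigl ( \mathcal{B}(\HperT) \bigr )$.
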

Using the previously constructed symbols of the superadiabatic projection and the intertwining unitary, we can use a Duhamel argument to show the Weyl quantization of the effective Maxwellian approximates full electrodynamics to any order for states in $\ran \Pi_{\lambda}$. 
\begin{proposition}[Effective Maxwellian]\label{sapt:prop:effective_Maxwellian}
	Suppose we are in the setting of Theorem~\ref{sapt:thm:eff_dyn}. Then the pseudodifferential operator associated to a resummation of 
	\begin{align}
		\Msymb_{\eff} &= \piref \, u_{\lambda} \Weyl \Msymb_{\lambda} \Weyl u_{\lambda}^* \, \piref 
		\asymp \sum_{n = 0}^{\infty} \lambda^n \, \Msymb_{\eff \, n} 
		\label{sapt:eqn:Meff_lambda_indep_rep}
		, 
		\\ 
		\Msymb_{\eff \, n} &\in S^0_{0,\mathrm{per}} \bigl ( \mathcal{B}(\HperT) \bigr ) 
		, 
		\notag 
	\end{align}
	approximates the full electrodynamics for states in $\ran \Pi_{\lambda}$, 
	\begin{align}
		\Bigl ( \e^{- \ii t M_{\lambda}} - U_{\lambda}^{-1} \, \e^{- \ii t \Op_{\lambda}(\Msymb_{\eff})} \, U_{\lambda} \Bigr ) \, \Pi_{\lambda} &= \order_{\norm{\cdot}} \bigl ( \lambda^{\infty} ( 1 + \abs{t}) \bigr ) 
		. 
		\label{sapt:eqn:effective_dynamics_approximate}
	\end{align}
\end{proposition}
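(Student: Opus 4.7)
The plan is to follow the standard space-adiabatic strategy from \cite{PST:effective_dynamics_Bloch:2003}: first establish the operator-level approximate intertwining
\begin{align*}
    \bigl ( U_{\lambda} \, M_{\lambda} \, U_{\lambda}^{-1} - \Op_{\lambda}(\Msymb_{\eff}) \bigr ) \, \Piref = \order_{\norm{\cdot}}(\lambda^{\infty})
    ,
\end{align*}
and then promote it to the evolution-level bound \eqref{sapt:eqn:effective_dynamics_approximate} via a Duhamel argument, which produces the $(1 + \abs{t})$ growth of the error from time integration.

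Before carrying out the operator estimate I would verify that $\Msymb_{\eff}$ is a genuine $\Gamma^{\ast}$-periodic semiclassical symbol in $\SemiHoermper{0}{0} \bigl ( \mathcal{B}(\HperT) \bigr )$: equivariance of $\Msymb_{\lambda}$ combined with the right-covariance of $u_{\lambda}$ (hence left-covariance of $u_{\lambda}^{*}$) from Proposition~\ref{sapt:prop:unitary} makes the phase factors $\e^{\pm \ii \gamma^{\ast} \cdot \hat{y}}$ cancel in pairs at every order of the Moyal expansion \eqref{setup:eqn:Moyal_asymp_expansion}, because that expansion involves only $r$-derivatives which commute with the $y$-dependent phases. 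Sandwiching with the constant symbol $\piref$ preserves periodicity, and a Borel-type resummation on the periodic class produces a representative $\Msymb_{\eff}$. Then, using $U_{\lambda} \, \Pi_{\lambda} \, U_{\lambda}^{-1} = \Piref$ and the almost-commutation $\bigl [ M_{\lambda} , \Pi_{\lambda} \bigr ] = \order_{\norm{\cdot}}(\lambda^{\infty})$ from Proposition~\ref{sapt:prop:projection}, one rewrites
\begin{align*}
    U_{\lambda} \, M_{\lambda} \, U_{\lambda}^{-1} \, \Piref
    = U_{\lambda} \, M_{\lambda} \, \Pi_{\lambda} \, U_{\lambda}^{-1}
    = U_{\lambda} \, \Pi_{\lambda} \, M_{\lambda} \, \Pi_{\lambda} \, U_{\lambda}^{-1} + \order_{\norm{\cdot}}(\lambda^{\infty})
    = \Piref \, U_{\lambda} M_{\lambda} U_{\lambda}^{-1} \, \Piref + \order_{\norm{\cdot}}(\lambda^{\infty})
    .
\end{align*}
Since $\piref$ is constant in $(r,k)$, its outer Moyal products in the symbol of the right-hand side reduce to ordinary products, and the Moyal composition formula identifies the right-hand side with $\Op_{\lambda}(\Msymb_{\eff})$ modulo $\order_{\norm{\cdot}}(\lambda^{\infty})$.

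For the Duhamel step, set
\begin{align*}
    A(t) := \e^{\ii t M_{\lambda}} \, U_{\lambda}^{-1} \, \e^{-\ii t \Op_{\lambda}(\Msymb_{\eff})} \, U_{\lambda} \, \Pi_{\lambda}
    ,
\end{align*}
so that $A(0) = \Pi_{\lambda}$. Using $\Op_{\lambda}(\Msymb_{\eff}) \, \Piref = \Op_{\lambda}(\Msymb_{\eff})$ together with $U_{\lambda} \, \Pi_{\lambda} = \Piref \, U_{\lambda}$, differentiation yields
\begin{align*}
    \tfrac{\dd}{\dd t} A(t)
    = \ii \, \e^{\ii t M_{\lambda}} \, U_{\lambda}^{-1} \, \bigl ( U_{\lambda} M_{\lambda} U_{\lambda}^{-1} - \Op_{\lambda}(\Msymb_{\eff}) \bigr ) \, \Piref \, \e^{-\ii t \Op_{\lambda}(\Msymb_{\eff})} \, U_{\lambda}
    = \order_{\norm{\cdot}}(\lambda^{\infty})
    ,
\end{align*}
by the operator estimate. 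Integrating from $0$ to $t$ and left-multiplying by $\e^{-\ii t M_{\lambda}}$ gives \eqref{sapt:eqn:effective_dynamics_approximate}.

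The main obstacle I anticipate is domain bookkeeping. The Maxwell operator $M_{\lambda}$ is unbounded and its symbol $\Msymb_{\lambda}$ belongs to Hörmander classes of positive order on $\HperT$, while $\pi_{\lambda}$ and $u_{\lambda}$ are of order zero on $\HperT$ but gain one order on $\domainT$. To turn each manipulation above into a genuine norm bound in $\mathcal{B}(\Zak \Hper)$, one must exploit that $M_{\lambda} \, \Pi_{\lambda}$ and $\Op_{\lambda}(\Msymb_{\eff}) \, \Piref$ are bounded --- because the Gap Condition~\ref{intro:assumption:gap_condition} confines $\Omega_{\mathrm{rel}}$ to a bounded subset of $\R$ --- and one must keep track of Hörmander orders so that every commutator or product produces a remainder of order $\order(\lambda^{\infty})$ in operator norm rather than only in seminorms. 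Self-adjointness of $\Op_{\lambda}(\Msymb_{\eff})$ on a suitable dense domain, required for $\e^{-\ii t \Op_{\lambda}(\Msymb_{\eff})}$ to be a unitary group, should follow from arguments analogous to \cite[Section~2]{DeNittis_Lein:adiabatic_periodic_Maxwell_PsiDO:2013}.
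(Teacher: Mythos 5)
Your overall strategy---establish the operator-level intertwining estimate and run a Duhamel argument---is the same as the paper's, which also follows \cite[Proposition~5.19]{Teufel:adiabatic_perturbation_theory:2003}. Your periodicity argument for $\Msymb_{\eff}$ (covariance of $u_{\lambda}$ cancelling the phase factors from the equivariance of $\Msymb_{\lambda}$) and your Duhamel set-up with $A(t) = \e^{\ii t M_{\lambda}}\, U_{\lambda}^{-1}\, \e^{-\ii t \Op_{\lambda}(\Msymb_{\eff})}\, U_{\lambda}\, \Pi_{\lambda}$ are both sound.

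The genuine gap is the point you flag in your last paragraph but do not resolve: why is $\Msymb_{\eff\,n} \in S^0_{0,\mathrm{per}}\bigl(\mathcal{B}(\HperT)\bigr)$? This is not bookkeeping---it is the crux of the proposition. Naive Moyal composition of $u_{\lambda} \in S^0_0\bigl(\mathcal{B}(\HperT)\bigr)$ with $\Msymb_{\lambda}$, viewed as an element of a Hörmander class of positive order on $\HperT$, only gives $\Msymb_{\eff} \in \Hoermper{2}{0}\bigl(\mathcal{B}(\HperT)\bigr)$. Your appeal to ``the Gap Condition confines $\Omega_{\mathrm{rel}}$ to a bounded subset of $\R$'' is the right heuristic but is an operator-level observation, not a symbol-class estimate: to control the $\snorm{\cdot}_{m,a,b}$ seminorms of $\Msymb_{\eff\,n}$ one must bound all derivatives $\partial_r^a\partial_k^b$ of each term in the Moyal expansion, and the problematic factor is $\Msymb_0(r,k) = \tau(r)\,\Mper(k)$, which grows linearly in $k$. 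The paper closes this by writing $\pi_{\lambda}$ locally as a Cauchy integral against the local Moyal resolvent $R_{\lambda}(z)$ and invoking Lemma~\ref{sapt:lem:existence_Moyal_resolvent}~(iii), which supplies the bounds
\begin{align*}
	\bnorm{\partial_r^a\partial_k^b R_n(z)(r,k)}_{\mathcal{B}(\HperT)} \leq C_1(z)
	,
	\qquad
	\bnorm{\partial_r^a\partial_k^b R_n(z)(r,k)}_{\mathcal{B}(\HperT,\domainT)} \leq C_2(z)\bigl(1 + \sabs{k}\bigr)
	,
\end{align*}
so that the order-one growth of $\Msymb_0$ in $\mathcal{B}(\domainT,\HperT)$ is absorbed by the order-$(-1)$ decay of the resolvent in $\mathcal{B}(\HperT,\domainT)$, yielding $\Msymb_0 \Weyl \pi_{\lambda} \in \SemiHoermeq{0}{0}\bigl(\mathcal{B}(\HperT)\bigr)$. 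Without this step, the claim $\Msymb_{\eff\,n}\in S^0_{0,\mathrm{per}}$ in the statement remains unproven, $\Op_{\lambda}(\Msymb_{\eff})$ is not obviously bounded via Calder\'on--Vaillancourt, and the unitary group $\e^{-\ii t\,\Op_{\lambda}(\Msymb_{\eff})}$ used in your Duhamel argument is not justified.
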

Now the proof of Theorem~\ref{sapt:thm:eff_dyn} simply follows by combining the previous results with a change of representation. 
\begin{proof}[Theorem~\ref{sapt:thm:eff_dyn}]
	The assumptions of the theorem include those of Propositions~\ref{sapt:prop:projection}, \ref{sapt:prop:unitary} and \ref{sapt:prop:effective_Maxwellian}, and hence we obtain a projection $\Pi_{\lambda} = \Op_{\lambda}(\pi_{\lambda}) + \order_{\norm{\cdot}}(\lambda^{\infty}) \in \mathcal{B}(\Zak \Hper)$, a unitary $U_{\lambda} = \Op_{\lambda}(u_{\lambda}) + \order_{\norm{\cdot}}(\lambda^{\infty}) \in \mathcal{B}(\Zak \Hper)$ and an effective Maxwell operator $\Op_{\lambda}(\Msymb_{\eff})$. 
	
	Inspecting Diagram~\eqref{intro:eqn:spaces_diagram} and the comments in Section~\ref{setup:aux_rep}, we conclude that 
	\begin{align*}
		\pmb{\Pi}_{\lambda} :& \negmedspace= \Zak^{-1} \, \pmb{\Pi}_{\lambda}^{\Zak} \, \Zak 
		= \Zak^{-1} \, S(\ii \lambda \nabla_k)^{-1} \, \Pi_{\lambda} \, S(\ii \lambda \nabla_k) \, \Zak 
		\\
		&
		= \Op_{\lambda} \bigl ( S^{-1} \Weyl \pi_{\lambda} \Weyl S \bigr ) + \order_{\norm{\cdot}}(\lambda^{\infty}) \in \mathcal{B}(\Zak \Hil_{\lambda})
	\end{align*}
	and 
	\begin{align*}
		\mathbf{U}_{\lambda} := U_{\lambda} \, S(\ii \lambda \nabla_k) 
		= \Op_{\lambda} \bigl ( u_{\lambda} \Weyl S \bigr ) + \order_{\norm{\cdot}}(\lambda^{\infty}) 
		\in \mathcal{B} \bigl ( \Zak \Hil_{\lambda} , \Zak \Hper \bigr )
	\end{align*}
	are the projection and unitary we are looking for. The properties enumerated in the Theorem follow directly from the corresponding properties of $\Pi_{\lambda}$ and $U_{\lambda}$. Moreover, order-by-order, the effective Maxwell operator defined in \eqref{sapt:eqn:Meff_physical_rep} and in \eqref{sapt:eqn:Meff_lambda_indep_rep} are identical (up to a resummation) since $\Msymb_{\lambda} = S \Weyl \pmb{\Msymb}_{\lambda} \Weyl S^{-1}$ by definition and 
	\begin{align*}
		\piref \, \mathbf{u}_{\lambda} \Weyl \pmb{\Msymb}_{\lambda} \Weyl \mathbf{u}_{\lambda}^* \, \piref &= \piref \, u_{\lambda} \Weyl S \Weyl \pmb{\Msymb}_{\lambda} \Weyl S^{-1} \Weyl u_{\lambda}^* \, \piref 
		\\
		&= \piref \, u_{\lambda} \Weyl \Msymb_{\lambda} \Weyl u_{\lambda}^* \, \piref 
		. 
	\end{align*}
	Lastly, \eqref{intro:eqn:spaces_diagram} also shows how to relate the unitary evolution groups and make use of Proposition~\ref{sapt:prop:effective_Maxwellian}: 
	\begin{align*}
		\e^{- \ii t \Mphys_{\lambda}} \, \pmb{\Pi}_{\lambda} 
		&= \Zak^{-1} \, S(\ii \lambda \nabla_k)^{-1} \, \e^{- \ii t M_{\lambda}} \, \Pi_{\lambda} \, S(\ii \lambda \nabla_k) \, \Zak 
		\\
		&= \Zak^{-1} \, S(\ii \lambda \nabla_k)^{-1} \, U_{\lambda}^{-1} \, \e^{- \ii t \, \Op_{\lambda}(\Msymb_{\eff})} \, U_{\lambda} \, \Pi_{\lambda} \, S(\ii \lambda \nabla_k) \, \Zak + \order_{\norm{\cdot}}(\lambda^{\infty}) 
	\end{align*}
	This finishes the proof. 
\end{proof}
%

\section{Physical solutions localized in a frequency range} 
\label{physical}
So far the derivation of effective dynamics for states associated to the family of bands $\Omega_{\mathrm{rel}}$ involved only the \emph{dynamical} Maxwell equations \eqref{intro:eqn:traditional_Maxwell_equations_dynamics}. To investigate whether the almost-invariant subspace $\ran \pmb{\Pi}_{\lambda}$ contains \emph{physically relevant} states, we need to establish that up to $\order(\lambda^{\infty})$
\begin{enumerate}[(i)]
	\item $\ran \pmb{\Pi}_{\lambda}$ contains real elements and 
	\item elements of $\ran \pmb{\Pi}_{\lambda}$ satisfy the no-source condition \eqref{intro:eqn:traditional_Maxwell_equations_no_source}. 
\end{enumerate}

\subsection{Real states} 
\label{physical:real}
The Schrödinger-type equation \eqref{intro:eqn:Maxwell_Schroedinger_type} is naturally defined for complex-valued fields, and it is not at all automatic that initially real-valued electromagnetic fields remain real-valued. In fact, there are physical scenarios where this is incorrect, but more on that below.

\subsubsection{Subspaces supporting real states} 
\label{physical:real:generic}
For the sake of simplicity, we will consistently denote complex conjugation on $L^2(X,\C^N)$ with $C$. Here $X$ could be $\R^3$, $\T^3$ or any other space.
Clearly, $C$ is anti-linear and involutive. We also need the central notion: 
\begin{definition}\label{physical:defn:support_real}
	A closed subspace $\mathfrak{K} \subseteq L^2(X,\C^N)$ is said to \emph{support real states} if and only if $C \mathfrak{K} = \mathfrak{K}$. 
\end{definition}
Now $C$ is a “particle-hole symmetry” of the free Maxwell operator, 
\begin{align*}
	C \, \Rot \, C &= - \Rot 
	\, , 
\end{align*}
which extends to the Maxwell operator $\Mphys_{\lambda} = W_{\lambda} \, \Rot$ if and only if $(\eps_{\lambda},\mu_{\lambda})$ which enter the weight operator 
\begin{align*}
	W_{\lambda} = \left (
	\begin{matrix}
		\eps_{\lambda}^{-1} & 0 \\
		0 & \mu_{\lambda}^{-1} \\
	\end{matrix}
	\right )
\end{align*}
are real, 
\begin{align}
	C \, W_{\lambda} \, C = W_{\lambda} 
	\; \; \Longleftrightarrow \; \; 
	C \, \Mphys_{\lambda} \, C = - \Mphys_{\lambda} 
	. 
	\label{physical:eqn:C_M_anticommute}
\end{align}
Seeing as the weight operator enters into the definition of the scalar product on $\Hil_{\lambda}$ (\cf equation~\eqref{intro:eqn:weighted_scalar_product}), $C$ is anti-unitary if and only if the weights are real. Note that the periodicity of $(\eps,\mu)$ is not needed for these arguments. 

Consequently, the unitary time evolution \emph{commutes} with $C$ iff the weights are real: 
\begin{align}
	C \, W_{\lambda} \, C = W_{\lambda} 
	\; \; \Longleftrightarrow \; \; 
	C \, \e^{- \ii t \Mphys_{\lambda}} \, C = \e^{+ \ii t \, C \, \Mphys_{\lambda} \, C} 
	= \e^{- \ii t \Mphys_{\lambda}} 
	\label{physical:eqn:C_edyn_commute}
\end{align}
Thus, if the weights and the initial conditions $(\mathbf{E},\mathbf{H})$ are real, then also for all $t \in \R$ the time-evolved fields $\bigl ( \mathbf{E}(t),\mathbf{H}(t) \bigr )$ take values in $\R^6$. Indeed if $\Re , \Im : \Hil_{\lambda} \rightarrow \Hil_{\lambda}$ are \emph{real} and \emph{imaginary part} operators 
\begin{align*}
	\Re := \tfrac{1}{2} \bigl ( 1 + C \bigr ) 
	, 
	\qquad \qquad 
	\Im := \tfrac{1}{\ii 2} \bigl ( 1 - C \bigr ) 
	, 
\end{align*}
then \eqref{physical:eqn:C_edyn_commute} immediately implies 
\begin{align}
	\bigl [ \Re , W_{\lambda} \bigr ] = 0
	\; \; \Longleftrightarrow \; \; 
	\bigl [ \Re , \e^{- \ii t \Mphys_{\lambda}} \bigr ] = 0
	\label{physical:eqn:Re_edyn_commute}
\end{align}
and a similar statement for $\Im$. 
\begin{remark}
	The above equation is not of purely mathematical significance: in gyrotropic photonic crystals, $\eps$ and $\mu$ are positive, hermitian-matrix-valued functions with complex offdiagonal entries \cite{Yeh_Chao_Lin:Faraday_effect:1999,Wu_Levy_Fratello_Merzlikin:gyrotropic_photonic_crystals:2010,Kriegler_Rill_Linden_Wegener:bianisotropic_photonic_metamaterials:2010,Esposito_Gerace:photonic_crystals_broken_TR_symmetry:2013}. Consequently, for those materials $C$ and and $\e^{- \ii t \Mphys_{\lambda}}$ no longer commute. The breaking of this particle-hole symmetry \cite{Altland_Zirnbauer:superconductors_symmetries:1997} is crucial to allow the formation of topologically protected states akin to quantum Hall states in crystalline solids. 
\end{remark}
The next Lemma states that we could equivalently use a criterion involving \emph{orthogonal} projections once we equip $L^2(\R^3,\C^N)$ with a weighted scalar product of the form \eqref{intro:eqn:weighted_scalar_product}: 
\begin{lemma}\label{physical:lem:characterization_real_states}
	A closed subspace $\mathfrak{K}$ of the Hilbert space $\Hil_{\lambda}$ supports real states if and only if the orthogonal projection $P \in \mathcal{B}(\Hil_{\lambda})$ onto $\mathfrak{K} = \ran P$ commutes with $C$, 
	\begin{align*}
		[C,P] = 0 
		. 
	\end{align*}
\end{lemma}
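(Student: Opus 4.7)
The plan is to prove both directions by verifying the projection's action on $\mathfrak{K}$ and on its orthogonal complement $\mathfrak{K}^{\perp}$, exploiting the fact that $C$ is involutive and anti-unitary on $\Hil_{\lambda}$ (the latter being ensured in the physically relevant case of real weights, so that $\sscpro{C \cdot \,}{C \cdot}_{\Hil_{\lambda}} = \overline{\sscpro{\cdot \,}{\cdot}_{\Hil_{\lambda}}}$).

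The easy direction is ``$\Leftarrow$''. Assume $[C,P] = 0$. For any $\psi \in \mathfrak{K} = \ran P$ we have $P \psi = \psi$, hence $P \, C \psi = C \, P \psi = C \psi$, which shows $C \psi \in \ran P = \mathfrak{K}$; thus $C \mathfrak{K} \subseteq \mathfrak{K}$. Applying $C$ again and using $C^2 = \id$ yields $\mathfrak{K} = C^2 \mathfrak{K} \subseteq C \mathfrak{K}$, so $C \mathfrak{K} = \mathfrak{K}$.

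For the converse ``$\Rightarrow$'', assume $C \mathfrak{K} = \mathfrak{K}$. The point is to upgrade invariance of $\mathfrak{K}$ to invariance of its orthogonal complement, and this is where anti-unitarity enters. For arbitrary $\phi \in \mathfrak{K}^{\perp}$ and $\psi \in \mathfrak{K}$, anti-unitarity of $C$ together with $C^2 = \id$ gives
\begin{align*}
	\bscpro{\psi}{C \phi}_{\Hil_{\lambda}} = \overline{\bscpro{C \psi}{C^2 \phi}_{\Hil_{\lambda}}} = \overline{\bscpro{C \psi}{\phi}_{\Hil_{\lambda}}} = 0 ,
\end{align*}
since $C \psi \in C \mathfrak{K} = \mathfrak{K}$ and $\phi \in \mathfrak{K}^{\perp}$. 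Hence $C \phi \in \mathfrak{K}^{\perp}$, and by the same involution argument as above $C \mathfrak{K}^{\perp} = \mathfrak{K}^{\perp}$.

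Writing an arbitrary $\Psi \in \Hil_{\lambda}$ as $\Psi = \psi + \phi$ with $\psi = P \Psi \in \mathfrak{K}$ and $\phi = (\id - P) \Psi \in \mathfrak{K}^{\perp}$, the two invariance statements combine to
\begin{align*}
	C P \Psi = C \psi \in \mathfrak{K} ,
	\qquad
	C \phi \in \mathfrak{K}^{\perp} ,
\end{align*}
so that $P C \Psi = P (C \psi + C \phi) = C \psi = C P \Psi$, giving $[C,P] = 0$. The only conceptual step is the orthogonal-complement invariance, and the main (minor) obstacle is keeping track of the weighted scalar product entering the definition of $\mathfrak{K}^{\perp}$; as noted in the paragraph preceding \eqref{physical:eqn:C_edyn_commute}, anti-unitarity of $C$ with respect to $\sscpro{\cdot \,}{\cdot}_{\Hil_{\lambda}}$ is precisely the condition $C W_{\lambda} C = W_{\lambda}$ that is already in force.
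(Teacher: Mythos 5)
Your proof is correct. The ``$\Leftarrow$'' direction matches the paper's almost verbatim. For ``$\Rightarrow$'' the paper takes a shorter route: it observes that $P' := C\,P\,C$ is again an orthogonal projection with $\ran P' = C\mathfrak{K} = \mathfrak{K}$, and then concludes $P' = P$ from uniqueness of the orthogonal projection onto a closed subspace. Your version instead verifies $C \mathfrak{K}^{\perp} = \mathfrak{K}^{\perp}$ explicitly and then computes $PC\Psi = CP\Psi$ on the decomposition $\Psi = P\Psi + (\id - P)\Psi$. These are two phrasings of the same underlying fact; your computation makes visible what the paper leaves implicit, namely that $C P C$ being an \emph{orthogonal} (i.e.\ selfadjoint) projection requires the anti-unitarity of $C$ with respect to the weighted scalar product on $\Hil_{\lambda}$. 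You correctly flag that this anti-unitarity is equivalent to $C\,W_{\lambda}\,C = W_{\lambda}$, i.e.\ to the reality of the weights, which is the standing hypothesis in this part of the paper. The paper's argument is slicker; yours is more self-contained and highlights the exact role of the weighted inner product.
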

\begin{proof}
	Let $\mathfrak{K} \subseteq \Hil_{\lambda}$ be a closed subspace with $C \mathfrak{K} = \mathfrak{K}$. Then there exists a unique orthogonal projection $P$ onto $\mathfrak{K}$. Since	$P' = C \, P \, C$ projects onto $C \mathfrak{K}$ and $C \mathfrak{K} = \mathfrak{K}$, it follows from the uniqueness of the projection that $P'=P$.
	
	Conversely, let $P$ be an orthogonal projection for which $[C,P] = 0$ holds. Then $C^2 = \id$ implies $C \, P \, C = P$, and thus $C \, \ran P = \ran P$. 
\end{proof}
%

\subsubsection{Unperturbed non-gyrotropic photonic crystals} 
\label{physical:real:periodic}
Now we turn to the case where the material weights $\eps$ and $\mu$ are \emph{real} and $\Gamma$-periodic, \ie $\lambda = 0$; the first assumption excludes gyrotropic photonic crystals. We will be interested in subspaces of the form $\Zak^{-1} \, \ran \hat{\pi}_0 \subset \Hper$ where $\hat{\pi}_0$ is the “local spectral projection” from Lemma~\ref{setup:lem:existence_pi_0} associated to a family of bands $\Omega_{\mathrm{rel}} = \{ \omega_n \}_{n \in \Index}$. 

To exploit the periodicity, let us switch to the BFZ representation where the action of complex conjugation $C^{\Zak} := \Zak \, C \, \Zak^{-1}$ is 
\begin{align}
	\bigl ( C^{\Zak} \Psi \bigr )(k) = \overline{\Psi(-k)}
	 && 
	 \forall \Psi \in \Zak \Hper
	 .
	 \label{physical:eqn:relation_CZak_C}
\end{align}
Hence, $C^{\Zak}$ does not fiber; instead it often induces relationships between the fiber spaces at $k$ and $-k$. In BFZ representation, the condition “$\mathfrak{K}$ supports real states” translates to 
\begin{align}
	\Psi \in \Zak \mathfrak{K} 
	\; \Longleftrightarrow \; 
	C^{\Zak} \Psi \in \Zak \mathfrak{K} 
	, 
	\label{physical:eqn:Zak_reality_condititon}
\end{align}
and implies a symmetry condition for the fiber projections $\pi_0(k)$: 
\begin{proposition}\label{physical:prop:equivalences_reality_condition}
	Suppose $(\eps,\mu)$ are real. Then 
	$\Zak^{-1} \, \ran \hat{\pi}_0$ supports real states if and only if $C \, \pi_0(k) \, C = \pi_0(-k)$ holds for all $k \in \BZ$. 
\end{proposition}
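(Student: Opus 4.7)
[Proof proposal for Proposition~\ref{physical:prop:equivalences_reality_condition}]
The plan is to translate the condition ``$\Zak^{-1}\ran\hat\pi_0$ supports real states'' into a fiberwise identity for $\pi_0(k)$ by combining Lemma~\ref{physical:lem:characterization_real_states} with the explicit action \eqref{physical:eqn:relation_CZak_C} of $C^{\Zak}$ in BFZ representation.

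First I would invoke Lemma~\ref{physical:lem:characterization_real_states}: the subspace $\Zak^{-1}\ran\hat\pi_0\subset\Hper$ supports real states if and only if complex conjugation $C$ on $\Hper$ commutes with the orthogonal projection onto it, namely $\Zak^{-1}\hat\pi_0\Zak$. Conjugating by the unitary $\Zak$ (which sends $\Hper$ to $\Zak\Hper$ and $C$ to $C^{\Zak}$), this is equivalent to $\bigl[C^{\Zak},\hat\pi_0\bigr]=0$ on $\Zak\Hper$, where $\hat\pi_0=\int^{\oplus}_{\BZ}\dd k\,\pi_0(k)$ acts fiberwise.

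Next I would compute the commutator fiberwise. For $\Psi\in\Zak\Hper$, formula \eqref{physical:eqn:relation_CZak_C} gives
\begin{align*}
    \bigl(C^{\Zak}\hat\pi_0\Psi\bigr)(k)
    &=\overline{\pi_0(-k)\,\Psi(-k)}
    =C\,\pi_0(-k)\,C\;\overline{\Psi(-k)},
    \\
    \bigl(\hat\pi_0 C^{\Zak}\Psi\bigr)(k)
    &=\pi_0(k)\,\overline{\Psi(-k)},
\end{align*}
where on the right $C$ denotes the pointwise complex conjugation on the fiber $\HperT$ (note that since $(\eps,\mu)$ are real, $C$ is an anti-unitary on $\HperT$ so the expressions make sense). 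Subtracting and using that $\Psi$ is arbitrary, so that $\overline{\Psi(-k)}$ ranges over a dense set in $\HperT$ as $k$ varies, the commutator vanishes if and only if $C\,\pi_0(-k)\,C=\pi_0(k)$ holds for every $k\in\BZ$. Since $C^2=\id$, this is equivalent to $C\,\pi_0(k)\,C=\pi_0(-k)$ for all $k$, which is the stated condition.

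The only subtle point—and the main thing to check carefully—is that $C^{\Zak}$ is well-defined on the space of equivariant functions: one must verify that if $\Psi(k-\gamma^*)=\e^{+\ii\gamma^*\cdot\hat y}\Psi(k)$, then $k\mapsto\overline{\Psi(-k)}$ satisfies the same equivariance, which follows from $\overline{\e^{+\ii\gamma^*\cdot\hat y}}=\e^{-\ii\gamma^*\cdot\hat y}$ together with the change of sign $k\mapsto -k$. Analytic regularity of $\pi_0$ (Lemma~\ref{setup:lem:existence_pi_0}) ensures the pointwise identity $C\pi_0(k)C=\pi_0(-k)$ extends from a dense set in $k$ to all of $\BZ$, closing the argument in both directions.
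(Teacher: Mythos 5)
Your argument is correct and follows essentially the same route as the paper's: reduce to $C^{\Zak}\hat\pi_0 C^{\Zak}=\hat\pi_0$ via Lemma~\ref{physical:lem:characterization_real_states}, then translate to the fiberwise identity using the action formula~\eqref{physical:eqn:relation_CZak_C}. The only cosmetic difference is that you verify the fiberwise identity by a direct density argument on $\HperT$, whereas the paper invokes the Bloch-function representation~\eqref{physical:eqn:pi_0} together with~\eqref{setup:eqn:symmetric_twin_Bloch_function}; both routes are valid and equivalent.
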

\begin{proof}
	From Lemma~\ref{physical:lem:characterization_real_states}, equation~\eqref{physical:eqn:C_M_anticommute} and the unitarity of $\Zak$ one has that $\Zak^{-1} \, \ran \hat{\pi}_0$ supports real states if and only if $C^{\Zak} \, \hat{\pi}_0 \, C^{\Zak} = \hat{\pi}_0$. From this equality, using the fiber representation \eqref{physical:eqn:pi_0} of the projection $\hat{\pi}_0$, the symmetry of the Bloch functions \eqref{setup:eqn:symmetric_twin_Bloch_function} and the action of $C^{\Zak}$ described by \eqref{physical:eqn:relation_CZak_C}, one easily deduces the fiber-wise relation $C \, \pi_0(k) \, C = \pi_0(-k)$.
\end{proof}
$C \, \pi_0(k) \, C = \pi_0(-k)$ translates to a spectral condition most conveniently written in terms of the \emph{relevant part of the spectrum} 
\begin{align}
	\specrel(k) := \bigcup_{n \in \Index} \bigl \{ \omega_n(k) \bigr \} 
	. 
	\label{physical:eqn:specrel}
\end{align}
\begin{corollary}\label{physical:cor:spectral_condition_reality}
	Suppose $(\eps,\mu)$ are real. Then 
	$\Zak^{-1} \, \ran \hat{\pi}_0$ supports real states if and only if $\specrel(-k) = - \specrel(k)$ holds for all $k \in \BZ$. 
\end{corollary}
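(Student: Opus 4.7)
By Proposition~\ref{physical:prop:equivalences_reality_condition}, the subspace $\Zak^{-1} \, \ran \hat{\pi}_0$ supports real states if and only if $C \, \pi_0(k) \, C = \pi_0(-k)$ for every $k \in \BZ$. The plan is to reinterpret both sides of this operator equality as spectral projections of the same selfadjoint operator $\Mper(-k)$, and then to note that two such projections agree if and only if the underlying spectral subsets agree.

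First, I would use the fiberwise particle-hole symmetry \eqref{setup:eqn:complex_symmetry_fiber_Maxwell}, namely $C \, \Mper(k) \, C = - \Mper(-k)$. Combined with the eigenfunction expansion \eqref{physical:eqn:pi_0} and the relation \eqref{setup:eqn:symmetric_twin_Bloch_function} for the symmetric twin Bloch functions, this shows that $C \, \pi_0(k) \, C$ is precisely the spectral projection of $\Mper(-k)$ onto the subset $-\specrel(k) = \{-\omega_n(k) : n \in \Index\}$ of $\sigma(\Mper(-k))$. A more robust argument, which handles possible degeneracies or band crossings without ambiguity, proceeds via the Cauchy--Dunford representation: writing $\pi_0(k) = \frac{1}{2 \pi \ii} \oint_{\Lambda(k)} \bigl ( z - \Mper(k) \bigr )^{-1} \, \dd z$ over a contour $\Lambda(k)$ enclosing $\specrel(k)$, the anti-linearity of $C$ converts this into a contour integral of the resolvent of $\Mper(-k)$ around the reflected contour $-\overline{\Lambda(k)}$, which encloses $-\specrel(k)$. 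On the other hand, $\pi_0(-k)$ is by definition the spectral projection of $\Mper(-k)$ onto $\specrel(-k)$, which is likewise well-defined by the Gap Condition~\ref{intro:assumption:gap_condition} (applied at $-k$).

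It then remains to observe that since $\Mper(-k) \vert_{\Jper(-k)}$ has purely discrete spectrum (cf{.} Section~\ref{setup:periodic}) and the unphysical zero mode subspace $\Gper(-k)$ carries no weight of $\pi_0(-k)$, the sum of eigenspaces of $\Mper(-k)$ associated to a subset of its spectrum uniquely determines that subset. Consequently, $C \, \pi_0(k) \, C = \pi_0(-k)$ holds at a given $k$ if and only if $-\specrel(k) = \specrel(-k)$, and quantifying over $k \in \BZ$ yields the corollary. The only delicate point is keeping track of the anti-linearity of $C$ when conjugating resolvents or scalar coefficients; once this bookkeeping is done, the result follows from the elementary uniqueness of spectral subsets behind spectral projections.
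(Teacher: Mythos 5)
Your proof is correct and follows the same route as the paper: reduce via Proposition~\ref{physical:prop:equivalences_reality_condition} to the operator equality $C \, \pi_0(k) \, C = \pi_0(-k)$, then deduce from the particle-hole symmetry $C \, \Mper(k) \, C = -\Mper(-k)$ and the discreteness of the nonzero spectrum that this holds iff $-\specrel(k) = \specrel(-k)$. You simply spell out the step the paper dismisses as following "easily from the spectral properties of $\Mper(k)$," which is sound.
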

\begin{proof}
	Thanks to Proposition~\ref{physical:prop:equivalences_reality_condition} we can equivalently prove that $C \, \pi_0(k) \, C = \pi_0(-k)$ holds iff $\specrel(-k) = - \specrel(k)$. This follows easily from the spectral properties of $\Mper(k)$ for real weights discussed in Section \ref{setup:periodic}.
\end{proof}
\begin{proposition}\label{physical:prop:triviality_bundle}
	Suppose $(\eps,\mu)$ are real, the family of relevant bands $\Omega_{\mathrm{rel}} = \{ \omega_n \}_{n \in \Index}$ satisfies the Gap Condition~\ref{intro:assumption:gap_condition} and $\Zak^{-1} \, \ran \hat{\pi}_0$ supports real states in the sense of Definition~\ref{physical:defn:support_real}. Then the Bloch bundle $\blochb(\pi_0)$ associated to $\hat{\pi}_0$ is trivial. 
\end{proposition}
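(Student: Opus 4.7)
The plan is to reduce triviality of $\blochb(\pi_0)$ to the vanishing of its first Chern class and then exploit the anti-unitary symmetry coming from the reality of $(\eps,\mu)$. By Proposition~\ref{physical:prop:equivalences_reality_condition}, the assumption that $\Zak^{-1} \ran \hat{\pi}_0$ supports real states is equivalent to the fiberwise identity
\begin{align*}
	C \, \pi_0(k) \, C = \pi_0(-k)
	\qquad \forall k \in \R^3
	.
\end{align*}
Combined with the equivariance $\pi_0(k - \gamma^*) = \e^{+ \ii \gamma^* \cdot \hat{y}} \, \pi_0(k) \, \e^{- \ii \gamma^* \cdot \hat{y}}$, this relation is compatible with the $\Gamma^*$-action used in the quotient construction of $\blochb(\pi_0) = \mathcal{E}_{\R^3}(\pi_0) / \Gamma^*$, so $C$ descends to an anti-linear bundle isomorphism
\begin{align*}
	\Theta : \blochb(\pi_0) \longrightarrow \iota^* \blochb(\pi_0)
\end{align*}
covering the involution $\iota : \BZ \ni k \mapsto - k \in \BZ$ on the base.

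The existence of such a $\Theta$ is exactly the hypothesis of the Panati-type argument used in the Schrödinger setting (\cf \cite{Panati:triviality_Bloch_bundle:2006} and \cite{DeNittis_Lein:exponentially_loc_Wannier:2011}). Indeed, pulling back by $\iota$ and conjugating by the anti-linear $\Theta$ both have the effect of changing the sign of the first Chern class, so
\begin{align*}
	c_1 \bigl ( \blochb(\pi_0) \bigr ) = c_1 \bigl ( \overline{\iota^* \blochb(\pi_0)} \bigr ) = - \iota^* c_1 \bigl ( \blochb(\pi_0) \bigr )
	.
\end{align*}
On $\BZ \cong \T^3$, the involution $\iota$ acts as $-1$ on $H^1(\T^3,\Z) \cong \Z^3$ and hence as the identity on $H^2(\T^3,\Z) \cong \Z^3$. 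Therefore the displayed equality yields $2 \, c_1(\blochb(\pi_0)) = 0$ in $H^2(\T^3,\Z)$; since this cohomology group is torsion-free, $c_1(\blochb(\pi_0)) = 0$.

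To conclude I would invoke the classification of complex vector bundles over $\T^3$: for a base of dimension $\leq 3$ the only possible obstruction to triviality of a complex vector bundle of rank $\geq 1$ lives in $H^2(\T^3,\Z)$, so $c_1 = 0$ implies topological triviality, and by the Oka principle quoted in Section~\ref{setup:periodic:Bloch_bundle} \cite{Grauert:analytische_Faserungen:1958} topological and analytic triviality agree. The main technical point to verify carefully is the first step: checking that $C$ really does descend to a well-defined bundle map on $\blochb(\pi_0)$ despite its incompatibility at first sight with the $\Gamma^*$-equivariance (the multiplication operator $\e^{+\ii \gamma^* \cdot \hat{y}}$ is not invariant under conjugation by $C$). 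This is handled by combining the symmetry $k \mapsto -k$ with the change $\gamma^* \mapsto -\gamma^*$, which is an automorphism of $\Gamma^*$ and hence compatible with the quotient.
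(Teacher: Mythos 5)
Your proposal is correct and follows essentially the same route as the paper's own proof: Proposition~\ref{physical:prop:equivalences_reality_condition} gives $C\,\pi_0(k)\,C = \pi_0(-k)$, and this particle-hole-type symmetry forces $c_1(\blochb(\pi_0)) = 0$ by the Panati-type argument (which the paper delegates to \cite{Panati:triviality_Bloch_bundle:2006} and \cite{DeNittis_Lein:exponentially_loc_Wannier:2011} by citation but you correctly spell out), and vanishing $c_1$ together with $\dim \BZ = 3$ and the Oka principle yields triviality. Your careful check that $C$ descends through the $\Gamma^*$-quotient via $\gamma^* \mapsto -\gamma^*$, and your computation that $\iota^*$ acts as $+\mathrm{id}$ on $H^2(\T^3,\Z)$ while $H^2(\T^3,\Z)$ is torsion-free, are both sound and fill in precisely the details the paper leaves to its references.
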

\begin{proof}
	As explained in \cite[Theorem~4.6]{DeNittis_Lein:exponentially_loc_Wannier:2011} or \cite[Theorem~1]{Panati:triviality_Bloch_bundle:2006} the relation $C \, \pi_0(k) \, C = \pi_0(-k)$ guarantees that the first Chern class of $\mathcal{E}_\BZ(\pi_0)$ vanishes which in turn implies the triviality of the Bloch bundle due to the low dimensionality of the base space. 
\end{proof}
\begin{remark}\label{physical:rem:Bloch_bundle}
	A more careful analysis shows that the trivial bundle $\blochb(\pi_0)$ can be naturally seen as the sum of two potentially non-trivial bundles with opposite Chern numbers: If we decompose $\hat{\pi}_0 = \hat{\pi}_{+ \, 0} + \hat{\pi}_{- \, 0}$ into positive and negative frequency contributions, we obtain the natural splitting of
	\begin{align*}
		\blochb(\pi_0) \cong \blochb(\pi_{+ \, 0}) \oplus \blochb(\pi_{- \, 0})
		.
	\end{align*}
	into the Whitney sum of positive and negative frequency Bloch bundle. The triviality of $ \blochb(\pi_0)$, the composition property of Chern classes and the low dimensionality of the base space $\BZ$ yields
	\begin{align*}
		0 = c_1 \bigl ( \blochb(\pi_0) \bigr ) = c_1 \bigl ( \blochb(\pi_{+ \, 0}) \bigr ) + c_1 \bigl ( \blochb(\pi_{- \, 0}) \bigr )
		,
	\end{align*}
	which  is equivalent to $c_1 \bigl ( \blochb(\pi_{+ \, 0}) \bigr ) = - c_1 \bigl ( \blochb(\pi_{- \, 0}) \bigr )$ (one needs to use also the fact that $H^2(\BZ,\Z)$ is torsion free). However, the particle-hole symmetry $C$ by no means implies $c_1 \bigl ( \blochb(\pi_{\pm \, 0}) \bigr ) = 0$.
\end{remark}
%

\subsubsection{Slowly modulated non-gyrotropic photonic crystals} 
\label{physical:real:perturbed}
The periodic case could be handled on the level of fiber decompositions of periodic operators; the superadiabatic projection $\pmb{\Pi}_{\lambda}^{\Zak}$ has a more complicated structure, and we need to transfer \eqref{setup:eqn:complex_symmetry_fiber_Maxwell} to the level of symbols. With this in mind, we introduce the operation 
\begin{align*}
	(\mathfrak{c} f)(r,k) := C \, f(r,-k) \, C
\end{align*}
for suitable operator-valued symbols $f \in \Hoermeq{m}{0} \bigl ( \mathcal{B}(\HperT) \bigr )$, because then, conjugation with $C^{\Zak}$ and $\mathfrak{c}$ are intertwined via $\Op_{\lambda}$: 
\begin{lemma}\label{real_eff:lem:complex_conjugation_PsiDO}
	$C^{\Zak} \, \Op_{\lambda}(f) \, C^{\Zak} = \Op_{\lambda}(\mathfrak{c} f)$ $\quad \forall \Hoermeq{0}{0} \bigl ( \mathcal{B}(\HperT) \bigr )$ 
\end{lemma}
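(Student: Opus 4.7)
The strategy is to reduce the identity to the action of $C^{\Zak}$ on the two basic building blocks out of which the Weyl quantization is assembled, namely the operators $\ii\lambda\nabla_k$ and $\hat{k}$. From the explicit formula $(C^{\Zak}\Psi)(k) = \overline{\Psi(-k)}$ one checks directly that
\begin{align*}
	C^{\Zak} \, \ii\lambda\nabla_k \, C^{\Zak} = \ii\lambda\nabla_k,
	\qquad
	C^{\Zak} \, \hat{k} \, C^{\Zak} = -\hat{k}
	,
\end{align*}
where the sign on the derivative coming from antilinearity is cancelled by the sign from differentiating in $-k$, while $\hat{k}$ picks up a single minus from the reflection.

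Combining the two relations with the identity $C^{\Zak} \alpha \, C^{\Zak} = \bar{\alpha}$ for scalars and expanding the exponential as a power series gives the key intermediate formula
\begin{align*}
	C^{\Zak} \, \e^{-\ii(k' \cdot (\ii\lambda\nabla_k) - r' \cdot \hat{k})} \, C^{\Zak} = \e^{-\ii(-k' \cdot (\ii\lambda\nabla_k) - r' \cdot \hat{k})}
	.
\end{align*}
Next I would apply $C^{\Zak}(\cdot)C^{\Zak}$ to the oscillatory integral
\begin{align*}
	\Op_{\lambda}(f) = \frac{1}{(2\pi)^3}\int_{\R^3}\int_{\R^3} (\Fourier_{\sigma} f)(r',k') \, \e^{-\ii(k' \cdot (\ii\lambda\nabla_k) - r' \cdot \hat{k})}\,\dd r'\,\dd k'.
\end{align*}
Pulling the antilinear $C^{\Zak}$ through the integral splits the integrand into two conjugated factors. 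The Weyl exponential transforms as above. The operator $(\Fourier_{\sigma} f)(r',k')\in\mathcal{B}(\HperT)$ acts fiberwise on $\Zak \Hper \cong L^2(\BZ) \otimes \HperT$, independently of the Zak momentum, so that conjugation by $C^{\Zak}$ amounts to fiberwise conjugation by the complex conjugation $C$ on $\HperT$.

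Performing the change of variables $k' \mapsto -k'$ in the outer integral then produces
\begin{align*}
	C^{\Zak} \, \Op_{\lambda}(f) \, C^{\Zak} = \frac{1}{(2\pi)^3}\int_{\R^3}\int_{\R^3} C \, (\Fourier_{\sigma} f)(r',-k') \, C \, \e^{-\ii(k' \cdot (\ii\lambda\nabla_k) - r' \cdot \hat{k})}\,\dd r'\,\dd k',
\end{align*}
so that the proof is finished once one verifies the symbol-level identity $C\,(\Fourier_{\sigma} f)(r',-k')\,C = (\Fourier_{\sigma}(\mathfrak{c} f))(r',k')$. This is a direct computation that uses only the substitution $k \mapsto -k$ in the inner Fourier integral and the antilinearity of $C$ (which turns $\e^{+\ii(\cdots)}$ into $\e^{-\ii(\cdots)}$ before the substitution restores the correct sign). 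The main obstacle is purely bookkeeping: one must keep careful track of the interplay of the antilinearity of $C^{\Zak}$ with the complex phases and the reflection $k\mapsto -k$. No analytic subtleties arise, since on Schwartz symbols every manipulation above is pointwise valid, and the extension to all $f \in \Hoermeq{0}{0}\bigl(\mathcal{B}(\HperT)\bigr)$ follows by the continuity of $\Op_{\lambda}$ in the Fréchet topology established in Section~\ref{setup:PsiDOs}.
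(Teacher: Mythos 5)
Your proof is correct, and it takes a genuinely different route from the paper's. The paper works directly with the integral kernel form of Weyl quantization, plugging $\varphi \mapsto (C^{\Zak}\varphi)(k)=\overline{\varphi(-k)}$ into the oscillatory double integral and performing a substitution $k'\mapsto -k''$ to read off the new symbol in one stroke. You instead work with the symplectic-Fourier representation and split the conjugation into three modular pieces: the intertwining relations $C^{\Zak}\,\ii\lambda\nabla_k\,C^{\Zak}=\ii\lambda\nabla_k$ and $C^{\Zak}\,\hat k\,C^{\Zak}=-\hat k$ (hence the Weyl translation picks up $k'\mapsto -k'$), the observation that a $k$-independent fiberwise operator $T\in\mathcal{B}(\HperT)$ conjugates to $CTC$ (the reflection in $k$ being a no-op on it), and finally the identity $C\,(\Fourier_{\sigma}f)(r',-k')\,C=\bigl(\Fourier_{\sigma}(\mathfrak{c}f)\bigr)(r',k')$. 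I have verified each of these; the two minus signs in the $\ii\lambda\nabla_k$ computation cancel exactly as you say, and the final symbol-level identity follows from the inner substitution $k\mapsto -k$ together with antilinearity of $C$ acting on the phase. Your version makes the mechanism more transparent — it isolates where the reflection and where the fiberwise conjugation enter — whereas the paper's computation is shorter but somewhat opaque. Both proofs have the same analytic status (formal on general Hörmander symbols, rigorous on Schwartz symbols, extended by density/continuity), and you correctly flag this.
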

The proof is straightforward and can be found in Section~\ref{technical:physical}. Now we are in a position to prove the first physicality condition. Just like in the case of the perfectly periodic Maxwell operator, $\ran \pmb{\Pi}_{\lambda}$ supports real states if and only if $\specrel(k)$ is chosen antisymetrically. 
\begin{proposition}\label{physical:prop:real_solutions_Pi_lambda}
	Assume $(\eps_{\lambda},\mu_{\lambda})$ are real and $\specrel(k) = - \specrel(-k)$. Then $C \, \pmb{\Pi}_{\lambda} \, C = \pmb{\Pi}_{\lambda} + \order_{\norm{\cdot}}(\lambda^{\infty})$ holds. 
\end{proposition}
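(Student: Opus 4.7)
The plan is to pass to the BFZ representation, where the statement becomes $C^{\Zak} \, \pmb{\Pi}_{\lambda}^{\Zak} \, C^{\Zak} = \pmb{\Pi}_{\lambda}^{\Zak} + \order_{\norm{\cdot}}(\lambda^{\infty})$, and to reduce this further to an identity on the symbol level. Since $\pmb{\Pi}_{\lambda}^{\Zak} = \Op_{\lambda}(\pmb{\pi}_{\lambda}) + \order_{\norm{\cdot}}(\lambda^{\infty})$ with $\pmb{\pi}_{\lambda} \in \SemiHoermeq{0}{0} \bigl ( \mathcal{B}(\HperT) \bigr )$ the symbol constructed by Proposition~\ref{sapt:prop:projection} (transported from the auxiliary to the BFZ representation via $S^{-1} \Weyl (\cdot) \Weyl S$), Lemma~\ref{real_eff:lem:complex_conjugation_PsiDO} reduces the problem to showing $\mathfrak{c} \pmb{\pi}_{\lambda} = \pmb{\pi}_{\lambda}$ as equivariant semiclassical symbols modulo $\order(\lambda^{\infty})$.

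Next I would collect three functorial properties of the operation $\mathfrak{c}$. First, $\mathfrak{c}$ preserves equivariance: a direct computation using $C \, \e^{\pm \ii \gamma^* \cdot \hat{y}} \, C = \e^{\mp \ii \gamma^* \cdot \hat{y}}$ shows $(\mathfrak{c} f)(r , k - \gamma^*) = \e^{+ \ii \gamma^* \cdot \hat{y}} \, (\mathfrak{c} f)(r,k) \, \e^{- \ii \gamma^* \cdot \hat{y}}$. Second, $\mathfrak{c}$ is a Moyal homomorphism, $\mathfrak{c}(f \Weyl g) = \mathfrak{c} f \Weyl \mathfrak{c} g$: apply $C^{\Zak}(\cdot) C^{\Zak}$ to the operator identity $\Op_{\lambda}(f) \, \Op_{\lambda}(g) = \Op_{\lambda}(f \Weyl g)$, use $(C^{\Zak})^2 = \id$ to insert an identity between the two factors on the left, and invoke Lemma~\ref{real_eff:lem:complex_conjugation_PsiDO} three times. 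Third, and crucially, the reality of $(\eps_{\lambda},\mu_{\lambda})$ gives $C \, \Mphys_{\lambda} \, C = - \Mphys_{\lambda}$ by \eqref{physical:eqn:C_M_anticommute}, which translates into the symbol identity $\mathfrak{c} \pmb{\Msymb}_{\lambda} = - \pmb{\Msymb}_{\lambda}$.

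Applying $\mathfrak{c}$ to the defining properties of $\pmb{\pi}_{\lambda}$ then yields the analogous properties for $\mathfrak{c} \pmb{\pi}_{\lambda}$: the Moyal projection property $\pmb{\pi}_{\lambda} \Weyl \pmb{\pi}_{\lambda} = \pmb{\pi}_{\lambda}$ transfers verbatim, and the almost-commutation $\pmb{\Msymb}_{\lambda} \Weyl \pmb{\pi}_{\lambda} - \pmb{\pi}_{\lambda} \Weyl \pmb{\Msymb}_{\lambda} = \order(\lambda^{\infty})$ becomes, after applying $\mathfrak{c}$ and using the sign flip, $-\pmb{\Msymb}_{\lambda} \Weyl \mathfrak{c} \pmb{\pi}_{\lambda} + \mathfrak{c} \pmb{\pi}_{\lambda} \Weyl \pmb{\Msymb}_{\lambda} = \order(\lambda^{\infty})$, i.e.\ the same Moyal-commutation relation. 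For the principal symbol, by definition $(\mathfrak{c} \pi_0)(k) = C \, \pi_0(-k) \, C$, and the hypothesis $\specrel(-k) = - \specrel(k)$ together with the reality of the weights gives $C \, \pi_0(-k) \, C = \pi_0(k)$ by Proposition~\ref{physical:prop:equivalences_reality_condition} and Corollary~\ref{physical:cor:spectral_condition_reality}; hence $\mathfrak{c} \pi_0 = \pi_0$.

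To finish, I would invoke the uniqueness statement built into the SAPT recursive construction of Proposition~\ref{sapt:prop:projection}: once the principal symbol $\pi_0$ is prescribed, the requirement of being a Moyal projection that Moyal-commutes with $\pmb{\Msymb}_{\lambda}$ fixes each higher-order term in the asymptotic expansion of $\pmb{\pi}_{\lambda}$ uniquely, so that any two such semiclassical symbols necessarily agree modulo $\lambda^{\infty}$. Consequently $\mathfrak{c} \pmb{\pi}_{\lambda} \asymp \pmb{\pi}_{\lambda}$, Borel-resummation produces a genuine symbol identity up to $\order(\lambda^{\infty})$, and quantizing yields the claim after transporting back to the physical representation via the unitary $\Zak$. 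The main technical obstacle, I expect, is not the structural argument above but the need to invoke the uniqueness of the SAPT symbol in a representation-independent way; in particular one must verify that the symbol transported from the auxiliary to the BFZ representation via $S$ still satisfies the hypotheses needed for uniqueness, which is the reason this proof is best carried out after the construction in Section~\ref{sapt} has been completed.
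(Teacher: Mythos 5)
Your proof is correct, but it reaches the key identity $\mathfrak{c}\pmb{\pi}_{\lambda} = \pmb{\pi}_{\lambda}$ by a genuinely different mechanism than the paper does. The paper's route (after the same reduction via $C^{\Zak}$, $S(\ii\lambda\nabla_k)$ and Lemma~\ref{real_eff:lem:complex_conjugation_PsiDO}) invokes the separate Lemma~\ref{technical:lem:c_pi_pi}, whose proof is an explicit computation: the symmetry $(\mathfrak{c}\Msymb_{\lambda})(r,k) = -\Msymb_{\lambda}(r,k)$ transfers to the local Moyal resolvent as $C\,(R_{\lambda}(z))(r,-k)\,C = -(R_{\lambda}(-\bar z))(r,k)$, and one then changes variables in the contour integral \eqref{sapt:eqn:pi_n_Cauchy}, carefully tracking how each contiguous family $\Omega_{\alpha}$ is mapped to its symmetric partner $\Omega_{\beta}$, before summing over $\alpha$. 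You instead prove that $\mathfrak{c}$ is a Moyal homomorphism commuting with equivariance, use $\mathfrak{c}\pmb{\Msymb}_{\lambda} = -\pmb{\Msymb}_{\lambda}$ to show $\mathfrak{c}\pmb{\pi}_{\lambda}$ satisfies the same defining relations and has the same principal symbol, and then invoke uniqueness of the SAPT projection. This is cleaner in two respects: it avoids all bookkeeping with the contiguous families and their symmetric twins, and it makes transparent that the spectral condition enters only at order zero, through $C\,\pi_0(-k)\,C = \pi_0(k)$. The cost is that you lean on a uniqueness statement for the recursive SAPT construction that the paper itself never states (it proves existence in Proposition~\ref{sapt:prop:projection} but not uniqueness); that result is standard in the literature (it follows from the observation that the $\pi_0$-diagonal and off-diagonal blocks of each $\pi_n$ are separately determined by the projection property and the commutation property), but if you want a fully self-contained argument you should spell it out or at least cite it, which is exactly the caveat you yourself flag at the end. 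Your concern about representation-dependence of the uniqueness hypotheses is harmless: since $[C,S]=0$ and $S$ is $k$-independent and real, $\mathfrak{c}\,S^{\pm 1} = S^{\pm 1}$, so $\mathfrak{c}\pi_{\lambda}=\pi_{\lambda}$ and $\mathfrak{c}\pmb{\pi}_{\lambda}=\pmb{\pi}_{\lambda}$ are equivalent and the argument can be run in either the auxiliary or the BFZ representation.
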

\begin{proof}
	First of all, we can reduce the problem to computing $C^{\Zak} \, \Pi_{\lambda} \, C^{\Zak}$ since the relation $\bigl [ C , S(\lambda \hat{x})^{\pm 1} \bigr ] = 0$ implies after the Zak transform that
	\begin{align*}
		C^{\Zak} \, \pmb{\Pi}_{\lambda}^{\Zak} \, C^{\Zak} &= C^{\Zak} \, S(\ii \lambda \nabla_k)^{-1} \, \Pi_{\lambda} \, S(\ii \lambda \nabla_k) \, C^{\Zak} 
		\\
		&= S(\ii \lambda \nabla_k)^{-1} \, C^{\Zak} \, \Pi_{\lambda} \, C^{\Zak} \, S(\ii \lambda \nabla_k) 
		. 
	\end{align*}
	Now we replace $\Pi_{\lambda}$ by $\Op_{\lambda}(\pi_{\lambda})$ and use Lemma~\ref{real_eff:lem:complex_conjugation_PsiDO}, 
	\begin{align}
		\ldots &= S(\ii \lambda \nabla_k)^{-1} \, C^{\Zak} \, \Op_{\lambda}(\pi_{\lambda}) \, C^{\Zak} \, S(\ii \lambda \nabla_k) + \order_{\norm{\cdot}}(\lambda^{\infty})
		\notag \\
		&= S(\ii \lambda \nabla_k)^{-1} \, \Op_{\lambda}(\mathfrak{c} \pi_{\lambda}) \, S(\ii \lambda \nabla_k) + \order_{\norm{\cdot}}(\lambda^{\infty})
		. 
		\label{real_eff:eqn:computation_C_fatPi_C}
	\end{align}
	Given that the weights are real and $\specrel(-k) = - \specrel(k)$, Lemma~\ref{technical:lem:c_pi_pi} applies, and we deduce $\mathfrak{c} \pi_{\lambda} = \pi_{\lambda}$. Once we plug that back into \eqref{real_eff:eqn:computation_C_fatPi_C}, we obtain 
	\begin{align*}
		\ldots &= S(\ii \lambda \nabla_k)^{-1} \, \Op_{\lambda}(\pi_{\lambda}) \, S(\ii \lambda \nabla_k) + \order_{\norm{\cdot}}(\lambda^{\infty})
		\\
		&= \pmb{\Pi}_{\lambda} + \order_{\norm{\cdot}}(\lambda^{\infty})
		, 
	\end{align*}
	and thus finish the proof of the claim. 
\end{proof}
%

\subsection{Source-free condition} 
\label{physical:no_sources}
The almost-invariant subspace $\ran \pmb{\Pi}_{\lambda}$ supports real states if and only if the relevant spectrum is chosen symmetrically and the weights $(\eps_{\lambda},\mu_{\lambda})$ are real. Both of these conditions are \emph{not} necessary for the second physicality condition. Instead of checking that elements of the almost-invariant subspace approximately satisfy \eqref{intro:eqn:traditional_Maxwell_equations_no_source}, we use the characterization in terms of the invariant subspaces 
\begin{align*}
	\Hil_{\lambda} = \Jphys_{\lambda} \oplus \Gphys 
	= \ran \Pphys_{\lambda} \oplus \ran \Qphys_{\lambda}
	. 
\end{align*}
The crucial ingredient in the proof is the assumption that $\specrel$ does not contain ground state bands, because then $\ran \pmb{\Pi}_{\lambda}$ is separated “spectrally” from the unphysical space of zero modes $\Gphys$. 
\begin{proposition}\label{physical:prop:divergence_cond_Pi_lambda}
	Under the conditions of Proposition~\ref{sapt:prop:projection},
	$\pmb{\Pi}_{\lambda} \, \mathbf{P}_{\lambda} = \pmb{\Pi}_{\lambda} + \order_{\norm{\cdot}}(\lambda^{\infty})$ holds. 
\end{proposition}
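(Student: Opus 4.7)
The plan is to show the equivalent statement
\[
\pmb{\Pi}_{\lambda} \, \Qphys_{\lambda} = \order_{\norm{\cdot}}(\lambda^{\infty}),
\]
since $\mathbf{P}_{\lambda} = \id_{\Hil_\lambda} - \Qphys_{\lambda}$. Two facts pull in opposite spectral directions and make this plausible. On one hand, $\Mphys_{\lambda} \, \Qphys_{\lambda} = 0$ identically, because $\Rot$ annihilates gradient fields and hence $\Mphys_{\lambda}$ vanishes on the zero-mode subspace $\Gphys$. On the other hand, by the Gap Condition~\ref{intro:assumption:gap_condition}(iii) the set $\Omega_{\mathrm{rel}}(\BZ) := \bigcup_{k \in \BZ} \specrel(k)$ is bounded away from $0$, so that $\Mphys_{\lambda}$ is effectively invertible on $\ran \pmb{\Pi}_{\lambda}$.

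To make "invertibility on $\ran \pmb{\Pi}_{\lambda}$" precise I would construct a Moyal parametrix in the auxiliary representation. Its principal-symbol input is that
\[
\pi_0(k) \, \Msymb_0(r,k) \, \pi_0(k) = \tau(r) \, \sum_{n \in \Index} \omega_n(k) \, \sopro{\varphi_n(k)}{\varphi_n(k)}
\]
is uniformly invertible on $\ran \pi_0(k)$, since by Assumption~\ref{intro:assumption:eps_mu}(ii) and the exclusion of ground state bands one has
\[
\inf_{(r,k)} \min_{n \in \Index} \babs{\tau(r) \, \omega_n(k)} \; \geq \; \bigl ( \inf_r \tau(r) \bigr ) \cdot \min_{n \in \Index} \inf_{k \in \BZ} \sabs{\omega_n(k)} \; > \; 0.
\]
Iterating the Moyal calculus order-by-order in $\lambda$, in the spirit of Proposition~\ref{sapt:prop:projection} and the Moyal-resolvent construction of \cite{DeNittis_Lein:adiabatic_periodic_Maxwell_PsiDO:2013}, one obtains an equivariant semiclassical symbol $g_\lambda$ of order $0$ whose quantization, transported back to $\Hil_\lambda$ via $\Zak^{-1} \, S(\ii\lambda\nabla_k)^{-1}$, yields a bounded operator $R_\lambda \in \mathcal{B}(\Hil_\lambda)$ satisfying
\[
R_\lambda \, \bigl ( \pmb{\Pi}_\lambda \, \Mphys_\lambda \, \pmb{\Pi}_\lambda \bigr ) = \pmb{\Pi}_\lambda + \order_{\norm{\cdot}}(\lambda^{\infty}).
\]

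Once $R_\lambda$ is available, the conclusion drops out from a short manipulation that uses, in turn, this identity, the near-commutation $[\Mphys_\lambda , \pmb{\Pi}_\lambda] = \order_{\norm{\cdot}}(\lambda^{\infty})$ from Proposition~\ref{sapt:prop:projection}, the idempotence $\pmb{\Pi}_\lambda^2 = \pmb{\Pi}_\lambda$, and $\Mphys_\lambda \, \Qphys_\lambda = 0$:
\begin{align*}
\pmb{\Pi}_\lambda \, \Qphys_\lambda
&= R_\lambda \, \pmb{\Pi}_\lambda \, \Mphys_\lambda \, \pmb{\Pi}_\lambda \, \Qphys_\lambda + \order_{\norm{\cdot}}(\lambda^{\infty}) \\
&= R_\lambda \, \pmb{\Pi}_\lambda^{2} \, \Mphys_\lambda \, \Qphys_\lambda + \order_{\norm{\cdot}}(\lambda^{\infty}) \\
&= \order_{\norm{\cdot}}(\lambda^{\infty}).
\end{align*}
Therefore $\pmb{\Pi}_\lambda \, \mathbf{P}_\lambda = \pmb{\Pi}_\lambda - \pmb{\Pi}_\lambda \, \Qphys_\lambda = \pmb{\Pi}_\lambda + \order_{\norm{\cdot}}(\lambda^{\infty})$. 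The real work sits in the parametrix construction of the previous paragraph: the non-routine part is checking that each correction of $g_\lambda$ stays within the equivariant Hörmander classes on $\R^6$ of \cite{DeNittis_Lein:adiabatic_periodic_Maxwell_PsiDO:2013}, so that the asymptotic resummation produces a genuine bounded operator on $\Zak \Hper$. An alternative route circumvents the parametrix altogether: choosing $\chi \in \Cont^{\infty}_{\mathrm{c}}(\R)$ with $\chi \equiv 1$ on a neighbourhood of $\Omega_{\mathrm{rel}}(\BZ)$ and $\chi(0) = 0$, Helffer--Sjöstrand together with the spectral localization of $\pmb{\Pi}_\lambda$ yields $\chi(\Mphys_\lambda) \, \pmb{\Pi}_\lambda = \pmb{\Pi}_\lambda + \order(\lambda^{\infty})$ and $[\chi(\Mphys_\lambda), \pmb{\Pi}_\lambda] = \order(\lambda^{\infty})$, which combine with the exact identity $\chi(\Mphys_\lambda) \, \Qphys_\lambda = \chi(0) \, \Qphys_\lambda = 0$ (a direct consequence of the block-diagonal form $\Mphys_\lambda = \Mphys_\lambda|_{\Jphys_\lambda} \oplus 0|_{\Gphys}$) to give the same conclusion.
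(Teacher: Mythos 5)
Your proposal is correct, and the \emph{main} route you give (a Moyal parametrix for the reduced operator $\pmb{\Pi}_{\lambda} \Mphys_{\lambda} \pmb{\Pi}_{\lambda}$) is a genuinely different argument from the paper's, while the \emph{alternative} route you sketch at the end is essentially what the paper does. The paper works in the auxiliary representation, introduces a compactly supported $\chi$ with $\chi \vert_{\specrel} = 1$ and $\chi \vert_{\mathcal{U}} = 0$ near $\omega = 0$, and proves $\Pi_{\lambda} \, \chi(M_{\lambda}) = \Pi_{\lambda} + \order_{\norm{\cdot}}(\lambda^{\infty})$ by comparing two Helffer--Sj\"ostrand integrals symbolically ($\pi_{\lambda} \Weyl \chi_{\Weyl} = \pi_{\lambda} + \order(\lambda^{\infty})$, using a local Cauchy-integral representation of $\pi_{\lambda}$ and $\chi \vert_{\supp \eta} = 1$); the exact identity $\chi(M_{\lambda}) Q_{\lambda} = 0$ then finishes. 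Your parametrix version rests on the same underlying spectral input---the relevant bands, rescaled by $\tau$, are uniformly bounded away from zero once ground state bands are excluded (this is the content of the cited Proposition~1.4~(iii), and also why the paper can choose $\mathcal{U}$)---but uses it to invert $\pi_{0} \Msymb_{0} \pi_{0}$ on $\ran \pi_{0}$ rather than to separate $\chi$ from $0$. Once the parametrix $R_{\lambda}$ with $R_{\lambda} (\pmb{\Pi}_{\lambda} \Mphys_{\lambda} \pmb{\Pi}_{\lambda}) = \pmb{\Pi}_{\lambda} + \order_{\norm{\cdot}}(\lambda^{\infty})$ is in hand, your three-line computation using $[\Mphys_{\lambda},\pmb{\Pi}_{\lambda}] = \order_{\norm{\cdot}}(\lambda^{\infty})$ and $\Mphys_{\lambda} Q_{\lambda} = 0$ is clean and arguably shorter than the paper's final chain; what the Helffer--Sj\"ostrand approach buys is that all the symbol-class bookkeeping is already packaged in Lemma~\ref{sapt:lem:existence_Moyal_resolvent} and the proof of Proposition~\ref{sapt:prop:projection}, whereas your route requires the additional (routine but not free) check that the parametrix corrections $g_{n}$ land in $\Hoermeq{0}{0} \bigl ( \mathcal{B}(\HperT) \bigr )$---which you correctly flag as the real work.

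Two small points to keep straight if you carry out the parametrix in full. First, although $\Mphys_{\lambda}$ is unbounded, the reduced operator $\pmb{\Pi}_{\lambda} \Mphys_{\lambda} \pmb{\Pi}_{\lambda}$ is bounded precisely because $\pi_{\lambda} \in \SemiHoermeq{1}{0} \bigl ( \mathcal{B}(\HperT,\domainT) \bigr )$, so $\Msymb_{\lambda} \Weyl \pi_{\lambda}$ is an order-zero $\mathcal{B}(\HperT)$-valued symbol; your argument implicitly uses this and should say so. Second, what you actually need is a parametrix supported on $\ran \piref$ after conjugation (or, equivalently, $g_{\lambda} = \pi_{\lambda} \Weyl g_{\lambda} \Weyl \pi_{\lambda} + \order(\lambda^{\infty})$), so that $R_{\lambda}$ does not see the unbounded complementary block; the order-by-order scheme naturally produces this if you seed it with $g_{0} := \tau^{-1} \sum_{n \in \Index} \omega_{n}^{-1} \sopro{\varphi_{n}}{\varphi_{n}}$, which is where the exclusion of ground state bands enters.
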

\begin{proof}
	We will equivalently prove $\Pi_{\lambda} \, Q_{\lambda} = \order_{\norm{\cdot}}(\lambda^{\infty})$ in the $\lambda$-independent representation on $\Zak \Hper$ instead. The Gap Condition~\ref{intro:assumption:gap_condition} stipulates that none of the relevant bands are ground state bands in the sense of \cite[Definition~3.6]{DeNittis_Lein:adiabatic_periodic_Maxwell_PsiDO:2013} and thus, by \cite[Proposition~1.4~(iii)]{DeNittis_Lein:adiabatic_periodic_Maxwell_PsiDO:2013}, there exists a neighborhood $\mathcal{U} \subset \R$ of $\omega = 0$ such that $\specrel \cap \mathcal{U} = \emptyset$ where $\specrel := \bigcup_{(r,k) \in \R^6} \tau(r) \, \specrel(k)$. Moreover, let $\chi : \R \longrightarrow [0,1]$ be a smoothened bump function so that $\chi \vert_{\specrel} = 1$ and $\chi \vert_{\mathcal{U}} = 0$. Since we are considering only a finite number of relevant bands, we can choose $\chi$ as to have compact support. Then we can write 
	\begin{align*}
		\chi(M_{\lambda}) &= \frac{1}{\pi} \int_{\C} \dd z \wedge \dd \bar{z} \, \partial_{\bar{z}} \tilde{\chi}(z) \, \bigl ( M_{\lambda} - z \bigr )^{-1}
	\end{align*}
	using the Helffer-Sjöstrand formula \cite[Proposition~7.2]{Helffer_Sjoestrand:mag_Schroedinger_equation:1989} which involves choosing a pseudoanalytic extension $\tilde{\chi}$ of $\chi$ (see \eg \cite[equation~(2)]{Davies:functional_calculus:1995} or \cite[Chapter~8]{Dimassi_Sjoestrand:spectral_asymptotics:1999}). Up to errors of arbitrarily small order in $\lambda$, we can use the local Moyal resolvent from Lemma~\ref{sapt:lem:existence_Moyal_resolvent} to locally define 
	\begin{align}
		\chi_{\Weyl}(r,k) := \frac{1}{\pi} \int_{\C} \dd z \wedge \bar{z} \, \partial_{\bar{z}} \tilde{\chi}(z) \, \bigl ( R_{\lambda}(z) \bigr )(r,k) 
		. 
		\label{sapt:eqn:chi_symbol}
	\end{align}
	Hence, we can find a resummation in $\SemiHoermeq{0}{0} \bigl ( \mathcal{B}(\HperT) \bigr )$ which we will also denote with $\chi_{\Weyl}$, and by definition this resummation satisfies 
	\begin{align*}
		\chi(M_{\lambda}) &= \Op_{\lambda}(\chi_{\Weyl}) + \order_{\norm{\cdot}}(\lambda^{\infty}) 
	\end{align*}
	and consequently 
	\begin{align*}
		\Pi_{\lambda} \, \chi(M_{\lambda}) = \Op_{\lambda} \bigl ( \pi_{\lambda} \Weyl \chi_{\Weyl} \bigr ) + \order_{\norm{\cdot}}(\lambda^{\infty})
		. 
	\end{align*}
	We now use the Helffer-Sjöstrand formalism on the level of symbols to express the projection similar to \eqref{sapt:eqn:chi_symbol}: assume $\mathcal{V}$ is a neighborhood of $(r_0,k_0)$ where we can use a single contour $\Lambda$ to enclose only $\tau(r) \, \specrel(k)$ for all $(r,k) \in \mathcal{V}$. Let $\eta : \R \longrightarrow [0,1]$ be a compactly supported smooth bump function such that $\eta \vert_{\tau(r) \, \specrel(k)} = 1$ holds for all $(r,k) \in \mathcal{V}$ and $\supp \eta$ is contained in the interior of $\Lambda$. Then we can write 
	\begin{align*}
		\pi_{\lambda}(r,k) &= \frac{1}{\pi} \int_{\C} \dd z \wedge \dd \bar{z} \, \partial_{\bar{z}} \tilde{\eta}(z) \, \bigl ( R_{\lambda}(z) \bigr )(r,k) 
		+ \order(\lambda^{\infty})
		\\
		&
		=: \eta_{\Weyl}(r,k) + \order(\lambda^{\infty})
	\end{align*}
	as a complex integral in a neighborhood of $(r_0,k_0)$, and a straightforward adaptation of the arguments in the proof of \cite[Theorem~4]{Davies:functional_calculus:1995} as well as the local nature of the asymptotic expansion of $\Weyl$ yield 
	\begin{align*}
		\bigl ( \pi_{\lambda} \Weyl \chi_{\Weyl} \bigr )(r,k) &= \bigl ( \eta_{\Weyl} \Weyl \chi_{\Weyl} \bigr )(r,k) + \order(\lambda^{\infty})
		\\
		&= \frac{1}{\pi} \int_{\C} \dd z \wedge \dd \bar{z} \, \partial_{\bar{z}} \widetilde{(\eta \, \chi)}(z) \, \bigl ( R_{\lambda}(z) \bigr )(r,k) 
		+ \order(\lambda^{\infty})
		. 
	\end{align*}
	Seeing as $\chi \vert_{\supp \eta} = 1$, we deduce 
	\begin{align*}
		\pi_{\lambda} \Weyl \chi_{\Weyl} = \pi_{\lambda} + \order(\lambda^{\infty}) 
		\in \Hoermeq{0}{0} \bigl ( \mathcal{B}(\HperT) \bigr ) 
	\end{align*}
	and thus also 
	\begin{align*}
		\Pi_{\lambda} \, \chi(M_{\lambda}) &= \Op_{\lambda} \bigl ( \pi_{\lambda} \Weyl \chi_{\Weyl} \bigr ) + \order_{\norm{\cdot}}(\lambda^{\infty})
		\\
		&= \Op_{\lambda}(\pi_{\lambda}) + \order_{\norm{\cdot}}(\lambda^{\infty})
		= \Pi_{\lambda} + \order_{\norm{\cdot}}(\lambda^{\infty})
		. 
	\end{align*}
	Furthermore, from $M_{\lambda} \, Q_{\lambda} = 0$ we obtain 
	\begin{align*}
		Q_{\lambda} &= 1_{\{ 0 \}}(M_{\lambda}) \, Q_{\lambda}
		. 
	\end{align*}
	Inserting $\chi(M_{\lambda})$ and $1_{\{ 0 \}}(M_{\lambda})$ and using that $\bigl ( \chi \, 1_{\{ 0 \}} \bigr ) (\omega) = 0$ in the sense of functions on $\R$, we get 
	\begin{align*}
		\Pi_{\lambda} \, Q_{\lambda} &= \Pi_{\lambda} \, \chi(M_{\lambda}) \, 1_{\{ 0 \}}(M_{\lambda}) \, Q_{\lambda} + \order_{\norm{\cdot}}(\lambda^{\infty}) 
		\\
		&= \Pi_{\lambda} \, \bigl ( \chi \, 1_{\{ 0 \}} \bigr )(M_{\lambda}) \, Q_{\lambda} + \order_{\norm{\cdot}}(\lambda^{\infty}) 
		\\
		&= \order_{\norm{\cdot}}(\lambda^{\infty}) 
		. 
	\end{align*}
	This concludes the proof. 
\end{proof}
%
\section{Effective electrodynamics} 
\label{real_eff}
Now we are ready to state and prove our main result: Simply put, the results from Section~\ref{sapt} deal with the dynamical Maxwell equations \eqref{intro:eqn:traditional_Maxwell_equations_dynamics} while those in Section~\ref{physical} concern themselves with the no-source conditions \eqref{intro:eqn:traditional_Maxwell_equations_no_source} as well as with the reality of electromagnetic fields. Their combination yields 
\begin{theorem}[Effective light dynamics]\label{real_eff:thm:eff_edyn}
	Suppose Assumption~\ref{intro:assumption:eps_mu} holds, $(\eps_{\lambda},\mu_{\lambda})$ are real, and 
	that the relevant family of bands $\Omega_{\mathrm{rel}}$ satisfies the \emph{Gap Condition~\ref{intro:assumption:gap_condition}} and $\specrel(k) = - \specrel(-k)$. Then the superadiabatic projection $\pmb{\Pi}_{\lambda}$, the intertwining unitary $\mathbf{U}_{\lambda}$ and the symbol $\Msymb_{\eff}$ constructed in Theorem~\ref{sapt:thm:eff_dyn} exist and we have \emph{effective light dynamics} in the following sense: 
	\begin{enumerate}[(i)] 
		\item States in $\mathfrak{K}_{\lambda} := \ran \pmb{\Pi}_{\lambda}$ are physical states up to $\order_{\norm{\cdot}}(\lambda^{\infty})$, namely
		\begin{enumerate}[(a)]
			\item they satisfy the divergence-free conditions \eqref{intro:eqn:traditional_Maxwell_equations_no_source}, $\pmb{\Pi}_{\lambda} \, \mathbf{P}_{\lambda} = \pmb{\Pi}_{\lambda} + \order_{\norm{\cdot}}(\lambda^{\infty})$, 
			\item $\mathfrak{K}_{\lambda}$ supports real-valued solutions, \ie $\bigl [ C , \pmb{\Pi}_{\lambda} \bigr ] = \order_{\norm{\cdot}}(\lambda^{\infty})$. 
		\end{enumerate}
		\item There exist effective light dynamics generated by 
		\begin{align*}
			\mathbf{M}_{\eff} = \Zak^{-1} \, \mathbf{U}_{\lambda}^{-1} \, \Op_{\lambda}(\Msymb_{\eff}) \, \mathbf{U}_{\lambda} \, \Zak + \order_{\norm{\cdot}}(\lambda^{\infty}) 
		\end{align*}
		which approximate the full dynamics, 
		\begin{align*}
			\e^{- \ii t \mathbf{M}_{\lambda}} \, \pmb{\Pi}_{\lambda} \, \Re &= \Re \, \e^{- \ii t \mathbf{M}_{\eff}} \, \pmb{\Pi}_{\lambda} \, \Re + \order_{\norm{\cdot}}(\lambda^{\infty})
			,
		\end{align*}
		and the subspace $\mathfrak{K}_{\lambda}$ is left invariant up to errors of order $\order(\lambda^{\infty})$ in norm.
	\end{enumerate}
\end{theorem}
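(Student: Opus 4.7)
The plan is to assemble Theorem~\ref{real_eff:thm:eff_edyn} from three major results already in place: the effective-dynamics Theorem~\ref{sapt:thm:eff_dyn}, the source-free Proposition~\ref{physical:prop:divergence_cond_Pi_lambda}, and the reality Proposition~\ref{physical:prop:real_solutions_Pi_lambda}. Before invoking Theorem~\ref{sapt:thm:eff_dyn} I would verify that its hypothesis on the triviality of the Bloch bundle is met: under real weights and the antisymmetric spectral condition $\specrel(-k) = -\specrel(k)$, Proposition~\ref{physical:prop:triviality_bundle} guarantees that $\blochb(\pi_0)$ is trivial, so Theorem~\ref{sapt:thm:eff_dyn} applies and provides the superadiabatic projection $\pmb{\Pi}_\lambda$, the intertwining unitary $\mathbf{U}_\lambda$, and the symbol $\Msymb_{\eff}$.

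Part (i) is then essentially immediate. Statement (i)(a) is exactly the conclusion of Proposition~\ref{physical:prop:divergence_cond_Pi_lambda}, whose hypotheses (Assumption~\ref{intro:assumption:eps_mu} together with the Gap Condition excluding ground state bands) are subsumed by our assumptions. Statement (i)(b) is Proposition~\ref{physical:prop:real_solutions_Pi_lambda} read through Lemma~\ref{physical:lem:characterization_real_states}, which states that a subspace supports real states iff its orthogonal projection commutes with $C$.

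For part (ii) I would set $\mathbf{M}_{\eff} := \Zak^{-1} \, \mathbf{U}_\lambda^{-1} \, \Op_\lambda(\Msymb_{\eff}) \, \mathbf{U}_\lambda \, \Zak$, which is self-adjoint since $\Op_\lambda(\Msymb_{\eff})$ is self-adjoint and $\mathbf{U}_\lambda$, $\Zak$ are unitary. The key observation is that the real-part operator $\Re = \tfrac{1}{2}(1+C)$ exactly commutes with $\e^{-\ii t \mathbf{M}_\lambda}$ by \eqref{physical:eqn:Re_edyn_commute}, and almost commutes with $\pmb{\Pi}_\lambda$ by (i)(b), since $[\Re, \pmb{\Pi}_\lambda] = \tfrac{1}{2}[C, \pmb{\Pi}_\lambda] = \order_{\norm{\cdot}}(\lambda^\infty)$. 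Inserting an extra $\Re = \Re^2$ and combining these two facts with the dynamical approximation~\eqref{sapt:eqn:effective_dynamics_approximate_phys} from Theorem~\ref{sapt:thm:eff_dyn} yields
\begin{align*}
	\e^{-\ii t \mathbf{M}_\lambda} \, \pmb{\Pi}_\lambda \, \Re
	&= \e^{-\ii t \mathbf{M}_\lambda} \, \Re \, \pmb{\Pi}_\lambda \, \Re + \order_{\norm{\cdot}}(\lambda^\infty) \\
	&= \Re \, \e^{-\ii t \mathbf{M}_\lambda} \, \pmb{\Pi}_\lambda \, \Re + \order_{\norm{\cdot}}(\lambda^\infty) \\
	&= \Re \, \e^{-\ii t \mathbf{M}_{\eff}} \, \pmb{\Pi}_\lambda \, \Re + \order_{\norm{\cdot}} \bigl ( \lambda^\infty (1+\abs{t}) \bigr ),
\end{align*}
where the last line uses a standard Duhamel estimate to pass from $\Zak^{-1} \, \mathbf{U}_\lambda^* \, \e^{-\ii t \Op_\lambda(\Msymb_{\eff})} \, \mathbf{U}_\lambda \, \Zak$ to $\e^{-\ii t \mathbf{M}_{\eff}}$ up to the $\order(\lambda^\infty)$ discrepancy in their generators. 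The almost-invariance of $\mathfrak{K}_\lambda = \ran \pmb{\Pi}_\lambda$ under $\e^{-\ii t \mathbf{M}_\lambda}$ follows from $[\mathbf{M}_\lambda, \pmb{\Pi}_\lambda] = \order_{\norm{\cdot}}(\lambda^\infty)$ in Theorem~\ref{sapt:thm:eff_dyn}, again propagated to the evolution group by Duhamel.

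The proof is therefore mainly a matter of careful bookkeeping rather than new estimates; the one place where the argument requires real attention is checking that the insertion and propagation of $\Re$ do not introduce errors larger than $\order(\lambda^\infty)$, which relies precisely on the approximate commutation $[C, \pmb{\Pi}_\lambda] = \order_{\norm{\cdot}}(\lambda^\infty)$ coming from (i)(b). In this sense the real content of the theorem was already carried out in Proposition~\ref{physical:prop:real_solutions_Pi_lambda}, and the main conceptual obstacle in the broader development was the unusual reality constraint, which forced the consideration of families of bands closed under the symmetry $k \mapsto -k$ rather than single isolated bands.
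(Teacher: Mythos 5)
Your proof is correct and follows the same route as the paper: verify Bloch-bundle triviality so that Theorem~\ref{sapt:thm:eff_dyn} applies, then cite Propositions~\ref{physical:prop:divergence_cond_Pi_lambda} and \ref{physical:prop:real_solutions_Pi_lambda} for part~(i), and deduce part~(ii) from equation~\eqref{sapt:eqn:effective_dynamics_approximate_phys}. Two small remarks. First, Proposition~\ref{physical:prop:triviality_bundle} is stated under the hypothesis that $\Zak^{-1}\ran\hat{\pi}_0$ supports real states, not directly under the spectral condition $\specrel(-k) = -\specrel(k)$; you need the intermediate step (Proposition~\ref{physical:prop:equivalences_reality_condition} or Corollary~\ref{physical:cor:spectral_condition_reality}) to translate the spectral condition and the reality of $(\eps,\mu)$ into the required hypothesis, which is precisely what the paper does. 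Second, your treatment of part~(ii) is actually more explicit than the paper's, which compresses it into a one-line remark: you correctly observe that inserting $\Re = \Re^2$, propagating $\Re$ past $\pmb{\Pi}_\lambda$ via $[C,\pmb{\Pi}_\lambda]=\order_{\norm{\cdot}}(\lambda^\infty)$, and then past the exact dynamics via \eqref{physical:eqn:Re_edyn_commute} is what makes the $\Re$-dressed statement follow from the undressed one. Note that once you \emph{define} $\mathbf{M}_{\eff}$ to equal the conjugated operator exactly (which is permitted, since the theorem only prescribes it up to $\order_{\norm{\cdot}}(\lambda^\infty)$), the final Duhamel step you invoke is vacuous; what actually carries the $\order(\lambda^\infty(1+\abs{t}))$ error is the passage from $\e^{-\ii t\mathbf{M}_\lambda}\pmb{\Pi}_\lambda$ to the conjugated effective propagator, which is exactly \eqref{sapt:eqn:effective_dynamics_approximate_phys}.
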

\begin{proof}
	In order to be able to apply Theorem~\ref{sapt:thm:eff_dyn}, we need to prove the existence of a symbol $u_0 \in \Hoermeq{0}{0} \bigl ( \mathcal{B}(\HperT) \bigr )$. By our discussion in the proof of Proposition~\ref{sapt:prop:unitary}, this is equivalent to the triviality of the Bloch bundle associated to $\hat{\pi}_0$. However, according to Proposition~\ref{physical:prop:equivalences_reality_condition} the restriction to real weights and the condition $\specrel(-k) = - \specrel(k)$ is in fact equivalent to $\Zak^{-1} \, \ran \hat{\pi}_0$ supporting real states, and hence, the Bloch bundle is trivial by Proposition~\ref{physical:prop:triviality_bundle}. This means the assumptions of Theorem~\ref{sapt:thm:eff_dyn} are satisfied and the superadiabatic projection $\pmb{\Pi}_{\lambda}$, the intertwining unitary $\mathbf{U}_{\lambda}$ and the symbol $\Msymb_{\eff}$ of the effective Maxwell operator exist, and have the enumerated properties. 
	
	Moreover, also the assumptions of Propositions~\ref{physical:prop:real_solutions_Pi_lambda} and \ref{physical:prop:divergence_cond_Pi_lambda} are also satisfied, and states in $\mathfrak{K}_{\lambda}$ are physical states up to $\order(\lambda^{\infty})$. The remaining items are just a restatement of equation~\eqref{sapt:eqn:effective_dynamics_approximate_phys}. 
\end{proof}
\begin{corollary}\label{real_eff:cor:no_eff_single_band_dynamics}
	Suppose the conditions of Theorem~\ref{real_eff:thm:eff_edyn} are satisfied. Then there cannot be effective one-band dynamics describing the evolution of \emph{physical} states. 
\end{corollary}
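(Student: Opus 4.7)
The plan is to argue by contradiction, combining the antisymmetry condition $\specrel(k)=-\specrel(-k)$ from Theorem~\ref{real_eff:thm:eff_edyn} with the exclusion of ground state bands built into the Gap Condition~\ref{intro:assumption:gap_condition}~(iii). Effective one-band dynamics would mean the relevant index set $\Index$ contains a single element, say $\Index=\{n\}$, so that $\Op_{\lambda}(\Msymb_{\eff})$ acts on (the range of) a rank-one reference projection and $\specrel(k)=\{\omega_n(k)\}$.

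First I would unpack what $\specrel(k)=-\specrel(-k)$ means in this single-band situation: since both sides are singletons, the equality forces the pointwise identity $\omega_n(k)=-\omega_n(-k)$ for every $k\in\BZ$. Evaluating at the high-symmetry point $k=0$ yields $\omega_n(0)=-\omega_n(0)$, hence $\omega_n(0)=0$. This directly contradicts condition (iii) of the Gap Condition, which demands $\inf_{n\in\Index}\sabs{\omega_n(0)}>0$, \ie that no relevant band be a ground state band in the sense of \cite[Definition~3.6]{DeNittis_Lein:adiabatic_periodic_Maxwell_PsiDO:2013}. Consequently the hypotheses of Theorem~\ref{real_eff:thm:eff_edyn} are incompatible with $\sabs{\Index}=1$, and no genuinely one-band effective description can support physical (real-valued, source-free) states.

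There is no serious technical obstacle here; the only subtlety worth emphasising is \emph{why} the antisymmetry of $\specrel$ was imposed in the first place. I would briefly recall the chain: reality of physical fields forces $\bigl[C,\pmb{\Pi}_{\lambda}\bigr]=\order_{\norm{\cdot}}(\lambda^{\infty})$ by Lemma~\ref{physical:lem:characterization_real_states}, which via Proposition~\ref{physical:prop:real_solutions_Pi_lambda} and Corollary~\ref{physical:cor:spectral_condition_reality} is equivalent to $\specrel(k)=-\specrel(-k)$; thus one cannot drop the antisymmetry without losing reality. Hence the picture is that a single isolated band $\omega_n$ with $\omega_n(0)\neq 0$ \emph{must} be paired with its symmetric twin $-\omega_n(-\,\cdot\,)$ (a distinct band, since $\omega_n(0)\neq 0$), so the minimal physically meaningful effective dynamics is at least a twin-band (rank $\geq 2$) problem, which also motivates the discussion in Section~\ref{twin_bands}.
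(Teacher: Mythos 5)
Your proof is correct and follows essentially the same strategy as the paper's — both exploit the antisymmetry $\specrel(-k)=-\specrel(k)$ imposed by Theorem~\ref{real_eff:thm:eff_edyn}. The paper states that this condition implies $\sabs{\Index}$ is even because bands pair off with their symmetric twins; you instead assume $\sabs{\Index}=1$ and derive a direct contradiction at $k=0$. The small but genuine improvement in your argument is that you spell out \emph{why} a band cannot coincide with its own symmetric twin: if $\omega_n(k)=-\omega_n(-k)$ for all $k$, then $\omega_n(0)=0$, which violates Gap Condition~\ref{intro:assumption:gap_condition}~(iii) (the exclusion of ground state bands). The paper's claim that the bands "come in pairs" silently relies on exactly this fact, so your evaluation at $k=0$ closes a gap the paper leaves implicit. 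Both routes prove the corollary; yours is more self-contained on the pairing issue, while the paper's establishes the slightly stronger parity statement that is reused implicitly in the twin-band discussion of Section~\ref{twin_bands}.
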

\begin{proof}
	The one-band case is excluded, because the condition $\specrel(-k) = - \specrel(k)$ implies that $\abs{\Index}$ is even: If $\omega_n(k)$ is contained in $\specrel(k)$, then also its symmetric twin $- \omega_n(-k) \in \specrel(k)$ is in the relevant bands so that positive-negative frequency bands come in pairs. 
\end{proof}
This Corollary is \emph{very} significant with regard to existing work: since single bands cannot support real states, effective single-band dynamics \emph{cannot} describe the evolution of physical states. Instead, one always needs to consider linear combinations of counter-propagating waves (\cf Section~\ref{twin_bands}).

\subsection{A Peierls substitution for the Maxwell operator} 
\label{twin_bands:Harper}
To get a better idea what explicit expressions for for $\Msymb_{\eff}$ constructed in Theorem~\ref{real_eff:thm:eff_edyn} look like, let us make an attempt to find on in the simplest possible situation. In general, there are two obstacles: (1) we are dealing with a genuine multiband problem and (2) each contiguous family of bands may have a non-trivial topology. Hence, even if $\Omega_{\mathrm{rel}}$ consists of isolated bands, each of these bands may not be geometrically trivial. However, if we impose triviality of the single-band Bloch bundles as an additional assumption, we obtain an explicit expression for $\Msymb_{\eff}$ (\cf Section~\ref{technical:ray_optics} for details of the computation). 
\begin{corollary}\label{real_eff:cor:simplest_effective_Maxwellian}
	In addition to the assumptions in Theorem~\ref{real_eff:thm:eff_edyn}, let us suppose $\Omega_{\mathrm{rel}}$ consists of isolated bands and all associated single-band Bloch bundles are trivial. Then the first two terms of the effective Maxwellian $\Msymb_{\eff} = \Msymb_{\eff,0} + \lambda \, \Msymb_{\eff,1} + \order(\lambda^2)$ are given by equations~\eqref{intro:eqn:Meff_0_simple} and \eqref{intro:eqn:Meff_1_simple}. 
\end{corollary}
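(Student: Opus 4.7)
The plan is to expand $\Msymb_{\eff} = \piref \, u_{\lambda} \Weyl \Msymb_{\lambda} \Weyl u_{\lambda}^* \, \piref$ from Proposition~\ref{sapt:prop:effective_Maxwellian} up to order $\lambda^1$, using the triviality hypothesis on each single-band Bloch bundle to pin down an explicit choice of principal intertwiner. Concretely, since each isolated band $\omega_n$ has trivial associated Bloch bundle, one can pick a globally defined, $\Gamma^*$-equivariant analytic Bloch function $k \mapsto \varphi_n(k) \in \HperT$ with $\Mper(k) \varphi_n(k) = \omega_n(k) \varphi_n(k)$, and then define
\begin{align*}
    u_0(k) := \sum_{n \in \Index} \sopro{\chi_n}{\varphi_n(k)} + u_0^{\perp}(k) ,
\end{align*}
with $u_0^{\perp}$ any right-covariant completion on $(\ran \pi_0(k))^{\perp}$; this piece is invisible under the $\piref \cdot \piref$ sandwich, and the sum is right-covariant by equivariance of the $\varphi_n$.

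For the principal symbol, inserting $\Msymb_0(r,k) = \tau(r) \, \Mper(k)$ from~\eqref{setup:eqn:M_lambda_PsiDO} into $\Msymb_{\eff,0}(r,k) = \piref \, u_0(k) \, \Msymb_0(r,k) \, u_0^*(k) \, \piref$ and using $\bscpro{\varphi_n(k)}{\Mper(k) \varphi_j(k)}_{\HperT} = \omega_n(k) \, \delta_{nj}$ (by selfadjointness of $\Mper(k)$ and orthonormality of the $\varphi_n$) delivers~\eqref{intro:eqn:Meff_0_simple} essentially for free.

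For the subprincipal symbol I would expand the triple Moyal product using~\eqref{setup:eqn:Moyal_asymp_expansion}, collecting four types of contributions at order $\lambda$: the $u_1$-pieces $u_1 \Msymb_0 u_0^* + u_0 \Msymb_0 u_1^*$, the direct term $u_0 \Msymb_1 u_0^*$, and the Moyal cross terms $-\tfrac{\ii}{2} \bigl ( \{u_0, \Msymb_0\} \, u_0^* + \{u_0 \Msymb_0, u_0^*\} \bigr )$. The $u_1$-dependent pieces would be reduced to expressions in $u_0$ alone by differentiating the unitarity $u_\lambda \Weyl u_\lambda^* = \id$ and the intertwining relation $u_\lambda \Weyl \pi_\lambda \Weyl u_\lambda^* = \Piref$ at order $\lambda$, yielding $u_0 u_1^* + u_1 u_0^* = 0$ together with a relation determining the $\piref$-off-diagonal blocks of $u_1 u_0^*$. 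Since $\Msymb_{\eff,0}$ is already diagonal in $\{\chi_n\}$, these cancellations leave only the Berry-connection contribution $\tfrac{1}{2} \bigl ( \omega_n(k) + \omega_j(k) \bigr ) \, \nabla_r \tau(r) \cdot \mathcal{A}_{nj}(k) \, \sopro{\chi_n}{\chi_j}$, once one identifies $\ii \, \bscpro{\varphi_n(k)}{\nabla_k \varphi_j(k)}_{\HperT} = \mathcal{A}_{nj}(k)$. Independently, the direct term $\piref \, u_0 \, \Msymb_1 \, u_0^* \, \piref$ is computed from the off-diagonal block form of $\Msymb_1$ displayed in~\eqref{setup:eqn:M_lambda_PsiDO}: the weight $W = \mathrm{diag}(\eps^{-1},\mu^{-1})$ cancels against the $(\eps,\mu)$-weights in the $\HperT$-scalar product, leaving integrals $\int_{\T^3} \dd y \, \overline{\varphi_n^H(k,y)} \times \varphi_j^E(k,y)$ that reproduce~\eqref{intro:eqn:Poynting_vector}, and the fact that the off-diagonal block is anti-hermitian-like in the $E/H$-swap accounts for the combination $\Poynt_{nj}(k) - \overline{\Poynt_{jn}(k)}$.

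The main obstacle is carrying out the order-$\lambda$ bookkeeping of the Moyal cross terms and the $u_1$-dependent pieces in tandem: one must invoke both the order-$\lambda$ unitarity and intertwining relations to eliminate $u_1$ in favor of derivatives of $u_0$, keep track of operator ordering because the symbols are $\mathcal{B}(\HperT)$-valued rather than scalar, and then recognize the combinatorial patterns that identify the result with the Berry connection plus the Maxwell-specific Poynting tensor piece. The mechanism parallels the Bloch-electron calculation in~\cite{PST:effective_dynamics_Bloch:2003,DeNittis_Lein:Bloch_electron:2009}, but the off-diagonal block structure of $\Msymb_\lambda$ is what produces the novel Poynting contribution that has no analogue in the Schrödinger case.
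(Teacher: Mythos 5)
Your proposal is correct and follows essentially the same route as the paper's proof: pick $u_0(k) = \sum_n \sopro{\chi_n}{\varphi_n(k)} + u_0^{\perp}(k)$ using the trivial single-band Bloch bundles, read off $\Msymb_{\eff,0}$ from the zeroth Moyal order, then split the first-order term into the $u_1$-commutator piece (which vanishes because $A_1 = 0$ forces $u_1 u_0^* = [\piref,B_1]$ to be $\piref$-off-diagonal while $\widetilde{\Msymb}_0 = u_0 \Msymb_0 u_0^*$ commutes with $\piref$), the direct $\piref u_0 \Msymb_1 u_0^* \piref$ piece yielding the Poynting tensor, and the Poisson-bracket cross terms yielding the Berry connection. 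The only place you are slightly loose is the $u_1$-vanishing step: it is not just diagonality of $\Msymb_{\eff,0}$ but the explicit off-diagonal block structure of $u_1 u_0^*$ (a consequence of $u_0$, $\pi_0$ being $r$-independent so the order-$\lambda$ unitarity defect $A_1$ vanishes) that makes $\piref [u_1 u_0^*, \widetilde{\Msymb}_0] \piref = 0$.
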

Clearly, to require that each isolated band carries a trivial topology is not an innocent fact. However, under this assumption Corollary~\ref{real_eff:cor:simplest_effective_Maxwellian} allows us to explicitly construct a simple effective model in the spirit of the Peierls substitution. As a matter of fact, reduced models of this kind are quite amenable to direct analytic study or numerical simulations. 

For the sake of concreteness, let us analyze the case of topologically trivial twin bands where $\Omega_{\mathrm{rel}} = \{ \omega_{\pm} \}$ for $\omega_{\pm}(k) = \pm \freqband(\pm k)$. Expressing the band function 
\begin{align*}
	\freqband(k) = \sum_{\gamma \in \Gamma} \widehat{\freqband}(\gamma) \; \e^{+ \ii \gamma \cdot k} 
\end{align*}
in terms of its Fourier series leads to a matrix-valued symbol with principal part 
\begin{align*}
	\Msymb_{\eff,0}(r,k) = \tau(r) \; 
	\left (
	\begin{matrix}
		\freqband(k) & 0 \\
		0 & - \freqband(-k) \\
	\end{matrix}
	\right ) 
	. 
\end{align*}
Let us now assume that $\tau$ is periodic on a much larger scale. This large-scale periodicity may be the result of extending a finite sample spanned by $L = \{ L_1 \, e_1 , L_2 \, e_2 , L_3 \, e_3 \}$ to all of $\R^3$ using periodic boundary conditions, for instance. Hence, we can expand $\tau$ in terms of the Fourier decomposition with respect to the dual lattice $\Gamma_L^*$ of the macroscopic lattice $\Gamma_L$, 
\begin{align*}
	\tau(r) &= \sum_{\gamma_L^* \in \Gamma_L^*} \widehat{\tau}(\gamma_L) \; \e^{+ \ii \gamma_L^* \cdot r}
	. 
\end{align*}
To leading order, the quantization of $\Msymb_{\eff,0}$ yields the selfadjoint operator 
\begin{align}
	M_{\mathrm{eff},0} = \frac{\tau(\ii \lambda \nabla_k)}{2} \; \left (
	\begin{matrix}
		\freqband(\hat{k}) & 0 \\
		0 & - \freqband(-\hat{k}) \\
	\end{matrix}
	\right ) + \mathrm{c.c.} 
	\label{twin_band:eqn:Harper_Maxwell_operator}
\end{align}
on $L^2(\T^3) \otimes \C^2$. This operator can be expressed in terms of shifts on real and reciprocal space, \ie 
\begin{align*}
	\hat{U}_j := \e^{+ \ii \hat{k}_j} 
\end{align*}
and, if we use that $\Gamma_L$ is a scaled version of $\Gamma$, 
\begin{align*}
	\hat{T}_j := \e^{- \frac{\lambda}{L_j} \partial_{k_j}} 
	, 
	\qquad \qquad 
	\bigl ( \hat{T}_j \psi \bigr )(k) = \psi \bigl ( k + \tfrac{\lambda}{L_j} \, e_j^{\ast} \bigr ) 
	\, . 
\end{align*}
The six unitary operators are characterized by the commutation relations 
\begin{align*}
	 \hat{T}_l \hat{U}_j= \e^{\ii \frac{\lambda}{L_j} \delta_{lj}}
	 \; \hat{U}_j \hat{T}_i
	 \, ,
	 \qquad 
	 [\hat{T}_l,\hat{T}_j]= 0 = [\hat{U}_l,\hat{U}_j]
	 \,,
	 \qquad 
	 l,j=1,2,3 
	 , 
\end{align*}
and so they generate a representation of a six-dimensional non-commutative torus on $L^2(\T^3)$ \cite[Chapter~12]{Varilly_Figueroa_Gracia_Bondia:noncommutative_geometry:2001}. 

Let us denote with $\mathcal{A}^6(\nicefrac{\lambda}{L})$ the $C^\ast$-algebra generated by $\hat{U}_j$ and $\hat{T}_j$ on $L^2(\T^3)$. We have shown that the effective models for the Maxwell dynamics in the twin bands case can be associated with a diagonal representative of the non-commutative torus  $\mathcal{A}^6(\nicefrac{\lambda}{L}) \otimes \mathrm{Mat}_{\C}(2)$.
This analogy allows us to apply all the well-known results about the theory of Harper operators to the \emph{Harper-Maxwell operator~\eqref{twin_band:eqn:Harper_Maxwell_operator}}. For instance, one can expect to recover the typical Hofstadter's butterfly-like spectrum 
which produces a splitting of the two topologically trivial bands spanned by $\omega_\pm$ in subbands which can carry a non-trivial topology. We stress that in this case the non trivial effect is due only to an incommensurability between the perturbation parameter $\lambda$ and the typical sizes $L_j$ of the lattice without any magnetic effect.
Moreover, if one removes the condition of the triviality of the single band Bloch bundles, one easily deduces that $M_{\eff,0}$ can be associated with a non-diagonal element of $\mathcal{A}^6(\nicefrac{\lambda}{L}) \otimes \mathrm{Mat}_{\C}(2)$.

\subsection{Effective dynamics for observables} 
\label{real_eff:observables}
So far, we have only derived effective electrodynamics for \emph{states}; for quantum mechanics, this immediately implies effective dynamics for the time-evolved \emph{observables} as well. However, in electrodynamics, observables are not selfadjoint operators on a Hilbert space, they are continuous functionals 
\begin{align*}
	\mathcal{F} : L^2(\R^3,\R^6) \longrightarrow \R 
\end{align*}
from the space of real-valued fields to the real numbers. Clearly, approximating the time evolution of generic observables is out of reach, but many physically relevant examples are quadratic in the fields, \eg the energy density 
\begin{align*}
	\mathcal{E}(\mathbf{E},\mathbf{H}) := \tfrac{1}{2} \bnorm{(\mathbf{E},\mathbf{H})}_{\Hil_{\lambda}}^2 
	, 
\end{align*}
or components of the space-average of the Poynting vector 
\begin{align*}
	\mathcal{S}(\mathbf{E},\mathbf{H}) := \int_{\R^3} \dd x \, \mathbf{E}(x) \times \mathbf{H}(x) 
	. 
\end{align*}
In these particular cases, we can link quadratic observables $\mathcal{F}$ to a bounded operator $F$ on the Hilbert space $\Hil_{\lambda}$ by writing 
\begin{align}
	\mathcal{F}(\mathbf{E},\mathbf{H}) &= \bscpro{(\mathbf{E},\mathbf{H})}{F (\mathbf{E},\mathbf{H})}_{\Hil_{\lambda}} 
	\label{real_eff:eqn:quadratic_observables}
\end{align}
as expectation value. For such observables, we immediately obtain the following 
\begin{corollary}
	Under the assumptions of Theorem~\ref{real_eff:thm:eff_edyn}, we also obtain effective dynamics for observables of the form \eqref{real_eff:eqn:quadratic_observables}, 
	\begin{align*}
		\mathcal{F} \left ( \e^{- \ii t \Mphys_{\lambda}} (\mathbf{E},\mathbf{H}) \right ) = \mathcal{F} \left ( \Re \e^{- \ii t \mathbf{M}_{\eff}} (\mathbf{E},\mathbf{H}) \right ) + \order(\lambda^{\infty})
		, 
		&&
		(\mathbf{E},\mathbf{H}) \in \ran \pmb{\Pi}_{\lambda}
		. 
	\end{align*}
\end{corollary}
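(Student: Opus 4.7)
The plan is to reduce the statement to the operator-norm approximation provided by Theorem~\ref{real_eff:thm:eff_edyn}~(ii) by means of a Cauchy--Schwarz estimate. Since $\mathcal{F}(\Psi) = \sscpro{\Psi}{F \Psi}_{\Hil_{\lambda}}$ is quadratic, sesquilinearity of the scalar product yields
\begin{align*}
	\sscpro{\Psi_1}{F \Psi_1}_{\Hil_{\lambda}} - \sscpro{\Psi_2}{F \Psi_2}_{\Hil_{\lambda}} = \sscpro{\Psi_1 - \Psi_2}{F \Psi_1}_{\Hil_{\lambda}} + \sscpro{\Psi_2}{F (\Psi_1 - \Psi_2)}_{\Hil_{\lambda}}
	,
\end{align*}
and Cauchy--Schwarz together with boundedness of $F$ give the standard Lipschitz-type bound
\begin{align*}
	\babs{\mathcal{F}(\Psi_1) - \mathcal{F}(\Psi_2)} \leq \snorm{F}_{\mathcal{B}(\Hil_{\lambda})} \, \bigl ( \snorm{\Psi_1}_{\Hil_{\lambda}} + \snorm{\Psi_2}_{\Hil_{\lambda}} \bigr ) \, \snorm{\Psi_1 - \Psi_2}_{\Hil_{\lambda}}
	.
\end{align*}

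Next I would specialize to $\Psi_1 := \e^{- \ii t \Mphys_{\lambda}} \Psi_0$ and $\Psi_2 := \Re \, \e^{- \ii t \mathbf{M}_{\eff}} \Psi_0$, where $\Psi_0 = (\mathbf{E},\mathbf{H}) \in \ran \pmb{\Pi}_{\lambda}$. Since $\mathcal{F}$ takes real-valued fields as arguments, the initial datum is implicitly real, and together with $\pmb{\Pi}_{\lambda} \Psi_0 = \Psi_0$ one has the exact identity $\pmb{\Pi}_{\lambda} \Re \Psi_0 = \Psi_0$. Applying Theorem~\ref{real_eff:thm:eff_edyn}~(ii) then gives
\begin{align*}
	\snorm{\Psi_1 - \Psi_2}_{\Hil_{\lambda}} = \Bnorm{\bigl ( \e^{- \ii t \Mphys_{\lambda}} \pmb{\Pi}_{\lambda} \Re - \Re \, \e^{- \ii t \mathbf{M}_{\eff}} \pmb{\Pi}_{\lambda} \Re \bigr ) \Psi_0}_{\Hil_{\lambda}} = \order(\lambda^{\infty}) \, \snorm{\Psi_0}_{\Hil_{\lambda}}
	.
\end{align*}
Unitarity of $\e^{- \ii t \Mphys_{\lambda}}$ bounds $\snorm{\Psi_1}_{\Hil_{\lambda}}$ by $\snorm{\Psi_0}_{\Hil_{\lambda}}$, while unitarity of $\e^{- \ii t \mathbf{M}_{\eff}}$ together with the fact that $\Re$ is a contraction yields the same bound for $\snorm{\Psi_2}_{\Hil_{\lambda}}$. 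Inserting these three ingredients into the Lipschitz bound of the first paragraph yields the claimed estimate.

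The argument is therefore essentially a routine propagation of an operator-norm estimate to a scalar observable, and no substantial obstacle arises. The only step requiring mild verification is that the concrete functionals listed after \eqref{real_eff:eqn:quadratic_observables} really do correspond to a bounded operator $F \in \mathcal{B}(\Hil_{\lambda})$: for the energy density one has $F = \tfrac{1}{2} \id$, and for components of the space-averaged Poynting vector the uniform upper and lower bounds on $(\eps_{\lambda},\mu_{\lambda})$ from Assumption~\ref{intro:assumption:eps_mu} ensure that the relevant sesquilinear form extends boundedly to $\Hil_{\lambda}$, since $\Hil_{\lambda}$ and $L^2(\R^3,\C^6)$ coincide as Banach spaces with equivalent norms.
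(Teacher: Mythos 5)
Your argument is correct and fills in the details the paper leaves implicit (the corollary is stated as following ``immediately,'' with no proof given). The sesquilinear decomposition $\mathcal{F}(\Psi_1) - \mathcal{F}(\Psi_2) = \sscpro{\Psi_1-\Psi_2}{F\Psi_1}_{\Hil_\lambda} + \sscpro{\Psi_2}{F(\Psi_1-\Psi_2)}_{\Hil_\lambda}$ followed by Cauchy--Schwarz, the observation $\pmb{\Pi}_\lambda\,\Re\,\Psi_0 = \Psi_0$ (which holds exactly for $\Psi_0$ real and in $\ran\pmb{\Pi}_\lambda$, independent of the almost-commutation $[C,\pmb{\Pi}_\lambda] = \order(\lambda^\infty)$), the application of Theorem~\ref{real_eff:thm:eff_edyn}~(ii), and the uniform bounds from unitarity of $\e^{-\ii t\Mphys_\lambda}$ and $\e^{-\ii t\mathbf{M}_{\eff}}$ together with contractivity of $\Re$ (which relies on $C$ being anti-unitary, guaranteed under the reality assumption) are exactly what is needed. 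Your closing remark that one must verify $F\in\mathcal{B}(\Hil_\lambda)$ for the concrete observables is a legitimate point the paper glosses over, and the equivalence of $\Hil_\lambda$ and $L^2(\R^3,\C^6)$ as Banach spaces under Assumption~\ref{intro:assumption:eps_mu} does settle it.
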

%
\section{The problem of deriving ray optics equations} 
\label{twin_bands}
At a glance, a derivation of ray optics equations seems just within reach. For instance, one can adapt \cite{Stiepan_Teufel:semiclassics_op_valued_symbols:2012} to the present context and obtain 
\begin{theorem}[Single-band ray optics]\label{twin_bands:thm:single_band}
	Suppose Assumption~\ref{intro:assumption:eps_mu} hold true and that $\freqband$ is an isolated band in the sense of Assumption~\ref{intro:assumption:gap_condition} with Bloch function $\blochf = (\blochf^E,\blochf^H)$. 
	
	Let $\Phi^{\lambda}$ be the flow associated to 
	\begin{align}
		\left [ \left (
		\begin{matrix}
			0 & - \id_{\R^3} \\
			+ \id_{\R^3} & 0 \\
		\end{matrix}
		\right ) - \lambda \; \left (
		\begin{matrix}
			\pmb{\Omega}^{rr} & \pmb{\Omega}^{rk} \\
			\pmb{\Omega}^{kr} & \pmb{\Omega}^{kk} \\
		\end{matrix}
		\right ) \right ] \left (
		\begin{matrix}
			\dot{r} \\
			\dot{k} \\
		\end{matrix}
		\right ) &= \left (
		\begin{matrix}
			\nabla_r \\
			\nabla_k \\
		\end{matrix}
		\right ) \pmb{\Msymb}_{\mathrm{sc}} 
		\label{twin_bands:eqn:semiclassical_eom}
	\end{align}
	that is defined in terms of the components of the \emph{extended Berry curvature} 
	\begin{align}
		\pmb{\Omega}_{jl}^{rr} &= - \ii \trace_{\HperT} \bigl ( \pmb{\pi}_0 \, \bigl [ \partial_{r_j} \pmb{\pi}_0 \, , \, \partial_{r_l} \pmb{\pi}_0 \bigr ] \bigr ) 
		,
		\notag \\
		\pmb{\Omega}_{jl}^{kk} &= - \ii \trace_{\HperT} \bigl ( \pmb{\pi}_0 \, \bigl [ \partial_{k_j} \pmb{\pi}_0 \, , \, \partial_{k_l} \pmb{\pi}_0 \bigr ] \bigr ) 
		,
		\notag \\
		\pmb{\Omega}_{jl}^{rk} &= - \ii \trace_{\HperT} \bigl ( \pmb{\pi}_0 \, \bigl [ \partial_{r_j} \pmb{\pi}_0 \, , \, \partial_{k_l} \pmb{\pi}_0 \bigr ] \bigr ) 
		= - \pmb{\Omega}^{kr}_{lj}
		, 
		\label{twin_bands:eqn:Berry_curvature}
	\end{align}
	and the \emph{semiclassical Maxwellian} 
	\begin{align}
		\pmb{\mathcal{M}}_{\mathrm{sc}} &= \tau \, \omega + \lambda \, \trace_{\HperT} \bigl ( \pmb{\pi}_0 \, \pmb{\Msymb}_1 \bigr ) 
		+ \lambda \, \tfrac{\ii}{2} \, \trace_{\HperT} \Bigl ( \bigl \{ \pmb{\pi}_0 \vert \pmb{\Msymb}_0 \vert \pmb{\pi}_0 \bigr \} \Bigr ) 
		. 
		\label{twin_bands:eqn:semiclassical_Maxwellian}
	\end{align}
	Then for any periodic semiclassical symbol $\mathbf{f} \in \SemiHoermper{0}{0}(\C)$ 
	\begin{align}
		\Bnorm{\pmb{\Pi}_{\lambda}^{\Zak} \, \Bigl ( \e^{+ \ii t \Mphys_{\lambda}^{\Zak}} \, \Op_{\lambda}(\mathbf{f}) \, \e^{- \ii t \Mphys_{\lambda}^{\Zak}} - \Op_{\lambda} \bigl ( \mathbf{f} \circ \Phi^{\lambda} \bigr ) \Bigr ) \, \pmb{\Pi}_{\lambda}^{\Zak}}_{\mathcal{B}(\Zak \Hil_{\lambda})} 
		&= \order(\lambda^2)
	\end{align}
	holds uniformly on bounded time intervals. 
\end{theorem}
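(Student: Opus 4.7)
The plan is to adapt the Egorov-type theorem of Stiepan and Teufel \cite{Stiepan_Teufel:semiclassics_op_valued_symbols:2012} to the Maxwell setting, working throughout in the BFZ representation where $\Mphys_{\lambda}^{\Zak} = \Op_{\lambda}(\pmb{\Msymb}_{\lambda})$ is a $\Psi$DO. A central advantage over Theorem~\ref{sapt:thm:eff_dyn} is that I do not need to diagonalize the problem with the intertwining unitary $\mathbf{U}_{\lambda}$, so the triviality of the Bloch bundle is not required: it is enough to manipulate the superadiabatic projection $\pmb{\Pi}_{\lambda}^{\Zak}$.

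\emph{Reduction to the band.} By Proposition~\ref{sapt:prop:projection}, $\pmb{\Pi}_{\lambda}^{\Zak}$ almost commutes in norm with $\Mphys_{\lambda}^{\Zak}$ to every order. Consequently, inserting $\pmb{\Pi}_{\lambda}^{\Zak}$ into the Heisenberg evolution of $\Op_{\lambda}(\mathbf{f})$ on either side of the sandwich is permissible modulo $\order(\lambda^{\infty}(1+\abs{t}))$, and the problem reduces to analyzing the symbol $\mathbf{f}_t := \pmb{\pi}_{\lambda} \Weyl \mathbf{f} \Weyl \pmb{\pi}_{\lambda}$ of the sandwiched Heisenberg observable. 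Because $\freqband$ is an isolated single band, $\pmb{\pi}_0 = \sopro{\blochf}{\blochf}$ is rank-one, and any operator-valued symbol $a$ satisfies $\pmb{\pi}_0 \, a \, \pmb{\pi}_0 = \trace_{\HperT}(\pmb{\pi}_0 \, a) \, \pmb{\pi}_0$. Expanding $\pmb{\pi}_{\lambda} \Weyl \pmb{\Msymb}_{\lambda} \Weyl \pmb{\pi}_{\lambda}$ and $\mathbf{f}_t$ with~\eqref{setup:eqn:Moyal_asymp_expansion} and projecting onto the band therefore produces scalar symbols times $\pmb{\pi}_0$, up to band-off-diagonal remainders of order $\order(\lambda^2)$ that drop out when the Moyal commutator is computed and traced. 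Collecting the order-$\lambda$ contributions from $\pmb{\Msymb}_1$ and from the subprincipal Moyal-sandwich terms reproduces the semiclassical Maxwellian $\pmb{\Msymb}_{\mathrm{sc}}$ of~\eqref{twin_bands:eqn:semiclassical_Maxwellian}.

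\emph{Transport equation with modified symplectic form.} Differentiating $\mathbf{f}_t$ in $t$ yields a Moyal-Heisenberg equation
\begin{align*}
\partial_t \mathbf{f}_t = \tfrac{\ii}{\lambda} \bigl[ \pmb{\pi}_{\lambda} \Weyl \pmb{\Msymb}_{\lambda} \Weyl \pmb{\pi}_{\lambda} \, , \, \mathbf{f}_t \bigr]_{\Weyl} + \order(\lambda^{\infty}).
\end{align*}
The principal piece vanishes because $\pmb{\pi}_0 \pmb{\Msymb}_0 \pmb{\pi}_0 = \tau \, \freqband \, \pmb{\pi}_0$ commutes pointwise with $\pmb{\pi}_0$. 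At order $\lambda^1$ the Moyal commutator splits into the canonical Poisson bracket of the scalar symbols $\pmb{\Msymb}_{\mathrm{sc}}$ and $\mathrm{tr}_{\HperT}(\pmb{\pi}_0 \, \mathbf{f})$, plus extra sandwich terms containing $\partial_{r_j} \pmb{\pi}_0$ and $\partial_{k_l} \pmb{\pi}_0$. Using the identity $\pmb{\pi}_0 \, (\partial \pmb{\pi}_0) \, \pmb{\pi}_0 = 0$ together with $\pmb{\pi}_0 = \pmb{\pi}_0^2$ to reorganize the trace, these extra terms assemble into the three components $\pmb{\Omega}^{rr},\pmb{\Omega}^{kk},\pmb{\Omega}^{rk}$ of~\eqref{twin_bands:eqn:Berry_curvature}. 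The resulting transport equation for the scalar symbol is solved by $\mathbf{f} \circ \Phi^{\lambda}$ via the method of characteristics applied to~\eqref{twin_bands:eqn:semiclassical_eom}, and a Duhamel argument together with the Calderón-Vaillancourt bounds of Section~\ref{setup:PsiDOs} converts the symbol-level estimate into the stated $\order(\lambda^2)$ operator-norm bound on bounded time intervals.

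\emph{Main obstacle.} The hard part is the bookkeeping of the order-$\lambda$ corrections in the sandwich expansions, specifically the extraction of $\pmb{\Omega}^{rr}$ and $\pmb{\Omega}^{rk}$. Unlike the Bloch electron case, where $\pi_0$ depends only on $k$ and only $\pmb{\Omega}^{kk}$ survives, here $\pmb{\pi}_{\lambda}$ acquires an $r$-dependence already at order $\lambda$ through the superadiabatic dressing by $\pmb{\Msymb}_1$ and by the multiplication operator $\tau(r)$. One must therefore expand $\pmb{\pi}_{\lambda}$ to first order, commute the resulting $\partial_r \pmb{\pi}_1$-type terms past $\pmb{\pi}_0$, and verify that only the combination appearing in~\eqref{twin_bands:eqn:Berry_curvature} survives the trace against $\pmb{\pi}_0$; the cancellations are not manifest and require the defining equations of $\pmb{\pi}_1$ from the construction in Proposition~\ref{sapt:prop:projection}. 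A secondary issue is that the scalar product on $\Zak \Hil_{\lambda}$ is $\lambda$-dependent, so the final norm estimate is cleanest when carried out in the auxiliary representation of Section~\ref{setup:aux_rep} and then transferred back using the unitarity of $S(\ii \lambda \nabla_k)$.
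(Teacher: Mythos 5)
The paper itself gives no proof of this statement: immediately after Theorem~\ref{twin_bands:thm:single_band} it declares that the ``technical modifications to the proofs in \cite{Stiepan_Teufel:semiclassics_op_valued_symbols:2012} are straightforward'' and defers the argument to \cite{DeNittis_Lein:ray_optics_photonic_crystals:2013}. Your strategy --- adapting the Egorov result of Stiepan--Teufel, working with the superadiabatic projection $\pmb{\Pi}_{\lambda}^{\Zak}$ rather than the intertwining unitary so that triviality of the Bloch bundle is not required --- is exactly what the authors propose, so at the level of strategy the two coincide.

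That said, two points in your sketch would not survive contact with the details. First, in the ``main obstacle'' paragraph you write that $\pmb{\pi}_{\lambda}$ ``acquires an $r$-dependence already at order $\lambda$'' and that one has to track $\partial_r\pmb{\pi}_1$-type terms, but this misidentifies where $\pmb{\Omega}^{rr}$ and $\pmb{\Omega}^{rk}$ come from. The extended Berry curvature \eqref{twin_bands:eqn:Berry_curvature} is built solely from derivatives of the \emph{principal} symbol $\pmb{\pi}_0$, and $\pmb{\pi}_0$ already depends on $r$ at order $\lambda^0$: by \eqref{sapt:eqn:u_ubold} one has $\pmb{\pi}_0(r,k) = S^{-1}(r)\,\pi_0(k)\,S(r)$, so $\partial_{r_j}\pmb{\pi}_0 = \bigl[\pmb{\pi}_0\,,\,\partial_{r_j}\ln S\bigr]\neq 0$ generically, because $\pmb{\pi}_0$ mixes the $E$ and $H$ components on which $S = \mathrm{diag}(\tau_\eps^{-1},\tau_\mu^{-1})$ acts differently. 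Anything that enters only through $\pmb{\pi}_1$ contributes at $\order(\lambda^2)$ and is irrelevant at the claimed accuracy. Second, your plan to run the Duhamel argument in the auxiliary representation and then transfer back by the unitarity of $S(\ii\lambda\nabla_k)$ is not the harmless bookkeeping device it was for the norm estimates elsewhere in the paper. In the auxiliary representation the band projection $\pi_0$ depends only on $k$, so the Stiepan--Teufel machinery there produces $\Omega^{rr}=0=\Omega^{rk}$ and a flow that is \emph{not} $\Phi^{\lambda}$ of \eqref{twin_bands:eqn:semiclassical_eom}. Because $S(\ii\lambda\nabla_k)$ is not the quantization of a canonical transformation, conjugating the scalar observable $\mathbf{f}$ by it alters its symbol at order $\lambda$, and tracking how that correction gets absorbed into the flow is precisely the calculation that reproduces $\pmb{\Omega}^{rr}$ and $\pmb{\Omega}^{rk}$. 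You must therefore either carry out the transport-equation analysis directly in the BFZ representation (accepting the $\lambda$-dependent inner product) or spell out this nontrivial reconciliation; your current sketch treats it as a one-line transfer, which it is not.
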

The technical modifications to the proofs in \cite{Stiepan_Teufel:semiclassics_op_valued_symbols:2012} are straightforward, but we will postpone a proof to \cite{DeNittis_Lein:ray_optics_photonic_crystals:2013}. 

However, \emph{Theorem~\ref{twin_bands:thm:single_band} does not answer the physical question} as to the correct form of ray optics equations.  In contrast to the Schrödinger case, deriving effective light dynamics is a genuine \emph{multi}band problem since electromagnetic fields are real (Corollary~\ref{real_eff:cor:no_eff_single_band_dynamics}). In the simplest physically relevant case the material weights $(\eps,\mu)$ are real and $\Omega_{\mathrm{rel}}$ consists of a single non-degenerate band $\omega_+(k) > 0$ and its symmetric twin $\omega_-(k) = - \omega_+(-k)$ (\eg the two yellow bands $\omega_{\pm n_2}$ in Figure~\ref{setup:fig:band_picture}). Seeing as $\omega_+$ and $\omega_-$ individually are isolated bands, the corresponding positive and negative frequency superadiabatic proections $\pmb{\Pi}_{\pm \, \lambda}$ exist separately (Proposition~\ref{sapt:prop:projection}) and decompose 
\begin{align}
	\pmb{\Pi}_{\lambda} = \pmb{\Pi}_{+ \, \lambda} + \pmb{\Pi}_{- \, \lambda} + \order_{\norm{\cdot}}(\lambda^{\infty})
	. 
	\label{twin_bands:eqn:projection_splitting}
\end{align}
Moreover, a closer inspection of the proof of Lemma~\ref{technical:lem:c_pi_pi} reveals that the projections associated to the positive and negative frequency bands are related by 
\begin{align}
	C \, \pmb{\Pi}_{+ \, \lambda} \, C = \pmb{\Pi}_{- \, \lambda} + \order_{\norm{\cdot}}(\lambda^{\infty}) 
	. 
	\label{twin_bands:eqn:projection_pm_relation}
\end{align}
The assumption that the weights are real assures $\bigl [ \Re , \pmb{\Pi}_{\lambda} \bigr ] = \order_{\norm{\cdot}}(\lambda^{\infty})$ (Proposition~\ref{physical:prop:real_solutions_Pi_lambda}) and we can split the dynamics into two contributions, 
\begin{align}
	\e^{- \ii t \Mphys_{\lambda}} \, \pmb{\Pi}_{\lambda} \, \Re &= \Re \negthinspace \left ( \e^{- \ii t \Mphys_{\lambda}} \, \pmb{\Pi}_{+ \, \lambda} 
	+ \e^{- \ii t \Mphys_{\lambda}} \, \pmb{\Pi}_{- \, \lambda} \right ) \Re 
	+ \order_{\norm{\cdot}}(\lambda^{\infty}) 
	. 
	\label{twin_bands:eqn:one_two_band_two_one_band_problem}
\end{align}
Assuming the goal is to prove an Egorov-type theorem for a suitable observable $\Op_{\lambda}(\mathbf{f})$, $\mathbf{f} \in \SemiHoermper{0}{0}(\C)$, we 
can split the time-evolved observable 
\begin{align*}
	\Re \, \pmb{\Pi}_{\lambda} \, \e^{+ \ii \frac{t}{\lambda} \Mphys_{\lambda}} \, &\Zak^{-1} \, \Op_{\lambda}(\mathbf{f}) \, \Zak \, \e^{- \ii \frac{t}{\lambda} \Mphys_{\lambda}} \, \pmb{\Pi}_{\lambda} \, \Re 
	= \\
	&= \mathbf{F}_+(t) + \mathbf{F}_-(t) + \mathbf{F}_{\mathrm{int}}(t) + \order_{\norm{\cdot}}(\lambda^{\infty})
\end{align*}
into three parts, the two \emph{single band terms} associated to $\omega_+$ and $\omega_-$, 
\begin{align*}
	\mathbf{F}_\pm(t):= \Re \, \pmb{\Pi}_{\pm \, \lambda} \, \e^{+ \ii \frac{t}{\lambda} \Mphys_{\lambda}} \, \Zak^{-1} \, \Op_{\lambda}(\mathbf{f}) \, \Zak \, \e^{- \ii \frac{t}{\lambda} \Mphys_{\lambda}} \, \pmb{\Pi}_{\pm \, \lambda} \, \Re 
	, 
\end{align*}
and an \emph{intraband interference term} 
\begin{align}
	\mathbf{F}_{\rm int}(t) := &\;
	\phantom{+\ \ }\Re \, \pmb{\Pi}_{+ \, \lambda} \, \e^{+ \ii \frac{t}{\lambda} \Mphys_{\lambda}} \, \Zak^{-1} \, \Op_{\lambda}(\mathbf{f}) \, \Zak \, \e^{- \ii \frac{t}{\lambda} \Mphys_{\lambda}} \, \pmb{\Pi}_{- \, \lambda} \, \Re
	+ \notag \\
	 &\;+\Re \, \pmb{\Pi}_{- \, \lambda} \, \e^{+ \ii \frac{t}{\lambda} \Mphys_{\lambda}} \, \Zak^{-1} \, \Op_{\lambda}(\mathbf{f}) \, \Zak \, \e^{- \ii \frac{t}{\lambda} \Mphys_{\lambda}} \, \pmb{\Pi}_{+ \, \lambda} \, \Re
	. 
	\label{twin_bands:eqn:4_term_semiclassics}
\end{align}
While the two single-band terms $\mathbf{F}_{\pm}$ are covered by Theorem~\ref{twin_bands:thm:single_band}, effective dynamics for the \emph{cross term} $\mathbf{F}_{\rm int}(t)$ have yet to be established. 
\medskip

\noindent
To summarize the discussion: Even in the simplest case, the \emph{twin band case} with topologically trivial bands, the principal symbol of the effective Maxwellian (\cf Corollary~\ref{real_eff:cor:simplest_effective_Maxwellian}) is \emph{not} scalar. Hence, we are in a setting that is not covered by existing semiclassical techniques (\eg \cite[Chapter~3.4.1]{Teufel:adiabatic_perturbation_theory:2003} or \cite{Gat_Lein_Teufel:semiclassical_dynamics_particle_spin:2013}), and
a derivation of physically relevant ray optics equations necessitates the development of new techniques which are beyond the scope of the present work. 
\section{Discussion, comparison with literature and outlook} 
\label{discussion}
Physically speaking, there are three frequency ranges one needs to distringuish when studying the propagation of light in photonic crystals: the \emph{low} and \emph{high} frequency regimes and the range of \emph{intermediate} frequencies. 

Our first main result, Theorem~\ref{real_eff:thm:eff_edyn}, proves the existence of effective dynamics for \emph{physical} states from the \emph{intermediate frequency regime} which approximate the full light dynamics up to \emph{arbitrarily small error} $\order(\lambda^N)$ and \emph{any time scale} $\order(\lambda^{-K})$, $N , K \geq 0$. We have systematically exploited the reformulation of the source-free Maxwell equations as a Schrödinger-type equation and adapted techniques initially developed for perturbed periodic Schrödinger operators. Technically, our assumptions only exclude the low frequency regime (\cf part (iii) of the Gap Condition~\ref{intro:assumption:gap_condition}), but our result is less relevant from a physical point of view for light of very \emph{high} frequency. This is because the Maxwell equations in media \eqref{intro:eqn:traditional_Maxwell_equations_dynamics}--\eqref{intro:eqn:traditional_Maxwell_equations_no_source} are an \emph{effective} theory for light whose \emph{in vacuo} wavelength $\lambda_{\mathrm{vac}} = \frac{c}{2 \pi \omega}$ has to be much larger than the average spacing between atoms (see \eg \cite{Jackson:electrodynamics:1998}). Only then can the net effect of the microscopic charges be described by the phenomenological quantities electric permittivity $\eps$ and magnetic permeability $\mu$. 

The second main insight of this paper is that deriving effective dynamics for \emph{real} initial states is a \emph{genuine multiband problem}. We illustrate this for non-gyroscopic photonic crystals (\ie $(\eps,\mu)$ are real) in the absence of perturbations: assume $\omega_+ > 0$ is an isolated, non-degenerate band with Bloch function $\varphi_+$. Then in view of equation~\eqref{physical:eqn:relation_CZak_C} the real and imaginary part of the Bloch wave associated to $\omega_+$
\begin{align}
	\Psi_{\Re}(k) :& \negmedspace= \sqrt{2} \, \bigl ( \Re^{\Zak} \varphi_+ \bigr )(k) 
	= \frac{1}{\sqrt{2}} \Bigl ( \varphi_+(k) + \varphi_-(k) \Bigr ) 
	, 
	\label{discussion:eqn:real_imaginary_Bloch_wave}
	\\
	\Psi_{\Im}(k) :& \negmedspace= \sqrt{2} \, \bigl ( \Im^{\Zak} \varphi_+ \bigr )(k) 
	= \frac{1}{\ii \sqrt{2}} \Bigl ( \varphi_+(k) - \varphi_-(k) \Bigr ) 
	, 
	\notag
\end{align}
are linear combinations of $\varphi_+(k)$ and the Bloch function $\varphi_-(k) = \overline{\varphi_+(-k)}$ of the symmetric twin band $\omega_-(k) = - \omega_+(-k)$ (\cf equation~\eqref{setup:eqn:symmetric_twin_Bloch_function}). 
We can read off the relation $\Psi_{\Re,\Im}(-k) = C \Psi_{\Re,\Im}(k)$ from their definition, and thus the arguments in the proof of Corollary~\ref{physical:cor:spectral_condition_reality} imply the triviality of $\blochb \bigl ( \sopro{\Psi_{\Re,\Im}}{\Psi_{\Re,\Im}} \bigr )$ even if $\blochb \bigl ( \sopro{\varphi_{\pm}}{\varphi_{\pm}} \bigr )$ is non-trivial. 
This rigorously justifies the absence of topological effects in non-gyrotropic photonic crystals. Moreover, the time evolution of real and imaginary part 
\begin{align*}
	\e^{- \ii t \Mper(k)} \Psi_{\Re}(k) &= \frac{1}{\sqrt{2}} \Bigl (  \e^{- \ii t \omega_+(k)} \varphi_+(k) + \e^{+ \ii t \omega_+(-k)} \overline{\varphi_+(-k)} \Bigr ) 
	\\
	\e^{- \ii t \Mper(k)} \Psi_{\Im}(k) &= \frac{1}{\ii \sqrt{2}} \Bigl (  \e^{- \ii t \omega_+(k)} \varphi_+(k) - \e^{+ \ii t \omega_+(-k)} \overline{\varphi_+(-k)} \Bigr ) 
\end{align*}
are the sums of two counterpropagating \emph{complex} waves: not only is Bloch momentum reversed, also the sense of phase rotation for $\varphi_{\pm}$ differs.

\subsection{Existing literature} 
\label{discussion:literature}
To the best of our knowledge, the only other rigorous result for the \emph{intermediate} frequency regime, sometimes also referred to as \emph{resonant regime}, is \cite{Allaire_Palombaro_Rauch:diffractive_Bloch_wave_packets_Maxwell:2012}. Even if one forgoes mathematical rigor, we only know of two other derivations in the physics literature \cite{Onoda_Murakami_Nagaosa:geometrics_optical_wave-packets:2006,Esposito_Gerace:photonic_crystals_broken_TR_symmetry:2013}. 

All of these previous results cover only the unphysical single-band case. For instance, Raghu and Haldane \cite[equations~(42)--(43)]{Raghu_Haldane:quantum_Hall_effect_photonic_crystals:2008} rely on the analogy to the Bloch electron and propose 
\begin{align}
	\dot{r} &= + \nabla_k \bigl ( \tau \, \omega_n \bigr ) - \lambda \, \dot{k} \times \Omega 
	\label{discussion:eqn:semiclassical_eom}
	\\
	\dot{k} &= - \nabla_r \bigl ( \tau \, \omega_n \bigr )
	\notag 
\end{align}
as semiclassical equations of motion to describe ray optics. Here, $\tau \, \omega_n$ plays the same role as the energy in the analogous equations for the Bloch electron and the \emph{Berry curvature} $\Omega := \nabla_k \wedge \ii \scpro{\varphi_n}{\nabla_k \varphi_n}_{\HperT}$ gives a geometric contribution. 
Note that compared to the semiclassical equations in Theorem~\ref{twin_bands:thm:single_band}, equation~\eqref{discussion:eqn:semiclassical_eom} is missing several $\order(\lambda)$ terms. 

Gerace and Esposito have investigated how to derive equation~\eqref{discussion:eqn:semiclassical_eom} from first principles \cite{Esposito_Gerace:photonic_crystals_broken_TR_symmetry:2013}: their rather elegant derivation relies solely on standard perturbation theory, but reproduces only the equation for $\dot{r}$. 

Onodoa et.~al's equations of motion \cite{Onoda_Murakami_Nagaosa:geometrics_optical_wave-packets:2006} are more involved, because they include \emph{degenerate} bands in their discussion. Hence, in addition to extra terms in equation~\eqref{discussion:eqn:semiclassical_eom}, there is an additional equation of motion for the $\C^N$ degree of freedom which describes the degeneracy. 

Lastly, let us compare our setting to that of \cite{Allaire_Palombaro_Rauch:diffractive_Bloch_wave_packets_Maxwell:2012}: while Allaire et.~al include time-dependent material weights and effects of dissipation which are beyond the present scope, their perturbations to the weights are \emph{much} weaker compared to \eqref{intro:eqn:slow_modulation_material_constants}. Indeed, their Hypothesis~1.1 stipulates that the material weights $(\eps_{\lambda},\mu_{\lambda})$ are of the form 
\begin{align}
	f_{\lambda}(x) &= f_0(x) + \lambda^2 \, f_1(\lambda x,x) 
	= \left ( 1 + \lambda^2 \, \frac{f_1(\lambda x,x)}{f_0(x)} \right ) \, f_0(x)
	\label{discussion:eqn:Allaire_weights}
\end{align}
where $f_0$ and $f_1$ are $\Gamma$-periodic in the last argument. Then away from band crossings \cite[Hypothesis~1.4]{Allaire_Palombaro_Rauch:diffractive_Bloch_wave_packets_Maxwell:2012}, they use a multi-scale WKB ansatz to approximate the full light dynamics to leading order for times of $\order(\lambda^{-1})$. Given the semiclassical nature of WKB approaches, the limitation to the \emph{diffractive time scale} (corresponding to the semiclassical time scale) is only natural. However, the perturbation is of the same order of magnitude as the error in Egorov-type theorems, namely $\order(\lambda^2)$ \cite[Proposition~4]{PST:effective_dynamics_Bloch:2003}. Modulated weights of the form \eqref{discussion:eqn:Allaire_weights} are also too small to include the cases studied by physicsts: the proposed semiclassical equations of motion and as well as $\Msymb_{\eff , 0}$ feature $\tau = \tau_{\eps} \, \tau_{\mu}$ in the leading-order term (\cf equation \eqref{discussion:eqn:semiclassical_eom}). 
\medskip

\noindent
All of these works consider only the single-band case, and ray optics equations akin to \eqref{discussion:eqn:semiclassical_eom} suggest the existence of topological effects in non-gyrotropic photonic crystals: unlike in the Schrödinger case, \emph{a priori} the symmetry induced by complex conjugation does \emph{not} ensure the triviality of the associated Bloch bundle, and hence, the Chern numbers which can be computed from the Berry curvature $\Omega$ need not be zero (\cf Remark~\ref{physical:rem:Bloch_bundle}). 
These contradictory conclusions are not the consequence of a lack of mathematical rigor (we \emph{could} have worked out the details of Theorem~\ref{twin_bands:thm:single_band} here with moderate effort), but because states supported in a single band are unphysical. 

\subsection{Future research} 
\label{discussion:future}
Both, from the vantage point of theoretical and mathematical physics, studying the dynamical properties of photonic crystals is an open field, and this work is only a first step. We intend to dive deeper in future research, and so we finish this work by outlining some of the most promising and interesting directions.

\subsubsection*{Twin-band ray optics} 
\label{discussion:future:twin_band_ray_optics}
The next problem we intend to tackle is the rigorous derivation of ray optics equations in photonic crytals \cite{DeNittis_Lein:ray_optics_photonic_crystals:2013}. 
As we have argued in Section~\ref{twin_bands}, a naïve application of well-known semiclassical techniques gives at best an incomplete picture, because they cover only two of the three terms in equation~\eqref{twin_bands:eqn:4_term_semiclassics}. Only if the band transition terms are $\order(\lambda^2)$ can one arrive at a simple semiclassical picture. 

\subsubsection*{Study of effective Harper-like twin-band operators} 
\label{discussion:future:Harper}
We have motivated how to derive simple, \emph{Harper-like model operators} from the Maxwell equations in Section~\ref{twin_bands:Harper}. In case $\Omega_{\mathrm{rel}}$ consists of isolated, topologically trivial bands, we have given an explicit expression for $\Msymb_{\eff}$ (Corollary~\ref{real_eff:cor:simplest_effective_Maxwellian}). It would be interesting to find out whether and how the properties of these model operators depend on the topological triviality of the single-band Bloch bundles. Moreover, we reckon that we can modify these model operators so as to break the particle-hole symmetry, leading to classes of operators which hopefully retain some of the essential features of Maxwell operators for gyrotropic photonic crystals. We expect that our techniques developed for two-band systems \cite{DeNittis_Lein:piezo_graphene:2013} will be useful. 

\subsubsection*{Gyrotropic photonic crystals} 
\label{discussion:future:gyrotropic_photonic_crystals}
A third very interesting avenue to explore are \emph{gyrotropic} photonic crystals where the non-vanishing imaginary entries in the offdiagonal of $\eps$ and $\mu$ play a role similar to a strong magnetic field in the case of periodic Schrödinger operators -- they break a symmetry, and breaking this “time-reversal-like” symmetry is a necessary condition to allow for non-trivial topological effects \cite{Raghu_Haldane:quantum_Hall_effect_photonic_crystals:2008,Esposito_Gerace:photonic_crystals_broken_TR_symmetry:2013}. 

In that case, the time evolution group no longer commutes with the real and imaginary part operator, and initially real fields are mapped to complex-valued solutions. Moreover, the spectral symmetry described by \eqref{setup:eqn:symmetric_twin_Bloch_function} disappears so that it is not clear whether there exists a “natural” decomposition of Bloch waves into real and imaginary part. Given how central the \emph{real fields} assumption is to our arguments, this raises a slew of interesting physical questions. Moreover, the standard classification theory by Altland and Zirnbauer \cite{Altland_Zirnbauer:superconductors_symmetries:1997} suggests that topological effects should only be observable in two-dimensional photonic crystals; a derivation of this fact for gyrotropic photonic crystals could shed some insight as to why that is. 
%
%
\section{Technical results and missing proofs} 
\label{technical}
In this section, we collect the proofs to a number of technical results. 

\subsection{Auxiliary results for Section~\ref{sapt}} 
\label{technical:sapt}
We start with a series of general results concerning the Moyal resolvent of the symbol $\Msymb_{\lambda} = \tau \, \Mper(\, \cdot \,) + \order(\lambda)$ given by \eqref{setup:eqn:M_lambda_PsiDO} where $\tau = \tau_{\eps} \, \tau_{\mu}$ is the deformation function. 
\begin{lemma}\label{sapt:lem:existence_Moyal_resolvent}
	Suppose Assumption~\ref{intro:assumption:eps_mu} holds and let $(r_0,k_0) \in \R^6$. Moreover, let $\Weyl$ be the formal sum of the Moyal product defined as the right-hand side of \eqref{setup:eqn:Moyal_asymp_expansion}. 
	\begin{enumerate}[(i)]
		\item For any $z \not\in \sigma \bigl ( \Msymb_0(r_0,k_0) \bigr )$, the \emph{local Moyal resolvent} 
		\begin{align*}
			R_{\lambda}(z) \asymp \sum_{n = 0}^{\infty} \lambda^n \, R_n(z) 
		\end{align*}
		which satisfies 
		\begin{align*}
			R_{\lambda}(z) \, \Weyl \, (\Msymb_{\lambda} - z) &= \id_{\HperT} 
			, 
			\\
			(\Msymb_{\lambda} - z) \, \Weyl \, R_{\lambda}(z) &= \id_{\HperT} 
			, 
		\end{align*}
		exists in an open neighborhood $\mathcal{V}$ of $(r_0,k_0)$. 
		\item The local Moyal resolvent satisfies the resolvent identity with respect to $\Weyl$, 
		\begin{align*}
			R_{\lambda}(z) - R_{\lambda}(z') = (z - z') \; R_{\lambda}(z) \, \Weyl \, R_{\lambda}(z') 
			, 
		\end{align*}
		and $R_{\lambda}(z)^* = R_{\lambda}(\bar{z})$ where ${}^*$ is the adjoint induced by $\scpro{\cdot \,}{\cdot}_{\HperT}$. 
		\item For any $a , b \in \N_0^3$ and $n\in\N_0$ there exist constants $C_1(z) , C_2(z) > 0$ so that 
		\begin{align*}
			\bnorm{\bigl ( \partial_r^a \partial_k^b R_n(z) \bigr )(r,k)}_{\mathcal{B}(\HperT)} &\leq C_1(z)
			\\
			\bnorm{\bigl ( \partial_r^a \partial_k^b R_n(z) \bigr )(r,k)}_{\mathcal{B}(\HperT,\domainT)} &\leq C_2(z) \, \bigl ( 1 + \sabs{k} \bigr ) 
		\end{align*}
		hold. 
		These constants depend on $\sabs{z}$ and $\sup_{(r,k) \in \mathcal{V}} \mathrm{dist} \bigl ( \sigma \bigl ( \Msymb_0(r,k) \bigr ) , z \bigr )^{-1}$ in a polynomial fashion. 
	\end{enumerate}
\end{lemma}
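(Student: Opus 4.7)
The strategy is the standard order-by-order construction of the Moyal resolvent familiar from \cite{PST:sapt:2002,PST:effective_dynamics_Bloch:2003}, adapted to the operator-valued symbol $\Msymb_{\lambda} = \tau \, \Mper(\, \cdot \,) + \order(\lambda)$ of \eqref{setup:eqn:M_lambda_PsiDO}. The starting point for (i) is the principal part: since $z \not\in \sigma \bigl ( \Msymb_0(r_0,k_0) \bigr ) = \tau(r_0) \, \sigma(\Mper(k_0))$ and the map $(r,k) \mapsto \Msymb_0(r,k)$ is analytic into $\mathcal{B}(\domainT,\HperT)$ (combining Assumption~\ref{intro:assumption:eps_mu} with \cite[Proposition~3.3]{DeNittis_Lein:adiabatic_periodic_Maxwell_PsiDO:2013}), the resolvent set is stable under small perturbations; hence there exists an open neighborhood $\mathcal{V} \ni (r_0,k_0)$ on which $R_0(z) := \bigl ( \Msymb_0 - z \bigr )^{-1}$ is defined, analytic and uniformly bounded in $\mathcal{B}(\HperT)$, with $\Msymb_0 R_0(z) = \id + z R_0(z)$ giving bounds in $\mathcal{B}(\HperT,\domainT)$ with polynomial growth in $\sabs{k}$.

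For $n \geq 1$, I would define $R_n(z)$ inductively by imposing the left identity $R_{\lambda}(z) \Weyl (\Msymb_{\lambda} - z) = \id$ order-by-order. Isolating the unique top-order term $R_n \, (\Msymb_0 - z)$ at order $\lambda^n$ yields the explicit recursion
\begin{align*}
	R_n(z) = - \biggl ( \sum_{\substack{j + k + l = n \\ j < n}} \bigl ( R_j \Weyl \Msymb_k \bigr )_{(l)} \biggr ) \, R_0(z)
	,
\end{align*}
where only finitely many of the $\Msymb_k$ are nonzero (only $k = 0,1$ contribute in view of \eqref{setup:eqn:M_lambda_PsiDO}). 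Each $R_n(z)$ is then a finite sum of Moyal-product-like expressions built from $R_0(z)$ and derivatives of $\Msymb_0,\Msymb_1$, all evaluated in $\mathcal{V}$. The same construction from the right produces a formal right inverse $\tilde R_n(z)$; associativity of the formal Moyal product then gives $R_n = R_n \Weyl (\Msymb_\lambda - z) \Weyl \tilde R_n = \tilde R_n$ order-by-order, so both identities in (i) hold with the same family.

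For (ii), the resolvent identity is obtained by Moyal-multiplying $R_\lambda(z) \Weyl (\Msymb_\lambda - z) = \id$ on the right by $R_\lambda(z')$ and $R_\lambda(z') \Weyl (\Msymb_\lambda - z') = \id$ on the left by $R_\lambda(z)$, then subtracting; the self-adjointness relation $R_\lambda(z)^* = R_\lambda(\bar z)$ follows from the selfadjointness $\Msymb_\lambda^* = \Msymb_\lambda$ (recorded after \eqref{setup:eqn:M_lambda_PsiDO}) together with $(f \Weyl g)^* = g^* \Weyl f^*$, and uniqueness of the Moyal inverse. For (iii), the stated derivative bounds follow by induction on $n$: each $R_n(z)$ is a finite sum of products involving $R_0(z)$ and bounded derivatives of $\Msymb_0,\Msymb_1$, so Leibniz plus the $\mathcal{B}(\HperT)$ and $\mathcal{B}(\HperT,\domainT)$ bounds on $R_0(z)$ propagate through, yielding estimates polynomial in $\sabs{z}$ and in $\sup_{(r,k) \in \mathcal{V}} \mathrm{dist}\bigl(\sigma(\Msymb_0(r,k)),z\bigr)^{-1}$; the factor $(1 + \sabs{k})$ in the $\domainT$-bound traces back to the order-one growth of $\Mper(k)$.

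The main technical obstacle is not the algebra—the recursion is completely standard—but rather keeping careful track of the interplay between the unbounded symbol $\Msymb_0$, the $k$-dependent operator-norm bounds on $R_0(z)$, and the asymptotic expansion of $\Weyl$ in this mixed $\mathcal{B}(\HperT) / \mathcal{B}(\HperT,\domainT)$ setting, so that the polynomial dependence of the constants in (iii) is genuinely uniform on $\mathcal{V}$. Once (i) is established with the correct functional setting, parts (ii) and (iii) are essentially automatic.
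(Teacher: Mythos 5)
Your proposal follows essentially the same route as the paper: construct $R_0(z)$ by openness of the resolvent set, build the higher-order terms by the recursion $R_n(z) = -E_n(z)\,R_0(z)$ (your explicit sum over $j+k+l=n$, $j<n$ is exactly what the paper's defect $E_n$ unwinds to), use associativity of the formal Moyal product to identify left and right inverses, and then prove the resolvent identity and $R_{\lambda}(z)^* = R_{\lambda}(\bar z)$ by Moyal-multiplying the defining identities. The only place where "essentially automatic" undersells the work is the $\mathcal{B}(\HperT,\domainT)$-bound in (iii): the factor $(1+\sabs{k})$ is not merely "Leibniz plus growth of $\Mper(k)$" but rests on a genuine graph-norm equivalence $C^{-1}(1+\sabs{k})^{-1}\snorm{\Psi}_{\Msymb_0(r,k)}\leq\snorm{\Psi}_{\Mper(0)}\leq C(1+\sabs{k})\snorm{\Psi}_{\Msymb_0(r,k)}$ uniform in $(r,k)$ (and on the linearity of $\Msymb_0$ in $k$, which is what makes $\partial_{k_j}\Msymb_0$ a $k$-independent bounded operator on $\HperT$), so these two points deserve to be spelled out rather than absorbed into the induction.
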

\begin{proof}
	\begin{enumerate}[(i)]
		\item First of all, Assumption~\ref{intro:assumption:eps_mu} means we can invoke \cite[Theorem~1.1]{DeNittis_Lein:adiabatic_periodic_Maxwell_PsiDO:2013} and conclude $\Msymb_{\lambda} \in \SemiHoermeq{1}{1} \bigl ( \mathcal{B} (\domainT , \HperT) \bigr )$. We may construct $R_{\lambda}(z)$ locally order-by-order in $\lambda$. 
		Pick $(r_0,k_0) \in \R^6$ and $z \not\in \sigma \bigl ( \Msymb_0(r_0,k_0) \bigr ) = \sigma \bigl ( \tau(r_0) \, \Mper(k_0) \bigr )$. The continuity of the spectrum in $(r,k)$ \cite[Theorem~1.4]{DeNittis_Lein:adiabatic_periodic_Maxwell_PsiDO:2013} implies the existence of a neighborhood $\mathcal{V} \subseteq \R^6$ of $(r_0,k_0)$ so that 
		\begin{align*}
			\bigl ( R_0(z) \bigr )(r,k) := \bigl ( \Msymb_0(r,k) - z \bigr )^{-1}
		\end{align*}
		exists for all $(r,k) \in \mathcal{V}$. Now we proceed by induction: assume we have found $R^{(N)}(z) := \sum_{n = 0}^N \lambda^n \, R_n(z)$ for which 
		\begin{align}
			R^{(N)}(z) \, \Weyl \, ( \Msymb_{\lambda} - z ) = \id_{\HperT} + \lambda^{N+1} \, E_{N+1}(z) + \order(\lambda^{N+2}) 
			\label{sapt:eqn:local_Moyal_resolvent_E_N}
		\end{align}
		holds on $\mathcal{V}$. Then one can verify from the associativity of $\Weyl$ that on $\mathcal{V}$ also 
		\begin{align*}
			( \Msymb_{\lambda} - z ) \, \Weyl \, R^{(N)}(z) = \id_{\HperT} + \order(\lambda^{N+1}) 
		\end{align*}
		is true and 
		\begin{align*}
			R^{(N+1)}(z) := R^{(N)}(z) - \lambda^{N+1} \, E_{N+1}(z) \, R_0(z) 
		\end{align*}
		satisfies \eqref{sapt:eqn:local_Moyal_resolvent_E_N} up to $\order(\lambda^{N+2})$. 
		\item The validity of the resolvent identity follows directly from $\Weyl$-multiplying $R_{\lambda}(z) - R_{\lambda}(z')$ from the left and right with $(\Msymb_{\lambda} - z)$ and $(\Msymb_{\lambda} - z')$, respectively, as well as (i). $R_{\lambda}(z)^* = R_{\lambda}(\bar{z})$ is a consequence of $\Msymb_{\lambda}^* = \Msymb_{\lambda}$. 
		\item Let $(r_0,k_0)$ and $z \not\in \sigma \bigl ( \Msymb_0(r_0,k_0) \bigr )$ and $\mathcal{V}$ be an open neighborhood where $R_{\lambda}(z)$ exists. All of the subsequent equalities are meant to hold on $\mathcal{V}$. Clearly, we have 
		\begin{align*}
			\bnorm{\bigl ( R_0(z) \bigr )(r,k)}_{\mathcal{B}(\HperT)} = \frac{1}{\mathrm{dist} \bigl ( \sigma \bigl ( \Msymb_0(r,k) \bigr ) , z \bigr )} 
			. 
		\end{align*}
		To estimate the norm of the first-order derivatives, we remark that 
		\begin{align*}
			\partial R_0(z) = - R_0(z) \, \partial \Msymb_0 \, R_0(z) 
		\end{align*}
		where $\partial$ stands for either $\partial_{r_j}$ or $\partial_{k_j}$. Given that $\Msymb_0(r,k) = \tau(r) \, \Mper(k)$ and $\tau(r)$ is assumed to be scalar and invertible (Assumption~\ref{intro:assumption:eps_mu}), we have 
		\begin{align*}
			\partial_{r_j} R_0(z) &= - \bigl ( \partial_{r_j} \ln \tau \bigr ) \; \bigl ( R_0(z) + z \, R_0(z)^2 \bigr ) 
			. 
		\end{align*}
		Thus, the norm can be estimated from above by 
		\begin{align*}
			\bnorm{\bigl ( \partial_{r_j} R_0(z) \bigr )(r,k)}_{\mathcal{B}(\HperT)} \leq C\; \sup_{(r,k) \in \mathcal{V}} \left ( \frac{\mathrm{dist} \bigl ( \sigma \bigl ( \Msymb_0(r,k) \bigr ) , z \bigr )+\sabs{z}}{\mathrm{dist} \bigl ( \sigma \bigl ( \Msymb_0(r,k) \bigr ) , z \bigr )^2} \right )
			. 
		\end{align*}
		Since $\Msymb_0(r,k)$ is linear in $k$, $\partial_{k_j} \Msymb_0(r,k)$ is in fact independent of $k$ and defines a bounded operator on $\HperT$. Thus, we can also bound first-order derivatives with respect to $k$, 
		\begin{align*}
			&\bnorm{\bigl ( \partial_{k_j} R_0(z) \bigr )(r,k)}_{\mathcal{B}(\HperT)} 
			\leq 
			\\
			&\qquad 
			\leq \Bigl ( \sup_{(r,k) \in \mathcal{V}} \bnorm{\partial_{k_j} \Msymb_0(r,k)}_{\mathcal{B}(\HperT)} \Bigr ) \; \sup_{(r,k) \in \mathcal{V}} \frac{1}{\mathrm{dist} \bigl ( \sigma \bigl ( \Msymb_0(r,k) \bigr ) , z \bigr )^2}
			. 
		\end{align*}
		Higher-order derivatives are estimated in a similar fashion by writing $\partial_r^a \partial_k^b R_0(z)$ in terms of $R_0(z)$ and derivatives of $\Msymb_0$. 
		
		The analogous statements for $R_n(z) = - E_n(z) \; R_0(z)$ follow from the first estimate of (iii) for $R_0(z)$ and 
		\begin{align*}
			E_n(z) = \left ( R^{(n-1)}(z) \, \Weyl \, (\Msymb_{\lambda} - z) - \id_{\HperT} \right )_n 
		\end{align*}
		where $( \cdots )_n$ denotes the $n$th-order term of the asymptotic expansion in the brackets. 
		
		Similarly we can estimate the $\mathcal{B}(\HperT,\domainT)$-norms of derivatives of the $R_n(z)$. The only crucial step is to prove the estimates for $R_0(z)$, then we can proceed inductively as outlined above. The arguments in \cite[Section~4.3]{DeNittis_Lein:adiabatic_periodic_Maxwell_PsiDO:2013} imply that the domains of $\Mper(k)$ and $\Msymb_0(r,k)$ coincide, and the respective graph norms are equivalent since 
		\begin{align*}
			C^{-1} \bigl ( 1 + \sabs{k} \bigr )^{-1} \, \snorm{\Psi}_{\Msymb_0(r,k)} \leq \snorm{\Psi}_{\Mper(0)} 
			\leq C \bigl ( 1 + \sabs{k} \bigr ) \, \snorm{\Psi}_{\Msymb_0(r,k)}
		\end{align*}
		where $C$ is independent of $(r,k)$. To see that these two inequalities hold, one has to modify the graph norm estimates for $\Mper(k)$ in the proof of \cite[Proposition~3.3~(i)]{DeNittis_Lein:adiabatic_periodic_Maxwell_PsiDO:2013}. 
		
		We equip $\domainT$ with the graph norm of $\Mper(0)$ and use the upper bound of the above inequality to deduce 
		\begin{align*}
			&\Bnorm{\Mper(0) \, \bigl ( \Msymb_0(r,k) - z \bigr )^{-1}}_{\mathcal{B}(\HperT)} 
			\leq 
			\\
			&\qquad 
			\leq C \, \bigl ( 1 + \sabs{k} \bigr ) \; \sup_{(r,k) \in \mathcal{V}} \left ( \frac{\sabs{z}}{\mathrm{dist} \bigl ( \sigma \bigl ( \Msymb_0(r,k) \bigr ) , z \bigr )} + 1 \right ) 
			. 
		\end{align*}
		Estimates for higher-order derivatives now follow from writing derivatives of the resolvents as products of resolvents and derivatives of $\Msymb_0$. Hence, we obtain the second estimate. This concludes the proof. 
	\end{enumerate}
\end{proof}
\subsubsection*{Proof of Proposition~\ref{sapt:prop:projection}} 
\label{technical:sapt:projection}
%
	The proof is a straightforward modification of that of \cite[Lemma~5.17]{Teufel:adiabatic_perturbation_theory:2003}. 
	
	Without loss of generality, we may assume that $\Omega_{\mathrm{rel}}$ consists of a single, contiguous family of bands. For otherwise, we may repeat the subsequent arguments for each contiguous family and add up the resulting (almost-)projections afterwards.

	Let $(r_0,k_0) \in \R^6$ be arbitrary but fixed. Then the continuity of $\sigma \bigl ( \Msymb_0(r,k) \bigr ) = \sigma \bigl ( \tau(r) \, \Mper(k) \bigr )$ \cite[Theorem~1.4]{DeNittis_Lein:adiabatic_periodic_Maxwell_PsiDO:2013} and Gap Condition~\ref{intro:assumption:gap_condition} ensure the existence of an open neighborhood $\mathcal{V}$ of $(r_0,k_0)$ and a contour $\Lambda$ with the following properties: $\Lambda$ is a positively oriented circle that is symmetric with respect to reflections about the real axis and encloses only $\tau(r) \, \specrel(k) = \bigl \{ \tau(r) \, \omega_n(k) \; \vert \; \omega_n \in \Omega_{\mathrm{rel}} \bigr \}$ for all $(r,k) \in \mathcal{V}$. Moreover, we have the bounds 	
	\begin{align}
		\mathrm{Radius} (\Lambda) \leq C_r 
		, \qquad\quad \inf_{(r,k) \in \mathcal{V}} \mathrm{dist} \left ( \sigma \bigl ( \Msymb_0(r,k) \bigr ) , \Lambda \right ) \geq \frac{c_{\mathrm{g}}}{4} 
		, 
		\label{sapt:eqn:distance_contour_spectrum}
	\end{align}
	where $c_{\mathrm{g}}$ is the gap constant from the Gap Condition~\ref{intro:assumption:gap_condition}. Given that $\tau(r) \, \specrel(k)$ consists of eigenvalues $\omega_n(k)$ which are scaled by $\tau(r)$, and that $\tau$ is bounded away from $0$ and $+\infty$ (Assumption~\ref{intro:assumption:eps_mu}), the “width” of $\tau(r) \, \specrel(k)$ is bounded from above and thus, the constant $C_r \geq \mathrm{Radius}(\Lambda)$ can be chosen independently of $(r_0,k_0)$.
	
	By equivariance of $\Msymb_0$, 
	\begin{align*}
		\Msymb_0(r,k - \gamma^*) &= \e^{+ \ii \gamma^* \cdot \hat{y}} \, \Msymb_0(r,k) \, \e^{- \ii \gamma^* \cdot \hat{y}} 
		, 
		&&
		\forall (r,k) \in \R^6 
		, \; 
		\gamma^* \in \Gamma^* 
		, 
	\end{align*}
	we may choose $\mathcal{V}$ to be $\Gamma^*$-periodic, \ie if $(r,k) \in \mathcal{V}$, then also $(r,k-\gamma^*) \in \mathcal{V}$.
	
	Condition~\eqref{sapt:eqn:distance_contour_spectrum} on the contour $\Lambda$ also ensures that the local Moyal resolvent exists on $\mathcal{V}$ for all $z \in \Lambda$ (\cf Lemma~\ref{sapt:lem:existence_Moyal_resolvent}), and we define $\pi_{\lambda}(r,k)$ on $\mathcal{V}$ as a formal series $\sum_{n = 0}^{\infty} \lambda^n \, \pi_n$ where
	\begin{align}
		\pi_n(r,k) = \frac{\ii}{2\pi} \int_{\Lambda} \dd z \, \bigl ( R_n(z) \bigr )(r,k) 
		. 
		\label{sapt:eqn:pi_n_Cauchy}
	\end{align}
	$\bigl ( \pi_{\lambda} \Weyl \pi_{\lambda} \bigr )_n = \pi_n$ and $\pi_n^* = \pi_n$ follow from Lemma~\ref{sapt:lem:existence_Moyal_resolvent}~(ii) and our choice of contour. 
	
	Choosing an open covering of $\R^6$ so that on each neighborhood $\mathcal{V}$ we can choose a single, $(r,k)$-\emph{in}dependent contour $\Lambda$ with the aforementioned properties, we can construct the formal symbol $\pi_{\lambda}(r,k)$ for all $(r,k) \in \R^6$. On the overlaps, the locally constructed projections need to agree order-by-order. To show that the $\pi_n$ are in the correct symbol classes, we combine the local estimates of the $\norm{\cdot}_{\mathcal{B}(\HperT)}$- and $\norm{\cdot}_{\mathcal{B}(\HperT,\domainT)}$-norms, 
	\begin{align*}
		\bnorm{\partial_r^a \partial_k^b \pi_n(r,k)} \leq 2 \pi C_r \, \bnorm{\partial_r^a \partial_k^b \bigl ( R_n(z) \bigr )(r,k)} 
		, 
		&&
		a , b \in \N_0^3 
		, 
	\end{align*}
	with the local Moyal resolvent estimates from Lemma~\ref{sapt:lem:existence_Moyal_resolvent}~(iii). Hence, we also get 
	\begin{align*}
		\abs{z} \leq C_r + \sup_{r \in \R^3} \, \abs{\tau(r)} \; \; \sup_{k \in \BZ} \, \max \, \abs{\specrel(k)} < \infty 
		. 
	\end{align*}
	Lastly, all contours involved satisfy \eqref{sapt:eqn:distance_contour_spectrum}, and hence for all $n \in \N_0$ and $a , b \in \N_0^3$ we can find constants $C_{1} , C_{2} > 0$ so that 
	\begin{align*}
		\bnorm{\partial_r^a \partial_k^b \pi_n(r,k)}_{\mathcal{B}(\HperT)} &\leq C_1
		, 
		\qquad\quad
		\bnorm{\partial_r^a \partial_k^b \pi_n(r,k)}_{\mathcal{B}(\HperT,\domainT)} \leq C_2 \, \bigl ( 1 + \sabs{k} \bigr ) 
		, 
	\end{align*}
	hold for \emph{all} $(r,k) \in \R^6$. In other words, we have shown 
	\begin{align*}
		\pi_n \in S^0_{0,\mathrm{eq}} \bigl ( \mathcal{B}(\HperT) \bigr ) \cap S^1_{0,\mathrm{eq}} \bigl ( \mathcal{B}(\HperT , \domainT) \bigr ) 
		, 
	\end{align*}
	and there exists a resummation (using an appropriate choice of $N(\lambda)$)
	\begin{align*}
		\pi_{\lambda} = \sum_{n = 0}^{N(\lambda)} \lambda^n \, \pi_n 
		\in \SemiHoermeq{0}{0} \bigl ( \mathcal{B}(\HperT) \bigr ) \cap \SemiHoermeq{1}{0} \bigl ( \mathcal{B}(\HperT , \domainT) \bigr )
		.
	\end{align*}
 Clearly, by construction $\pi_{\lambda}$ is an orthogonal Moyal projection up to $\order(\lambda^{\infty})$
	\begin{align*}
		\pi_{\lambda}^* &= \pi_{\lambda} 
		, 
		\\
		\pi_{\lambda} \Weyl \pi_{\lambda} &= \pi_{\lambda} + \order(\lambda^{\infty})
		. 
	\end{align*}
	Moreover, as the eigenfunctions $\varphi_n(k)$ of $\Mper(k)$ are also the eigenfunctions of the principal symbol $\Msymb_0(r,k)$, we deduce that in fact, 
	\begin{align*}
		\pi_0(r,k) = \sum_{n \in \Index} \sopro{\varphi_n(k)}{\varphi_n(k)} 
	\end{align*}
	coincides with the projection constructed in Lemma~\ref{setup:lem:existence_pi_0} and is thus independent of $r$. 
	
	$\pi_{\lambda} \in \SemiHoermeq{0}{0} \bigl ( \mathcal{B}(\HperT) \bigr )$ means the Caldéron-Vaillancourt theorem applies and hence, the $\Psi$DO $\Op_{\lambda}(\pi_{\lambda})$ defines a bounded operator on $\Zak \Hper$ \cite[Theorem~B.5]{Teufel:adiabatic_perturbation_theory:2003} which is also selfadjoint if we use the scalar product $\scpro{\cdot \,}{\cdot}_{\Zak \Hper}$. Thus, we can use functional calculus to define the spectral projection 
	\begin{align*}
		\Pi_{\lambda} := \frac{\ii}{2\pi} \int_{\sabs{z - \nicefrac{1}{2}} = 1} \dd z \, \bigl ( \Op_{\lambda}(\pi_{\lambda}) - z \bigr )^{-1} 
		. 
	\end{align*}
	%
%

\subsubsection*{Proof of Proposition~\ref{sapt:prop:unitary}} 
\label{technical:sapt:unitary}
We modify the arguments in the proof of \cite[Proposition~5.18]{Teufel:adiabatic_perturbation_theory:2003}. 

On the level of symbols, if $u_{\lambda}$ exists, it needs to have the following defining properties: 
\begin{align}
	u_{\lambda} \Weyl u_{\lambda}^* &= \id_{\HperT} + \order(\lambda^{\infty}) 
	, 
	\qquad \qquad 
	u_{\lambda} \Weyl u_{\lambda}^* = \id_{\HperT} + \order(\lambda^{\infty}) 
	\notag 
	\\
	u_{\lambda} \Weyl \pi_{\lambda} \Weyl u_{\lambda} &= \piref + \order(\lambda^{\infty})
	\label{sapt:eqn:defining_relations_u}
\end{align}
We will construct $u_{\lambda} \asymp \sum_{n = 0}^{\infty} \lambda^n \, u_n$ order-by-order using recursion relations. To be able to find an appropriate $u_0$, we rely on the assumption that the Bloch bundle $\blochb(\pi_0)$ is trivial. Since in this case topological and analytical triviality are equivalent by the Oka principle, we can find a family of $\scpro{\cdot \,}{\cdot}_{\HperT}$-orthogonal vectors $\{ \psi_j(k) \}_{j = 1}^{\abs{\Index}}$ so that 
\begin{align*}
	\mathrm{span} \{ \psi_j(k) \}_{j = 1}^{\abs{\Index}} = \ran \pi_0(k) 
	, 
\end{align*}
and each of the $k \mapsto \psi_j(k)$ is analytic and can be extended to a left-covariant map on $\R^3$. 
Moreover, we can arbitrarily complete 
\begin{align*}
	u_0(r,k) := \sum_{j = 1}^{\abs{\Index}} \sopro{\chi_j}{\psi_j(k)} + u_0^{\perp}(k)
\end{align*}
by $u_0^{\perp}$ to a right-covariant unitary on $\HperT$ which is also an element of $S^0_0 \bigl ( \mathcal{B}(\HperT) \bigr )$ and depends trivially on $r$: Kuiper's theorem \cite[Corollary~(1)]{Kuiper:homotopy_type_unitary_group:1965} ensures that the principal bundle associated to $\blochb(\pi_0^{\perp})$ is \emph{topologically} trivial. But since topological and analytic triviality are one and the same in this context \cite[Satz I, p.~268]{Grauert:analytische_Faserungen:1958}, and triviality of a principal bundle means the existence of a section, we can find an \emph{analytic}, right-covariant map $k \mapsto u_0^{\perp}(k)$ with the desired properties.
Moreover, by its very definition $u_0$ is a Moyal unitary, 
\begin{align*}
	u_0 \Weyl u_0^* &= u_0 \, u_0^* = \id_{\HperT} 
	,
	\qquad \qquad 
	u_0^* \Weyl u_0 = u_0^* \, u_0 = \id_{\HperT} 
	, 
\end{align*}
which intertwines $\pi_{\lambda}$ with $\piref := u_0(k) \, \pi_0(k) \, u_0^*(k) = \sum_{j = 1}^{\abs{\Index}} \sopro{\chi_j}{\chi_j}$ to first order, 
\begin{align*}
	u_0 \Weyl \pi_{\lambda} \Weyl u_0^* &= u_0 \, \pi_0 \, u_0^* + \order(\lambda) 
	= \piref + \order(\lambda) 
	. 
\end{align*}
Now assume we have found $u^{(N)} := \sum_{n = 0}^N \lambda^n \, u_n$ which satisfies 
\begin{align}
	u^{(N)} \Weyl {u^{(N)}}^* - \id_{\HperT} &=: \lambda^{N+1} \, A_{N+1} + \order(\lambda^{N+2}) 
	, 
	\label{sapt:eqn:amlost_unitarity}
	\\
	{u^{(N)}}^* \Weyl u^{(N)} - \id_{\HperT} &= \order(\lambda^{N+1}) 
	, 
	\notag 
	\\
	u^{(N)} \Weyl \pi_{\lambda} \Weyl {u^{(N)}}^* - \piref &=: \lambda^{N+1} \, B_{N+1} + \order(\lambda^{N+2}) 
	, 
	\label{sapt:eqn:amlost_intertwining}
\end{align}
where the right-hand sides are elements of $S^0_{0,\mathrm{per}} \bigl ( \mathcal{B}(\HperT) \bigr )$. Then a short explicit computation yields that 
\begin{align*}
	u^{(N+1)} := u^{(N)} + \lambda^{N+1} \, \left ( - \tfrac{1}{2} A_{N+1} + [\piref , B_{N+1}] \right ) \, u_0 
\end{align*}
is a right-covariant $S^0_0 \bigl ( \mathcal{B}(\HperT) \bigr )$ symbol which satisfies \eqref{sapt:eqn:amlost_unitarity} and \eqref{sapt:eqn:amlost_intertwining} up to $\order(\lambda^{N+2})$. 

Let us denote a resummation $u_{\lambda} \in \SemiHoerm{0}{0} \bigl ( \mathcal{B}(\HperT) \bigr )$ of $\sum_{n = 0}^{\infty} \lambda^n \, u_n$ with the same letter. We can now repeat the arguments outlined in Step~II of the proof of \cite[Theorem~3.12]{Teufel:adiabatic_perturbation_theory:2003} to construct a true unitary $U_{\lambda} = \Op_{\lambda}(u_{\lambda}) + \order_{\norm{\cdot}}(\lambda^{\infty})$ from the almost-unitary $\Op_{\lambda}(u_{\lambda})$ via the Nagy formula. This concludes the proof. 

\subsubsection*{Proof of Proposition~\ref{sapt:prop:effective_Maxwellian}} 
\label{technical:sapt:eff_Maxwell}
The strategy of this proof follows that of \cite[Proposition~5.19]{Teufel:adiabatic_perturbation_theory:2003}. The assumptions in the claim assure that $\Pi_{\lambda} = \Op_{\lambda}(\pi_{\lambda}) + \order_{\norm{\cdot}}(\lambda^{\infty})$ and $U_{\lambda} = \Op_{\lambda}(u_{\lambda}) + \order_{\norm{\cdot}}(\lambda^{\infty})$ exist.

Since all we are interested in are resummations of $\Msymb_{\eff}$, we can reshuffle the order of the projections and the unitaries using the intertwining relation~\eqref{sapt:eqn:defining_relations_u}, 
\begin{align*}
	\Msymb_{\eff} = \piref \, u_{\lambda} \Weyl \Msymb_{\lambda} \Weyl u_{\lambda}^* \, \piref
	=\, u_{\lambda} \Weyl \pi_{\lambda} \Weyl \Msymb_{\lambda} \Weyl \pi_{\lambda} \Weyl u_{\lambda}^* 
	. 
\end{align*}
At first glance, we can only deduce $\Msymb_{\eff} \in \Hoermper{2}{0} \bigl ( \mathcal{B}(\HperT) \bigr )$ from the Moyal composition property of Hörmander symbols. However, a more careful look also confirms $\Msymb_{\eff \, n} \in\Hoermper{0}{0} \bigl ( \mathcal{B}(\HperT) \bigr )$: first of all, $\Msymb_1$ can also be seen as an element of $\Hoermeq{0}{0} \bigl ( \mathcal{B}(\HperT) \bigr )$, and so all terms stemming from the asymptotic expansion of $u_{\lambda} \Weyl \pi_{\lambda} \Weyl \Msymb_1 \Weyl \pi_{\lambda} \Weyl u_{\lambda}^*$ are elements of $\Hoermper{0}{0} \bigl ( \mathcal{B}(\HperT) \bigr )$. Moreover, to estimate the terms involving $\Msymb_0 \Weyl \pi_{\lambda}$, we can locally write $\pi_{\lambda}$ as Cauchy integral with respect to the local Moyal resolvent and use Lemma~\ref{sapt:lem:existence_Moyal_resolvent}~(iii) to estimate the norms of the derivatives, 
\begin{align*}
	\Bnorm{\partial_r^a \partial_k^b \bigl ( \Msymb_0 \Weyl R_{\lambda}(z) \bigr )(r,k)}_{\mathcal{B}(\HperT)} \leq C_{a b} 
	, 
	&& 
	a , b \in \N_0^3 
	. 
\end{align*}
The arguments in the proof of Proposition~\ref{sapt:prop:projection} which establish $\pi_{\lambda} \in \SemiHoermeq{1}{0} \bigl ( \mathcal{B}(\HperT,\domainT) \bigr )$ ensure we can choose $C_{a b}$ independently of $z$ and $(r,k)$. Thus, the product $\Msymb_0 \Weyl \pi_{\lambda} \in \SemiHoermeq{0}{0} \bigl ( \mathcal{B}(\HperT) \bigr )$ is in the correct symbol class and any resummation of the formal symbol $\Msymb_{\eff}$ in $\SemiHoermper{0}{0} \bigl ( \mathcal{B}(\HperT) \bigr )$ will define a \emph{bounded} $\Psi$DO \cite[Proposition~B.5]{Teufel:adiabatic_perturbation_theory:2003}. 

Equation~\eqref{sapt:eqn:effective_dynamics_approximate} is established via a standard Duhamel argument where the crucial ingredient is 
\begin{align*}
	\left ( M_{\lambda} - U_{\lambda}^{-1} \, \Op_{\lambda}(\Msymb_{\eff}) \, U_{\lambda} \right ) \Pi_{\lambda} = \order_{\norm{\cdot}}(\lambda^{\infty})
	. 
\end{align*}
%

\subsection{Auxiliary results for Section~\ref{physical}} 
\label{technical:physical}
\begin{proof}[Lemma~\ref{real_eff:lem:complex_conjugation_PsiDO}]
	From the (formal) definition of Weyl quantization (\cf Section~\ref{setup:PsiDOs}), 
	we deduce that 
	\begin{align*}
		\bigl ( C^{\Zak} \, \Op(f) \, &C^{\Zak} \varphi \bigr )(k) = \overline{\bigl ( \Op(f) \, C^{\Zak} \varphi \bigr )(-k)}
		\\
		&= \frac{1}{(2\pi)^d} \int_{\R^d} \dd k' \int_{\R^d} \dd r' \, \e^{+ \ii (k' + k) \cdot r'} \, C \, f \bigl ( \eps r' , \tfrac{1}{2} (k' - k) \bigr ) \, C \varphi(-k') 
		\\
		&= \frac{1}{(2\pi)^d} \int_{\R^d} \dd k'' \int_{\R^d} \dd r' \, \e^{- \ii (k'' - k) \cdot r'} \, (\mathfrak{c} f) \bigl ( \eps r , \tfrac{1}{2}(k'' + k) \bigr ) \, \varphi(k'')
		\\
		&= \bigl ( \Op(\mathfrak{c} f) \varphi \bigr )(k) 
		. 
	\end{align*}
\end{proof}
\begin{lemma}\label{technical:lem:c_pi_pi}
	Assume $(\eps_{\lambda},\mu_{\lambda})$ are real. Then $\specrel(-k) = - \specrel(k)$ implies $\mathfrak{c} \pi_{\lambda} = \pi_{\lambda}$. 
\end{lemma}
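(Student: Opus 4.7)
The plan is to exploit the uniqueness of the SAPT construction of $\pi_\lambda$ established in Proposition~\ref{sapt:prop:projection}. First I check that the anti-linear involution $\mathfrak{c}$ is compatible with the algebra of symbols appearing in that construction. Its multiplicativity on the Moyal product, $\mathfrak{c}(f \Weyl g) = (\mathfrak{c} f) \Weyl (\mathfrak{c} g)$, follows immediately by quantizing both sides via Lemma~\ref{real_eff:lem:complex_conjugation_PsiDO} and using that conjugation by $C^{\Zak}$ is an algebra automorphism. Anti-unitarity of $C$ on $\HperT$, which requires precisely the reality of $(\eps,\mu)$, yields the further relations $(\mathfrak{c} f)^* = \mathfrak{c}(f^*)$ and $\mathfrak{c}(\alpha f) = \bar\alpha \, \mathfrak{c} f$; a direct check confirms that $\mathfrak{c}$ also preserves equivariance and the relevant Hörmander classes.

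Second, I compute the action of $\mathfrak{c}$ on the two basic data. From the explicit expression~\eqref{setup:eqn:M_lambda_PsiDO}, combined with $C \, \Mper(-k) \, C = -\Mper(k)$ (equation~\eqref{setup:eqn:complex_symmetry_fiber_Maxwell}) and the observation that the subprincipal symbol is $\tau \, W$ times an $\ii$-prefactored real matrix, I read off $\mathfrak{c} \Msymb_\lambda = -\Msymb_\lambda$. For the principal order of the projection, the fibered representation $\pi_0(k) = \sum_{n \in \Index} \sopro{\varphi_n(k)}{\varphi_n(k)}$ together with the twin-band symmetry~\eqref{setup:eqn:symmetric_twin_Bloch_function} shows that $C \, \pi_0(-k) \, C$ is the spectral projection of $\Mper(k)$ onto $-\specrel(-k)$; by the hypothesis $\specrel(-k) = -\specrel(k)$ this equals $\pi_0(k)$, hence $\mathfrak{c} \pi_0 = \pi_0$.

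With these pieces in hand, set $\tilde\pi_\lambda := \mathfrak{c} \pi_\lambda$. The multiplicativity of $\mathfrak{c}$ on $\Weyl$ combined with $\mathfrak{c} \Msymb_\lambda = -\Msymb_\lambda$ give
\begin{align*}
\tilde\pi_\lambda \Weyl \tilde\pi_\lambda &= \mathfrak{c}(\pi_\lambda \Weyl \pi_\lambda) = \tilde\pi_\lambda + \order(\lambda^{\infty}), \\
\bigl [ \tilde\pi_\lambda , \Msymb_\lambda \bigr ]_{\Weyl} &= -\mathfrak{c} \bigl [ \pi_\lambda , \Msymb_\lambda \bigr ]_{\Weyl} = \order(\lambda^{\infty}),
\end{align*}
and by the $\ast$-compatibility of $\mathfrak{c}$ also $\tilde\pi_\lambda^* = \tilde\pi_\lambda$; moreover $\tilde\pi_\lambda$ has the correct principal symbol $\pi_0$. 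These are exactly the defining properties of the SAPT projection, and the order-by-order recursion implicit in the proof of Proposition~\ref{sapt:prop:projection} determines such a symbol uniquely (the diagonal block at each order is fixed by idempotence, the off-diagonal block by commutation with $\Msymb_\lambda$). Therefore $\tilde\pi_\lambda = \pi_\lambda$ order by order, which is the claim.

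The main technical care lies in the sign bookkeeping: because $\mathfrak{c}$ is anti-linear and $\mathfrak{c} \Msymb_\lambda$ itself picks up a sign, one must verify that these two effects compensate in the commutator to produce a genuine, untwisted almost-commutation for $\tilde\pi_\lambda$, rather than anti-commutation. Once this is tracked cleanly, the conclusion is a formal consequence of the uniqueness already built into the SAPT machinery, with no additional contour or resolvent estimates required.
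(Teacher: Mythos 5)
Your proof is correct, and it takes a genuinely different route from the one in the paper.

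The paper's argument proceeds explicitly through the Cauchy-integral representation $\pi_{\alpha \, \lambda}(r,k) \asymp \tfrac{\ii}{2\pi} \int_{\Lambda_{\alpha}} R_{\lambda}(z)$ for each contiguous family $\Omega_{\alpha}$: it establishes the resolvent symmetry $C \, R_{\lambda}(z)(r,-k) \, C = - R_{\lambda}(-\bar z)(r,k)$, performs a change of contour $\Lambda_{\alpha} \mapsto -\Lambda_{\alpha}$, and reads off $\mathfrak{c}\pi_{\alpha\,\lambda} = \pi_{\beta\,\lambda}$ for the twin family $\Omega_{\beta}$ with $\sigma_{\alpha}(-k) = -\sigma_{\beta}(k)$, then sums over $\alpha$. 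Your approach sidesteps the contour computation entirely: you observe that $\mathfrak{c}$ is an (anti-linear) $*$-automorphism of the Moyal algebra (established cleanly by quantizing via Lemma~\ref{real_eff:lem:complex_conjugation_PsiDO}), verify that $\tilde\pi_{\lambda} := \mathfrak{c}\pi_{\lambda}$ satisfies the same four defining relations — Moyal idempotence, Moyal selfadjointness, $\order(\lambda^{\infty})$-commutation with $\Msymb_{\lambda}$, and $\tilde\pi_{\lambda} = \pi_0 + \order(\lambda)$ — and then appeal to the order-by-order uniqueness of the superadiabatic projection. The sign bookkeeping you flag is the heart of it, and it does work out: the two sign flips in $[\tilde\pi_{\lambda},\Msymb_{\lambda}]_{\Weyl} = -\mathfrak{c}[\pi_{\lambda},\Msymb_{\lambda}]_{\Weyl}$ cancel. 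What each approach buys: yours is more structural and shorter once $\mathfrak{c}$ is known to be a $*$-automorphism of the Moyal algebra, but it leans on the uniqueness of the SAPT construction, which the paper uses implicitly (via the Cauchy-integral formula) rather than stating as a separate fact; the order-$n$ recursion fixes the $\pi_0$-diagonal block by idempotence and the off-diagonal block by commutation, and solvability of the resulting Sylvester equation is exactly the gap condition, so the uniqueness is indeed available. The paper's contour argument, on the other hand, yields the finer twin-band statement $\mathfrak{c}\pi_{\alpha\,\lambda} = \pi_{\beta\,\lambda}$ as a byproduct, which is reused later (cf.\ equation~\eqref{twin_bands:eqn:projection_pm_relation}); your argument proves the lemma as stated but does not directly produce that extra structure.
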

\begin{proof}
	The symbol of the Maxwell operator has the same symmetry as its quantization, 
	\begin{align}
		\bigl ( \mathfrak{c} \Msymb_{\lambda} \bigr )(r,k) &= - \Msymb_{\lambda}(r,k) 
		. 
		\label{technical:eqn:conjugation_Maxwell_symbol}
	\end{align}
	Consequently, the local Moyal resolvent $R_{\lambda}(z)$ satisfies 
	\begin{align}
		C \, \bigl ( R_{\lambda}(z) \bigr )(r,-k) \, C = - \bigl ( R_{\lambda}(-\bar{z}) \bigr )(r,k)
	\label{real_eff:eqn:complex_conjugate_Moyal_resolvent}
	\end{align}
	Let $\Omega_{\mathrm{rel}} = \bigcup_{\alpha = 1}^N \Omega_{\alpha}$ where the $\Omega_{\alpha} = \{ \omega_n \}_{n\in\Index_{\alpha}}$ are the families of contiguous bands from Gap Condition~\ref{intro:assumption:gap_condition}. Define $\sigma_{\alpha}(k) = \{\omega_n(k)\}_{n \in \Index_{\alpha}}$ to be the pointwise spectrum associated to $\Omega_{\alpha}$. Due to the Gap Condition we can invoke Proposition~\ref{sapt:prop:projection} for each $\alpha$ separately to obtain 
	\begin{align*}
		\pi_{\alpha \, \lambda}(r,k) \asymp \frac{\ii}{2 \pi} \int_{\Lambda_{\alpha}} \bigl ( R_{\lambda}(z) \bigr )(r,k)
	\end{align*}
	locally in an open neighborhood of each point $(r_0,k_0)$ as a Cauchy integral with respect to the local Moyal resolvent from Lemma~\ref{sapt:lem:existence_Moyal_resolvent}.
	There are two cases we need to consider: either $\sigma_{\alpha}(-k) = - \sigma_{\alpha}(k)$ ($\sigma_{\alpha}$ includes the ground state bands) or $\sigma_{\alpha}(-k) = - \sigma_{\beta}(-k)$ for some $\beta \neq \alpha$. 
	Since we have excluded ground state bands (Gap Condition~\ref{intro:assumption:gap_condition}~(iii)), we only need to consider the second case. 
	
	So let $\alpha$ and $\beta \neq \alpha$ be the indices such that $\sigma_{\alpha}(-k) = - \sigma_{\beta}(k)$. 
	Now if $\Lambda_{\alpha}$ is the (locally fixed) contour which encloses only $\tau(r) \, \sigma_{\alpha}(-k)$ for all points $(r,k)$ in an open neighborhood of $(r_0,-k_0)$, then $- \Lambda_{\alpha}$ is the (locally fixed) contour for all points $(r,k)$ in an open neighborhood of $(r_0,k_0)$ which encloses only $\tau(r) \, \sigma_{\beta}(k)$. 
	Without loss of generality, we may assume $\Lambda_{\alpha}$ has the symmetry properties enumerated in the proof of Proposition~\ref{sapt:prop:projection}. 
	Then using \eqref{technical:eqn:conjugation_Maxwell_symbol}, we deduce 
	\begin{align*}
		\bigl ( \mathfrak{c} \pi_{\alpha \, \lambda} \bigr )(r,k) &= - \frac{\ii}{2\pi} \int_{-\Lambda_{\alpha}} \dd \bar{z} \, C \, \bigl ( R_{\lambda}(z) \bigr )(r,-k) \, C 
		\\
		&= + \frac{\ii}{2\pi} \int_{-\Lambda_{\alpha}} \dd \bar{z} \, \bigl ( R_{\lambda}(-\bar{z}) \bigr )(r,k) 
		\\
		&= -\frac{\ii}{2\pi} \int_{ \Lambda_{\alpha}} \dd \bar{z} \, \bigl ( R_{\lambda}(\bar{z}) \bigr )(r,k) \\
		&= \pi_{\beta \, \lambda}(r,k) .
	\end{align*}
	Once we sum over $\alpha$, we finish the proof of the claim. 
\end{proof}
%

\subsection{Auxiliary results for Section~\ref{twin_bands}} 
\label{technical:ray_optics}
\begin{proof}[Corollary~\ref{real_eff:cor:simplest_effective_Maxwellian}]
	The assumptions of the Corollary include those of Theorem~\ref{real_eff:thm:eff_edyn}, and hence we can use equation~\eqref{sapt:eqn:Meff_physical_rep} to compute the symbol of the effective Maxwellian. Moreover, the assumption that all the single band Bloch bundles $\blochb \bigl ( \sopro{\varphi_n}{\varphi_n} \bigr )$ are trivial means the Bloch functions themselves are an analytic frame of the Bloch bundle $\blochb(\pi_0)$. Thus, we can choose 
	\begin{align*}
		u_0(k) &= \sum_{n \in \Index} \sopro{\chi_n}{\varphi_n(k)} + u_0^{\perp}(k) 
	\end{align*}
	as a principal symbol for the Moyal unitary where $u_0^{\perp}$ is the symbol constructed in the proof of Proposition~\ref{sapt:prop:unitary}. 
	
	Then, we obtain the leading-order term of $\Msymb_{\eff}$ by replacing $\Weyl$ with the pointwise product and using the explicit expression for $\Msymb_{\lambda}$ given in \eqref{setup:eqn:M_lambda_PsiDO}, 
	\begin{align*}
		\Msymb_{\eff,0} &= \piref \, u_0 \, \Msymb_0 \, u_0^* \, \piref 
		= \sum_{n \in \Index} \tau \, \omega_n \, \sopro{\chi_n}{\chi_n}
		, 
	\end{align*}
	thereby confirming \eqref{intro:eqn:Meff_0_simple}. To get a handle on the subprincipal symbol, we start with \cite[equation~(3.35)]{Teufel:adiabatic_perturbation_theory:2003}, introduce $\widetilde{\Msymb}_0 := u_0 \, \Msymb_0 \, u_0^*$ and group the terms: 
	\begin{align*}
		\Msymb_{\eff,1} &= \tfrac{1}{\lambda} \, \piref \, \Bigl ( u_{\lambda} \Weyl \Msymb_{\lambda} - \widetilde{\Msymb}_0 \Weyl u_{\lambda} \Bigr ) \, u_0^* \, \piref + \order(\lambda) 
		\\
		&= \piref \, \bigl [ u_1 \, u_0^* , \widetilde{\Msymb}_0 \bigr ] \, \piref + \piref \, u_0 \, \Msymb_1 \, u_0^* \, \piref 
		\, + \\
		&\quad 
		+ \piref \, \tfrac{\ii}{2} \, \Bigl ( \bigl \{ \widetilde{\Msymb}_0 , u_0 \bigr \} - \bigl \{ u_0 , \Msymb_0 \bigr \} \Bigr ) \, u_0^* \, \piref + \order(\lambda) 
	\end{align*}
	By construction (\cf proof of Proposition~\ref{sapt:prop:unitary}), the subprincipal symbol
	\begin{align*}
		u_1 = \bigl ( - \tfrac{1}{2} A_1 + [\piref , B_1] \bigr ) \, u_0
	\end{align*}
	is written as the sum of two terms where 
	\begin{align*}
		\lambda \, A_1 + \order(\lambda^2) = u_0 \Weyl u_0^* - \id_{\HperT} = 0 
	\end{align*}
	is the unitarity defect and $B_1$ the intertwining defect given by equation~\eqref{sapt:eqn:amlost_intertwining}. Hence, the first term vanishes, 
	\begin{align*}
		\piref \, \bigl [ u_1 \, u_0^* \, , \, \widetilde{\Msymb}_0 \bigr ] \, \piref &= \Bigl [ \piref \, [\piref , B_1] \, \piref \, , \, \piref \, \widetilde{\Msymb}_0 \, \piref \Bigr ] 
		= 0 
		. 
	\end{align*}
	The second term involves the scalar product  
	\begin{align*}
		\bscpro{\varphi_n}{\Msymb_1 \varphi_j}_{\HperT}
		&= - \tau \, \frac{\ii}{2} \scpro{\left (
		\begin{matrix}
			\varphi_n^E \\
			\varphi_n^H \\
		\end{matrix}
		\right )}{\left (
		\begin{matrix}
			0 & \bigl ( \nabla_r \ln \frac{\tau_{\eps}}{\tau_{\mu}} \bigr )^{\times} \\
			\bigl ( \nabla_r \ln \frac{\tau_{\eps}}{\tau_{\mu}} \bigr )^{\times} & 0 
		\end{matrix}
		\right ) \left (
		\begin{matrix}
			\varphi_j^E \\
			\varphi_j^H \\
		\end{matrix}
		\right )}_{L^2(\T^3,\C^6)} 
		\\
		&= + \tau \, \frac{\ii}{2} \sum_{n , j \in \mathcal{I}} 
		\nabla_r \ln \tfrac{\tau_{\eps}}{\tau_{\mu}} \cdot 
		\bigl ( \Poynt_{nj} - \overline{\Poynt_{jn}} \bigr ) 
		, 
	\end{align*}
	and the Poynting tensor $\Poynt_{nj}$ as given by \eqref{intro:eqn:Poynting_vector}, and hence, we obtain an expression for the second term, 
	\begin{align*}
		\piref \, u_0 \, \Msymb_1 \, u_0^* \, \piref = \ii \, \frac{\tau}{2} \sum_{n , j \in \mathcal{I}} 
		\nabla_r \ln \tfrac{\tau_{\eps}}{\tau_{\mu}} \cdot 
		\bigl ( \Poynt_{nj} - \overline{\Poynt_{jn}} \bigr ) 
		\, \sopro{\chi_n}{\chi_j}
		. 
	\end{align*}
	The computation of the last term simplifies because $u_0$ depends only on $k$ and the $r$ dependence of $\Msymb_0$ and $\widetilde{\Msymb}_0$ lies with the scalar factor $\tau$: 
	\begin{align*}
		\frac{\ii}{2} \, \piref \, \Bigl ( \bigl \{ &\widetilde{\Msymb}_0 , u_0 \bigr \} - \bigl \{ u_0 , \Msymb_0 \bigr \} \Bigr ) \, u_0^* \, \piref 
		= 
		\\
		&
		= - \frac{\ii}{2} \, \sum_{l = 1}^3 \partial_{r_l} \tau \; \piref \, \Bigl [ \partial_{k_l} u_0 \, u_0^* \; , \, u_0 \, \Mper(\, \cdot \,) \, u_0^* \Bigr ]_+ \, \piref
		\\
		&= - \sum_{l = 1}^3 \sum_{n , j \in \Index} \partial_{r_l} \tau \; \tfrac{1}{2} ( \omega_n + \omega_j ) \; \ii \bscpro{\partial_{k_l} \varphi_n}{\varphi_j}_{\HperT} \, \sopro{\chi_n}{\chi_j}
		\\
		&= + \sum_{n , j \in \Index} \tfrac{1}{2} ( \omega_n + \omega_j ) \, \nabla_r \tau \cdot \mathcal{A}_{nj} \, \sopro{\chi_n}{\chi_j}
	\end{align*}
	Putting all these terms together yields equation~\eqref{intro:eqn:Meff_1_simple}. 
\end{proof}
%

\printbibliography

\end{document}